\def\dOi{9(3:25)2013}
\newcounter{tbsnr}
\newenvironment{tbs}
{\addtocounter{tbsnr}{1}\par\bigskip \noindent\fbox{\thetbsnr}
\hspace{2mm}\begin{minipage}{.9\linewidth}\tt \small}
{\end{minipage}\hspace*{\fill}\bigskip}
\newtheorem{claim}[thm]{Claim}
\newcommand\mucalc{\ensuremath{\mu\text{-calculus}}\xspace}
\newcommand\ml{\ensuremath{\textup{ML}}\xspace}
\newcommand\fo{\ensuremath{\textup{FO}}\xspace}
\newcommand\mso{\ensuremath{\textup{MSO}}\xspace}
\newcommand\fp{\ensuremath{\textup{LFP}}\xspace}
\newcommand\gfo{\ensuremath{\textup{GFO}}\xspace}
\newcommand\gfp{\ensuremath{\textup{GFP}}\xspace}
\newcommand\unfo{\textup{UNFO}\xspace}
\newcommand\unfp{\textup{UNFP}\xspace}
\newcommand\ctlx{\ensuremath{\textup{CTL}^*(\textrm{\sf X})}\xspace}
\newcommand{\unfolet}{\text{\rm$\unfo(\mathsf{let})$}\xspace} 
\newcommand{\sunfp}{\textup{sUNFP}\xspace}
\newcommand{\sunfo}{\textup{sUNFO}\xspace}
\newcommand\inc{\textit{inc}}
\newcommand{\subf}{\textsc{subf}\xspace}
\newcommand{\types}{\textsc{ntypes}\xspace}
\newcommand\jump{\textup{\bf S}}
\newcommand\ptime{\textup{PTime}\xspace}
\newcommand\exptime{\textup{ExpTime}\xspace}
\newcommand\twoexptime{2{\textup{ExpTime}}\xspace}
\newcommand\np{\textup{NP}\xspace}
\newcommand\conp{\textup{coNP}\xspace}
\newcommand\pnplog{\ensuremath{\textup{P}^{\textup{NP}[O(\log n)]}}\xspace}
\newcommand\pnppar{\ensuremath{\textup{P}^{\textup{NP}}_{||}\xspace}}
\newcommand\pnplogsq{\ensuremath{\textup{P}^{\textup{NP}[O(\log^2 n)]}}\xspace}
\newcommand\pnplogi{\ensuremath{\textup{P}^{\textup{NP}[O(\log^i n)]}}\xspace}
\newcommand\pnp{\ensuremath{\textup{P}^{\textup{NP}}}\xspace}
\newcommand{\npnp}{\ensuremath{\textup{NP}^{\textup{NP}}}\xspace}
\newcommand{\conpnp}{\ensuremath{\textup{coNP}^{\textup{NP}}}\xspace}
\newcommand{\lexsat}{\textup{LEX(SAT)}\xspace}
\newcommand{\lexisat}{\ensuremath{\textup{LEX}_i\textup{(SAT)}}\xspace}
\newcommand\set[1]{\ensuremath{\{#1\}}\xspace}
\newcommand\tuple[1]{\ensuremath{\overline{#1}}\xspace}
\newcommand{\bisim}{\ensuremath{\approx}}
\newcommand{\unbisim}{\ensuremath{\approx_{\text{UN}}}}
\newcommand{\unkbisim}{\ensuremath{\approx_{\text{UN}^k}}}
\newcommand{\restrict}{\ensuremath{|}\xspace}
\newcommand\dom{\text{dom}}
\newtheorem{theorem}{Theorem}[section]
\newtheorem{example}[theorem]{Example}
\newtheorem{lemma}[theorem]{Lemma}
\newtheorem{definition}[theorem]{Definition}
\newtheorem{remark}[theorem]{Remark}
\begin{document}
\title[Unary Negation]{Unary negation}

\author[B.~ten Cate]{Balder ten Cate\rsuper a}	
\address{{\lsuper a}UC Santa Cruz}	
\email{btencate@ucsc.edu}  
\thanks{{\lsuper a}Balder ten Cate has been funded partially by the ERC grant Webdam, agreement 226513, and partially
by the NSF grants IIS-0905276 and IIS-1217869.}

\author[L.~Segoufin]{Luc Segoufin\rsuper b}	
\address{{\lsuper b}INRIA and ENS Cachan, LSV}	
\email{luc.segoufin@inria.fr}  



\keywords{First-Order Logic, Fixpoint Logic, Decidable Fragments,
  Satisfiability, Model Checking, Craig Interpolation}
\subjclass{F.4.1 [Mathematical Logic and Formal Languages]:
  Mathematical Logic, H.2.3 [Database Management]: Languages}
\ACMCCS{[{\bf Theory of computation}]: Logic; Formal languages and automata theory; Theory and algorithms for application domains---Database theory}

\begin{abstract}
  We study fragments of first-order logic and of least fixed point
  logic that allow only unary negation: negation of formulas with at
  most one free variable. These logics generalize many interesting
  known formalisms, including modal logic and the $\mu$-calculus, as well
  as conjunctive queries and monadic Datalog. We show that
  satisfiability and finite satisfiability are decidable for both
  fragments, and we pinpoint the complexity of satisfiability, finite
  satisfiability, and model checking. We also show that the
  unary negation fragment of first-order logic is model-theoretically
  very well behaved. In particular, it enjoys Craig Interpolation and
  the Projective Beth Property.
\end{abstract}
\maketitle

\section{Introduction}

Vardi~\cite{Vardi96} raised the question ``why is modal logic so robustly
decidable?''.  His explanation centers around the fact that modal
logic has the tree-model property. More precisely, modal logic enjoys a combination of
three properties, namely (i) the \emph{tree-model property} (if a sentence has
a model, it has a model that is a tree), (ii) \emph{translatability into tree
  automata} (each formula can be transformed into a tree automaton, or
equivalently, an MSO formula, recognizing
its tree models), and (iii) the \emph{finite
  model property} (if a formula  has a model, it also has a finite model).  These
three properties form a powerful explanation for the decidability of the
satisfiability problem, and the finite satisfiability problem, for modal logic and many of its extensions such
as the modal \mucalc.  The guarded fragment of first-order logic (\gfo) was
proposed by Andr{\'e}ka, van Benthem and N{\'e}meti \cite{andr:moda98} as a large
fragment of first-order logic that generalizes modal logic while essentially retaining
these properties. It consists of \fo formulas in which all quantifiers are
``guarded'' by atomic formulas. \gfo has the tree-like model property (if a
sentence has a model, it has a model of bounded tree width), it can be
translated into tree automata (each formula can be transformed into a tree automaton
recognizing the tree decompositions of its models of bounded tree width) and it
has the finite model property \cite{andr:moda98,Gradel99}.

In this paper we provide another, orthogonal generalization of modal logic that
enjoys the same nice properties. We introduce \unfo, a fragment of \fo in which
\emph{negation} is restricted to formulas having only one free variable.  \unfo
is incomparable in terms of expressive power to \gfo. It generalizes modal
logic, as well as other formalisms, such as conjunctive queries, that are not
contained in \gfo. We show that \unfo has the tree-like model property, is
translatable into tree-like automata (in the sense described above), and has the
finite model property. Hence \unfo, too, is robustly decidable.

We also introduce \unfp, which extends \unfo with least and greatest monadic
fixpoints, in the same way that the \mucalc extends modal logic \cite{Kozen1983}, and 
guarded fixpoint logic (\gfp) extends \gfo \cite{GradelW99}. \unfp generalizes the 
\mucalc but also monadic Datalog and remains incomparable with
\gfp. It still has the tree-like model property and can be translated into
\mso, but it no longer has the finite model
property. Nevertheless, we show that finite satisfiability for \unfp
is decidable (note that the decidability of the analogous problem for \gfp was
only recently solved in~\cite{Mikovince}).
More precisely, the satisfiability problem is \twoexptime-complete, both for \unfo and for
\unfp, both on arbitrary and finite structures.

We also study the model checking problem. In contrast with \gfo, whose model
checking problem is \ptime-complete~\cite{BerwangerG01}, we show
that for \unfo it is complete for \pnplogsq, providing
one of the few natural complete problems for that complexity class.  For \unfp,
model checking is hard for \pnp and contained in
$\npnp\cap\conpnp$. The gap between the upper bound and the lower bound
reflects a similar open problem for \gfp and the \mucalc
where the model checking problem lies between \ptime and $\np\cap\conp$~\cite{BerwangerG01}.

\unfo is not only computationally but also model-theoretically very
well behaved. 
We characterize the expressive power of \unfo in terms of an
appropriate notion of invariance, and we show that \unfo has 
Craig Interpolation as well as the Projective Beth
Property. Note that Craig Interpolation fails for
\gfo~\cite{HooglandM02}.
On trees, \unfo and \unfp correspond to well-known existing
formalisms. 

\medskip

\textbf{Outline of the paper.}  In Section~\ref{section-prelim}, we formally
introduce \unfo and \unfp, and we review relevant background material on modal
logics and computational complexity.  In Section~\ref{section-express} we
develop the model theory of \unfo and \unfp: we introduce an appropriate notion
of bisimulations, we state a finite model property and a tree-like model property,
we obtain model theoretic characterizations, and we prove Craig
Interpolation and the Projective Beth Property for \unfo. In
Section~\ref{section-sat}, we show that the satisfiability problem for \unfo,
and for \unfp, is 2\exptime-complete, both on arbitrary structures and on
finite structures. In Section~\ref{section-model-check}, we map out the
complexity of the model checking problem, that is, the problem of evaluating a
formula in a given finite structure. In Section~\ref{section-tree}, we study
the expressive power of \unfo and \unfp on tree structures.  We conclude in
Section~\ref{section-discussion} with a comparison with other work, in
particular on
guarded negation logics.

This paper is the journal version of~\cite{CS11}. It contains new and simpler
proofs for many of the results as well as new results, such as the characterization
theorem in the finite case, cf. Theorem~\ref{thm-charac-fo-finite}.

\section{Preliminaries}\label{section-prelim}
We consider relational structures. A relational schema
is a finite set of relation symbols fixing an arity to each
relation. A \emph{model}, or \emph{structure}, $M$ over a relational schema
$\sigma$ is a set $\dom(M)$, the \emph{domain} of $M$, together with an
interpretation $R^M$ to each
relation symbol $R$ of $\sigma$ as a relation over the domain of the arity given
by the schema.  A model is said to be \emph{finite} if its domain is finite.
We assume familiarity with first-order logic, \fo, and least fixpoint logic, \fp, over
relational structures. We use classical syntax and semantics for \fo and
\fp. In particular we write $M\models \phi(\tuple{u})$ or $(M,\tuple{u}) \models
\phi(\tuple{x})$
for the fact that the
tuple $\tuple{u}$ of elements of the model $M$ makes the \fo-formula, or LFP-formula, $\phi(\tuple{x})$
true on $M$. 

Given a structure $M$ and a set $X\subseteq \dom(M)$ we denote by $M\restrict
X$ the substructure of $M$ induced by $X$.

\subsection{\unfo and \unfp}
\newcommand{\unnormalform}{UN-normal form\xspace}
\newcommand{\width}{width\xspace} \newcommand{\LFP}{\textup{LFP}}
\newcommand{\GFP}{\textup{GFP}} We define the \emph{unary-negation
  fragment of first-order logic} (or \unfo, for short), as the fragment of \fo given
by the following grammar (where $R$ is an arbitrary relation name from
the underlying schema):
\begin{align*}
  \phi::=~~~ & R(\tuple{x}) ~|~ x=y ~|~ \phi \wedge \phi ~|~ \phi \vee \phi ~|~
  \exists x \phi ~|~ \neg \phi(x)
\end{align*}
where, in the last clause, $\phi$ has no free variables besides (possibly)
$x$. Throughout this paper, we will keep using the notation $\phi(x)$ to
indicate that a formula has at most one free variable.
In other words, \unfo is the restriction of \fo where negation is only allowed if
the subformula has at most one free variable. In particular $x \neq y$
is not expressible
in \unfo.

We say that a formula of \unfo is in \unnormalform if, in the syntax tree of
the formula, every existential quantifier (except for the root of the syntax
tree) is either directly below another existential quantification, or the
subformula starting with that quantifier has at most one free variable.
In other words, formulas in \unnormalform
are existential positive formulas in prenex normal form where each atom is
either a positive atom over the underlying schema or a possibly negated formula
with at most one free variable in \unnormalform.

For instance the formula $\exists x \exists y (R(x,y) \land \exists z S(x,y,z)) $
is not in \unnormalform. However the equivalent formula $\exists x \exists y
\exists z ~~ (R(x,y) \land S(x,y,z))$ is in \unnormalform.  Similarly the formula
$$\exists x (R(x) \land \exists y (R(y) \land \exists z (R(z) \land (\exists x
S(x,y,z)))))$$
is not in \unnormalform, but each of the equivalent formulas
\[\exists x \exists y \exists z \exists x' R(x) \land R(y) \land R(z) \land
S(x',y,z)\]
and
\[\exists x (R(x) \land \exists y \exists z \exists x (R(y) \land R(z) \land
S(x,y,z))\]
is.

Every formula of \unfo can be transformed into an equivalent formula in
\unnormalform in linear time by ``pulling out existential quantifiers'' as soon
as the corresponding subformula has more than one free variable, using the
following two rewrite rules:
\[ \phi\land\exists x\psi ~\equiv~ \exists x (\phi\land\psi) ~~\text{
  provided that $x$ does not occur free in $\phi$}\]
\[ \phi\lor\exists x\psi ~\equiv~ \exists x (\phi\lor\psi) ~~\text{
  provided that
  $x$ does not occur free in $\phi$}\]
(together with safe renaming of variables where needed). For instance, starting with
$$\exists x (R(x) \land \exists y (R(y) \land \exists z (R(z) \land (\exists x
S(x,y,z)))))$$
one could obtain:
\begin{align*}
\exists x (R(x) \land \exists y \exists z \exists x (R(y) \land R(z) \land
S(x,y,z)))
\end{align*}

Bringing a \unfo
formula into \unnormalform may increase the number of variables
occurring in the formula, because applying the above rewrite rules may require renaming bound
variables.   A formula of \unfo is said to be \emph{of
  \width~$k$} if it can be put in \unnormalform using the above rules in such a
way that the resulting formula uses at most $k$
variables. The width of the above formula is therefore~3. We denote by
$\unfo^k$ the set of all \unfo formulas of \width~$k$.

In order to define \emph{unary-negation fixpoint logic} (\unfp) we introduce extra unary predicates that will
serve for computing unary fixpoints. We denote the unary predicates given by
the relational schema using the letters $P,Q \ldots$ and the unary predicates
serving for computing the fixpoints by $X,Y \ldots$.
By $\unfo(\tuple{X})$ we mean \unfo defined over the schema extended with the
unary predicates $\tuple{X}$. In particular it allows formulas of the form
$\neg\phi(x,\tuple{X})$. \unfp is the extension of $\unfo(\tuple{X})$ by means
of the following least fixpoint construction:
$$[\LFP_{X,x}~\phi(X,\tuple{X},x)](y)$$
where $X$ occurs only positively in $\phi$. 
An analogous greatest fixed point operator is definable by
dualization. Note that no first-order parameters (i.e., free variables
in the body of $\phi$ other than $x$) are permitted.

Note that \unfp is a syntactic fragment of least fixpoint logic (LFP),
i.e., the extension of full first-order logic with the least fixpoint
operator. Therefore, we can simply refer to the literature on LFP for
the semantics of these formulas (cf.~for
example~\cite{libk:elem04}). However, we will discuss the semantics of
the least fixpoint operator here in some detail, because our arguments
later on will refer to it. Consider any \unfp formula of the
form $$[\LFP_{X,x}~\phi(X,\tuple{X},x)](y)$$ and any structure
$(M,\tuple{S})$, where $\tuple{S}$ is a collection of subsets of the
domain of $M$ that form the interpretation for $\tuple{X}$. Since $X$
occurs in $\phi$ only positively, $\phi(X,\tuple{X},x)](y)$ induces a
monotone operation $\mathcal{O}_\phi$ on subsets of the domain of $M$,
where $\mathcal{O}_\phi(A) = \{a\in \dom(M)\mid (M,\vec{S},
A)\models\phi(a)\}$. By the Knaster-Tarski fixpoint theorem, this
monotone operation has a unique least-fixpoint. By definition, an
element $b\in \dom(M)$ satisfies the formula
$[\LFP_{X,x}~\phi(X,\tuple{X},x)](y)$ in $(M,\vec{S})$ if and only if $b$ belongs
to this least fixpoint. The least fixpoint of the monotone operation
$\mathcal{O}_\phi$ is known to be the
intersection of all its pre-fixed points, i.e., $\bigcap\{A\subseteq
\dom(M) \mid A\supseteq \mathcal{O}_\phi(A)\}$, and it can be
equivalently characterized as
${\mathcal{O}_\phi}^\kappa(\emptyset)$, where $\kappa=|\dom(M)|$,
${\mathcal{O}_\phi}^0(\emptyset) = \emptyset$; for all successor ordinals $\lambda+1$,
${\mathcal{O}_\phi}^{\lambda+1}(\emptyset) =
{\mathcal{O}_\phi}({\mathcal{O}_\phi}^\lambda(\emptyset))$;
and for all limit ordinals $\lambda\leq \kappa$, ${\mathcal{O}_\phi}^{\lambda}(\emptyset) =
\bigcup_{\lambda'<\lambda}{\mathcal{O}_\phi}^{\lambda'}(\emptyset)$.

The same definition of the \unnormalform applies to \unfp. 
As in the case of \unfo, we say that a \unfp formula \emph{has \width $k$} if,
when put in \unnormalform, it uses at most $k$ first-order variables. In other
words, a formula of \unfp has \width $k$ if all the
``$\unfo(\tuple{X})$-parts'' of its subformulas have \width~$k$.
We denote by $\unfp^k$ the set of all \unfp formulas of width~$k$. 

The \emph{negation depth} of a \unfo or \unfp formula will also be an
important parameter. It is the maximal nesting depth of negations in its
syntax tree.

\begin{example}\label{example-prelim}
  Two examples of \unfo formulas are $\exists yzu (R(x,y)\land R(y,z)\land
  R(z,u)\land R(u,x))$, which expresses the fact that $x$ lies on a directed
  $R$-cycle of length 4, and its negation $\neg\exists yzu (R(x,y)\land
  R(y,z)\land R(z,u)\land R(u,x))$. It follows from known results \cite{andr:moda98}
  that neither can be expressed in the guarded fragment, and
  therefore, these examples show that \unfo can express properties
  that are not definable in the guarded fragment. On the other hand,
  we will see in Section~\ref{sec:unbisimulations} that 
  the guarded-fragment formula $\forall xy (R(x,y)\to S(x,y))$ has no
  equivalent in \unfo, and therefore, the two logics are incomparable 
  in expressive power.

  A \emph{conjunctive query} (CQ) is a query defined by a first-order formula of
  the form $\exists x_1\cdots x_n~ \tau_1 \wedge \cdots \wedge \tau_l$, where
  each $\tau_i$ is a (positive) atomic formula. A \emph{union of conjunctive
    queries} (UCQ) is a query defined by a finite disjunction of first-order
  formulas of the above form. Clearly, every UCQ is definable in \unfo. In
  fact, \unfo can naturally be viewed as the extension of the language of UCQs
  with unary negation.  It is also worth noting that, in a similar way, all
  \emph{monadic datalog} queries (i.e., datalog queries in which all IDB
  relations are unary~\cite{Cosmadakis88:decidable}) are definable in \unfp.
  It was shown in~\cite{Cosmadakis88:decidable} that query containment is
  decidable in 2ExpTime for monadic datalog. As the containment of two unary
  Datalog programs can be expressed in \unfp, the decidability of \unfp in
  2ExpTime, cf. Theorem~\ref{theorem-sat-upper-bound}, generalizes this result.
  We also mention that the query containment problem was recently shown to be
  hard for 2Exptime~\cite{BenediktBS12}, hence the lower bound of
  Theorem~\ref{theorem-sat-upper-bound} also follows from this fact.
\end{example}

\subsection{Modal logic and bisimulation}\label{sec-modal-logic}
\unfo and \unfp can be viewed as extensions of modal
logic and the \mucalc, and, actually, of their \emph{global two-way} extensions.  As
several of our proofs will make reductions to the modal logic case, we now
review relevant definitions and results regarding \emph{global two-way modal
  logic} and the \emph{global two-way \mucalc{}}.

We view a \emph{Kripke structure} as a relational structure over a
schema consisting of unary and binary relations. Modal logics are
languages for describing properties of nodes in Kripke structures.
Intuitively, modal formulas can navigate Kripke structures by 
traversing edges  in a forward or backward direction.

We will use \ml to denote the modal language with forward and backward
modalities, and with the global modality, as defined by the following grammar.
$$ \phi ::= P ~|~ \phi \land \phi ~|~ \lnot \phi ~|~ \langle R
\rangle \phi ~|~ \langle R^{-1} \rangle \phi ~|~ \jump\phi$$ where $P$ is a
unary relation symbol (also called \emph{proposition letter} in this setting),
and $R$ is a binary relation symbol (also called an \emph{accessibility
  relation} in this context).  Disjunction and the ``box operators'' $[R]$ and
$[R^{-1}]$ are definable as duals of conjunction, $\langle R\rangle$ and
$\langle R^{-1}\rangle$, respectively (for instance $[R]\phi$ is $\lnot\langle R\rangle\lnot\phi$).  

The semantics of \ml can be given via a translation into \unfo:
For each \ml formula $\phi$ we construct by induction, as explained in Figure~\ref{fig:modal-semantics}, a \unfo formula $\phi^*(x)$ such that for
each Kripke structure $M$ and node $a$ we have $M \models \phi^*(a)$ iff $a$ has the
property $\phi$ on $M$. 

\begin{figure}
\[\begin{array}{lll}
(P)^*(x)&  = & P(x) \\
(\phi\land\psi)^*(x) & = & \phi^*(x) \land \psi^*(x)\\
(\neg\phi)^*(x)& = & \lnot \phi^*(x) \\
(\langle R\rangle\phi)^*(x) & = & \exists y~ R(x,y) \land \phi^*(y) \\
(\langle R^{-1}\rangle\phi)^*(x) & = & \exists y~ R(y,x) \land \phi^*(y) \\
(\jump\phi)^*(x) & = & \exists y~ \phi^*(y)\\
\end{array}\]
\caption{Inductive translation of an \ml-formula $\phi$ to
  an equivalent \unfo-formula $\phi^*(x)$}\label{fig:modal-semantics}
\end{figure}
We
refer to \ml as \emph{global two-way modal logic}, because it
includes the global modal operator $\jump$ and the inverse modal
operators $\langle R^{-1}\rangle$ (and their duals).
Traditionally, the basic modal logic is defined without those features
and can only navigate by traversing an edge in the forward direction.

The \emph{global two-way \mucalc}, which we denote by $\ml_\mu$, is obtained by
adding fixpoint variables and a least fixpoint operator to the language of \ml:
fixpoint variables are admitted as atomic formulas, and whenever $\phi$ is a
formula of $\ml_\mu$ in which a fixpoint variable $X$ occurs only positively
(under an even number of negations), then $\mu X \phi$ is again a valid formula
of $\ml_\mu$, and it denotes the fixpoint of the monotone operation on sets
defined by $\phi(X)$. An analogous greatest fixpoint operator is definable as
the dual of the least fixpoint operator. Adding the rule
\[\begin{array}{lll}
(\mu X \phi)^*(x) & =& [\LFP_{X,y}~\phi^*(y)](x)
\end{array}\]
to the table of Figure~\ref{fig:modal-semantics} shows that $\ml_\mu$ can be seen as a
fragment of \unfp.

We know from~\cite{FisherLadner,vard:reas98} that the satisfiability problem
for $\ml$ and for $\ml_\mu$, on arbitrary Kripke structures, is \exptime-complete. Although
$\ml$ has the finite model property \cite{FisherLadner}, that is, every
satisfiable $\ml$-formula is satisfied in some finite Kripke
structure, the same does not hold for $\ml_\mu$, and
therefore the satisfiability problem for $\ml_\mu$ on finite
structures is not the same problem as the satisfiability problem for
$\ml_\mu$ on arbitrary structures. Nevertheless,
it was shown in~\cite{Mikolaj} that the satisfiability problem for
$\ml_\mu$ on finite Kripke structures  is \exptime-complete.

\begin{theorem}\cite{vard:reas98,Mikolaj}\label{thm-decid-mu}
Testing whether a formula of $\ml_\mu$ is satisfiable is
\exptime-complete, both on arbitrary Kripke structures and on 
finite structures.
\end{theorem}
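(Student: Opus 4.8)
The statement is essentially a repackaging of known results, so the plan is to identify the right black boxes and the reductions to them. For the \exptime upper bound on arbitrary structures I would go through tree automata. First, a routine unraveling gives a tree-model property for $\ml_\mu$: if $M\models\phi^*(a)$, unravel $M$ from $a$, recording at each new node whether it was reached by a forward or a backward edge so that both $\langle R\rangle$- and $\langle R^{-1}\rangle$-moves can be replayed, and keeping the unary predicates along the way; since $\ml_\mu$ only sees Kripke structures up to the resulting back-and-forth bisimulation, the tree is again a model, and its branching degree is bounded by the number of diamond-subformulas of $\phi$. Second, I would compile $\phi$ into an alternating two-way parity tree automaton $\mathcal A_\phi$ running over such trees: fixpoint subformulas fix the priority assignment in the usual way, a $\langle R\rangle$-subformula lets $\mathcal A_\phi$ descend along an $R$-labelled edge, a $\langle R^{-1}\rangle$-subformula lets it move to the parent --- this is the one point that genuinely requires two-wayness, following Vardi~\cite{vard:reas98} --- and the global modality $\jump$, which is a universal role, is incorporated in the standard way, by letting the automaton climb to the root and launch an existential (for $\jump\psi$) or a universal (for its dual) search over the whole tree, the ``eventually reach a witness'' requirement being captured by a least-fixpoint priority. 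Then $\mathcal A_\phi$ has exponentially many states and polynomially many priorities, and emptiness of two-way alternating parity tree automata is decidable in time exponential in the number of states and polynomial in the number of priorities; this yields the \exptime upper bound.

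For the lower bound, $\ml_\mu$ contains the basic modal \mucalc, and already PDL, whose satisfiability problem is \exptime-hard~\cite{FisherLadner}. Since these logics have the finite model property, the instances produced by that reduction are satisfiable if and only if finitely satisfiable, so \exptime-hardness holds both on arbitrary and on finite structures.

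For finite satisfiability the argument above breaks down, because $\ml_\mu$ does \emph{not} have the finite model property: greatest fixpoints together with the inverse and global modalities can force every model to be infinite, and the regular tree models accepted by $\mathcal A_\phi$ in general do not fold into finite graphs. Here the plan is to invoke the finite-satisfiability algorithm of~\cite{Mikolaj} for the two-way \mucalc, which replaces the plain nonemptiness test for $\mathcal A_\phi$ by a stronger, finite-model certificate --- informally, an accepting run whose backward obligations can all be discharged inside a finite structure --- and which is still computable in \exptime. Feeding the automaton $\mathcal A_\phi$ constructed above into that algorithm gives the \exptime upper bound for finite satisfiability, and together with the lower bound above this completes the proof.

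The main obstacle is precisely this last step. The arbitrary-case construction is entirely standard automata theory, but passing to finite models is not: the unravelled, regular tree models it produces genuinely need not fold into finite ones, so a separate and considerably less elementary finite-model analysis is needed, and carrying it out within \exptime is the real content here --- exactly the difficulty resolved in~\cite{Mikolaj}, which is why I would simply cite it rather than reprove it.
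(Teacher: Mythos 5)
Your proposal is correct and matches the paper's treatment: the paper offers no proof of this theorem beyond citing \cite{vard:reas98} for the automata-theoretic argument on arbitrary structures and \cite{Mikolaj} for finite satisfiability, which is exactly the decomposition you identify, and your handling of the global modality $\jump$ corresponds to the paper's remark that the results of the cited works (stated there without $\jump$) immediately extend to full $\ml_\mu$. Your sketch of the two-way alternating automaton construction and your observation that the finite case is the genuinely hard part resolved in \cite{Mikolaj} are both accurate reconstructions of the cited arguments.
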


We note that, while the two-way modal $\mu$-calculus as defined
in~\cite{vard:reas98,Mikolaj} does not include the global modality \jump, the
results from~\cite{Mikolaj} immediately extend to full $\ml_\mu$.

Modal formulas are invariant for bisimulation \cite{bent:moda83}. Here, 
due to the backward modal operators and the global modal operator,
we need global two-way bisimulations (see for example~\cite{Otto2004}).
Given two Kripke structures
$M$ and $N$ a \emph{global two-way bisimulation} between $M$ and $N$ is a binary relation $Z\subseteq
M\times N$ such that the following hold for every pair $(a,b)\in Z$ and every
relation symbol $R$:
\begin{itemize}
\item $a\in P^M$ if and only if $b\in P^N$, for all unary relation
  symbols $P$.
\item for every $a'$ with $(a,a')\in R^M$ there is a 
  $b'$ such that $(b,b')\in R^N$ and $(a',b')\in Z$,
\item for every $b'$ with $(b,b')\in R^N$ there is an
  $a'$ such that $(a,a')\in R^M$ and $(a',b')\in Z$,
\item for every $a'$ with $(a',a)\in R^M$ there is a
  $b'$ such that $(b',b)\in R^N$ and $(a',b')\in Z$,
\item for every $b'$ with $(b',b)\in R^N$ there is an
  $a'$ such that $(a',a)\in R^M$ and $(a',b')\in Z$,
\item for every node $a'$ of $M$ there is a node
  $b'$ of $N$ such that $(a',b')\in Z$,
\item for every node $b'$ of $N$ there is a node
  $a'$ of $M$ such that $(a',b')\in Z$,
\end{itemize}
We write $M\bisim N$ if there is a global two-way bisimulation between
$M$ and $N$, and we write $(M,a)\bisim (N,b)$ if the pair $(a,b)$ belongs
to a global two-way bisimulation between $M$ and $N$. Recall that a
homomorphism $h:M\to N$ is a map from the domain of $M$ to the domain
of $N$ such that for all relation symbols $R$ and tuples $(a_1,
\ldots, a_n)\in R^M$, we have that $(h(a_1),
\ldots, h(a_n))\in R^N$. We say that $M$
is a \emph{$\bisim$-cover} of $N$ if there is a homomorphism $h:M\to N$ such 
that  $(M,a)\bisim (N,h(a))$ for every element $a$ of $M$. In addition,
for  $h(a')=a$, we say that $(M,a')$ is a $\bisim$-cover of $(N,a)$. 

One can equivalently view bisimulations as strategies for a player in a
two-player game. In this game, the two players, called Abelard en
Elo{\"i}se, maintain a pair $(a,b)$ of elements, one in each
structure. Intuitively, one can think of a pebble lying on each of
these two selected nodes.  At any time during the game, the pebbled
nodes must satisfy the same unary predicates in the two structures. A
move of Abelard consists of choosing one of the two pebbles (i.e., one
of the two structures), and either sliding the pebble forward or
backward along an edge belonging to some binary relation $R$, or
moving it to an arbitrary position.  Then Elo\"ise must respond in the
other structure with a move mimicking Abelard's move, that is, either
sliding her pebble along an edge belonging to the same relation $R$
(and in the same
direction), or moving it to an arbitrary position, depending on Abelard's
move. If Elo\"ise cannot respond with a valid move, Abelard wins.  It
is easy to see that $(M,a)\bisim (N,b)$ if and only if, starting in
$(a,b)$, Elo\"ise has a strategy that allows her to play forever
without letting Abelard win.  We also write $(M,a)\bisim^l (N,b)$ if,
starting in $(a,b)$, Elo\"ise has a strategy that avoids losing in the
first $l$ rounds.

It is well known that $\ml_\mu$-formulas are invariant for global
two-way bisimulations: if $(M,a)\bisim (N,b)$ and if $\phi$ is a formula
of $\ml_\mu$, then $(M,a)\models\phi$ if and only if $(N,b)\models
\phi$. This basic fact of $\ml_\mu$ has an important consequence: if a
\mucalc formula has a model, then it has a, possibly infinite, acyclic
model, obtained by ``unraveling'' the original model along its paths
while preserving bisimulation equivalence. 
If we restrict attention to finite Kripke structures, then, in general,
acyclicity may not be achievable. For example, the one-element structure
consisting in a self-loop is not bisimilar to any finite acyclic structure.
However, over finite structures, a weaker form of acyclicity can be
achieved. For a natural number $l$, a Kripke structure is called
\emph{$l$-acyclic} if its underlying graph contains no cycle of length less
than $l$. We will make use of the following important result:

\begin{theorem}\label{thm-otto-acyclic-cover}\cite{Otto2004}
  For all $l\in\mathbb{N}$, every finite Kripke structure has a finite
  $l$-acyclic $\bisim$-cover.
\end{theorem}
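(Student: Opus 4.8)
\section*{Proof proposal for Theorem~\ref{thm-otto-acyclic-cover}}

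The plan is to obtain the cover as a synchronized product of $M$ with the Cayley graph of a suitable finite group, the group component serving to ``remember'' enough about the traversed walk that any short cycle in the product would force a short nontrivial relation in the group.

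First I would fix $M$ and $\ell$ and enumerate the \emph{directed labelled edges} of $M$, i.e.\ the triples $e=(a,R,b)$ with $(a,b)\in R^M$; to each such $e$ I assign a distinct formal generator $x_e$ and form the free group $F$ over all the $x_e$. Since free groups are residually finite and there are only finitely many reduced words of length at most $\ell$, there is a finite group $\mathbf{G}$ and a homomorphism $\pi\colon F\to\mathbf{G}$ with $\pi(w)\neq 1$ for every nonempty reduced word $w$ of length at most $\ell$. Writing $g_e:=\pi(x_e)$, let $N$ have domain $\dom(M)\times\mathbf{G}$, keep the unary predicates on the first coordinate, and set $((a,g),(b,h))\in R^N$ iff $(a,b)\in R^M$ and $h=g\cdot g_{(a,R,b)}$. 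The first-coordinate projection $h_0\colon N\to M$ is a surjective homomorphism, and a direct check of the forth and back conditions (both along $R$ and along $R^{-1}$, plus the two global clauses, which follow from surjectivity) shows that $Z=\{(x,h_0(x)):x\in\dom(N)\}$ is a global two-way bisimulation. Hence $N$ is a finite $\bisim$-cover of $M$.

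It then remains to show that $N$ is $\ell$-acyclic. Given a simple cycle $v_0,v_1,\ldots,v_c=v_0$ in the underlying graph of $N$ with $c\le\ell$, I would read off for each step $j$ whether it traverses an $N$-edge forward or backward and along which relation; the defining equation of $R^N$ then forces $g_{j+1}=g_j\cdot g_{e_j}^{\varepsilon_j}$ with $\varepsilon_j\in\{+1,-1\}$, so that $1=g_0^{-1}g_c=\pi(\tilde w)$ where $\tilde w=\prod_j x_{e_j}^{\varepsilon_j}$ has length $c\le\ell$. If $\tilde w$ is reduced, this contradicts the choice of $\mathbf{G}$; and if two consecutive letters of $\tilde w$ cancel, then unwinding the same defining equation yields $v_{j+2}=v_j$, contradicting simplicity of the cycle. (For $\ell\ge 2$ the same reasoning applied to words of length at most $2$ shows that $N$ has no self-loops, no multiple edges and no anti-parallel edges, so the forward/backward reading above is unambiguous; the case $\ell\le 1$ is vacuous.)

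The genuinely delicate step, and the one deserving the most care, is this last correspondence: spelling out, across all four sign/direction combinations, why a cancellation between consecutive letters of $\tilde w$ is exactly a backtrack of the walk to the vertex two steps earlier, and organising the edge-multiplicity bookkeeping that makes ``step $j$ goes forward/backward along $R$'' well defined for an unlabelled undirected cycle. The rest --- finiteness, the bisimulation check, and handling the global modality via surjectivity --- is routine. (This is in essence Otto's construction from \cite{Otto2004}.)
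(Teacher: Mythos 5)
The paper offers no proof of this statement: it is imported verbatim from Otto's work (the cited \cite{Otto2004}), so there is nothing in the text to compare your argument against. Your construction --- the product of $M$ with the Cayley graph of a finite quotient of the free group on the labelled edges in which no nonempty reduced word of length at most $\ell$ collapses to the identity, obtained via residual finiteness --- is precisely the construction behind the cited result, and your verification (projection is a bisimulation by the graph of the map; a short simple cycle yields a short word mapping to $1$, which must be reduced since any cancellation forces a backtrack $v_{j+2}=v_j$; lengths $1$ and $2$ handled separately to make the forward/backward reading well defined) is correct and complete.
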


This was used to show the following property, which is relevant for us
as well. We write $(M,a)\equiv_{\fo^q} (N,b)$ if $(M,a)$ and $(N,b)$
satisfy the same first-order formulas of quantifier depth $q$. 

\begin{theorem}\label{thm-otto-cover}\cite[Proposition 33]{Otto2004}
  For each $q\in\mathbb{N}$ there is a $l\in\mathbb{N}$ such that whenever
 $(M,a) \bisim^l (N,b)$, then $(M,a)$ and $(N,b)$
   have $\bisim$-covers $(M',a')$ and $(N',b')$, respectively, such that 
 $(M',a') \equiv_{\fo^q} (N',b')$. Moreover, if $M$ and $N$ are finite then $M'$
 and $N'$ can be chosen to be finite as well.
\end{theorem}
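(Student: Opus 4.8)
The plan is to reduce to the case of highly acyclic structures using Theorem~\ref{thm-otto-acyclic-cover}, and then to run a locality argument in the style of Hanf's and Gaifman's theorems. First, given $q$, I would fix a radius $r$ and a threshold $t$ (depending only on $q$ and the signature; one may take $r=3^q$) such that $(r,t)$-Hanf equivalence of two pointed structures implies $\equiv_{\fo^q}$-equivalence. I would then fix an acyclicity parameter $l'\geq 2r+2$, so that in an $l'$-acyclic structure the ball of radius $r$ around any node contains no cycle and is thus tree-like, and finally set $l$ to be a sufficiently large function of $r$ and $l'$. Given $(M,a)\bisim^l(N,b)$, apply Theorem~\ref{thm-otto-acyclic-cover} to obtain finite $l'$-acyclic $\bisim$-covers $h\colon M'\to M$ and $g\colon N'\to N$, and pick preimages $a'$ of $a$ and $b'$ of $b$. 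Composing the $\bisim$-strategies witnessing $(M',a')\bisim(M,a)$ and $(N',b')\bisim(N,b)$ with the $\bisim^l$-strategy for $(M,a)\bisim^l(N,b)$ yields $(M',a')\bisim^l(N',b')$.

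The heart of the matter is that, on $l'$-acyclic structures, $\bisim^l$ with $l$ large enough controls local \emph{isomorphism} types and not only bisimulation types. The ball of radius $r$ around a node of an $l'$-acyclic structure is acyclic, but bisimulation forgets multiplicities of isomorphic subtrees, so the depth-$r$ bisimulation type of the centre does not determine this ball up to isomorphism. The remedy is to build the covers \emph{multiplicity-regularly above $t$}: arrange, while applying Theorem~\ref{thm-otto-acyclic-cover}, that (i) inside every radius-$r$ ball, whenever a node has a child of a given depth-$\leq r$ bisimulation type it has at least $t$ of them, and (ii) every isomorphism type of radius-$r$ ball that occurs at all occurs at least $t$ times. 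This is consistent with $l'$-acyclicity, since acyclic $\bisim$-covers are obtained by an unraveling-with-blow-up and bisimulation covers may freely duplicate elements; one arranges in addition that the radius-$r$ balls of $a'$ and $b'$ are the normalized depth-$r$ unravelings of $(M,a)$ and $(N,b)$. Under this normalization, the isomorphism type of the radius-$r$ ball of a node is a function of its depth-$r'$ bisimulation type for an $r'$ slightly larger than $r$, and the set of radius-$r$ ball isomorphism types realized in a cover is a function of the set of depth-$r'$ bisimulation types realized at its nodes.

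It then remains to extract Hanf equivalence. With $l$ chosen at least $r'$ plus a small constant, $(M',a')\bisim^l(N',b')$ gives that $a'$ and $b'$ have the same depth-$r'$ bisimulation type, hence isomorphic pointed radius-$r$ balls, and --- via the global clause of global two-way bisimulation --- that $M'$ and $N'$ realize the same depth-$r'$ bisimulation types at their nodes, hence the same radius-$r$ ball isomorphism types, each at least $t$ times in both by (ii). Treating the pebbled point as a distinguished constant, $(M',a')$ and $(N',b')$ are therefore $(r,t)$-Hanf equivalent, and Hanf's theorem yields $(M',a')\equiv_{\fo^q}(N',b')$. Finiteness is inherited: Theorem~\ref{thm-otto-acyclic-cover} gives finite covers for finite input and the extra duplication used for multiplicity-regularity is finite.

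\textbf{Main obstacle.} I expect the multiplicity normalization to be the delicate step. Bisimulation preserves only the \emph{sets} of local types, whereas Hanf's theorem needs \emph{counts} matched up to the threshold, and --- more subtly --- in an acyclic structure even the isomorphism type of a single radius-$r$ ball is not recovered from the bisimulation type of its centre unless sub-multiplicities are controlled. Both must be engineered at once inside the acyclic-cover construction, without breaking $l'$-acyclicity or altering the balls of the pebbled nodes, and with the parameter dependencies $l \gg r' \gtrsim r = 3^q$ and $l' \geq 2r+2$ kept mutually consistent; this is exactly why Theorem~\ref{thm-otto-acyclic-cover}, rather than a bare unraveling, is the appropriate tool.
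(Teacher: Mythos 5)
This statement is not proved in the paper at all: it is imported verbatim as Proposition~33 of \cite{Otto2004}, so there is no in-paper proof to compare your attempt against. What you have written is essentially a reconstruction of Otto's own argument (locally acyclic finite covers via Theorem~\ref{thm-otto-acyclic-cover}, followed by a Hanf-locality upgrade of bounded bisimulation equivalence to $\equiv_{\fo^q}$), so the strategy is the right one. But as a proof it is incomplete in exactly the place you flag, and two of the imprecisions are not merely cosmetic. First, the threshold version of Hanf's theorem that you invoke, with $r$ and $t$ depending only on $q$ and the signature, holds for structures of bounded degree; $M$ and $N$ carry no degree bound, so the threshold needed to run the Ehrenfeucht--Fra\"iss\'e argument depends on the sizes of the radius-$r$ balls in the covers. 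This is repairable (the covers are finite and you control their construction, so you can boost multiplicities past whatever threshold the realized ball sizes demand), but the order of quantifiers in your opening paragraph is wrong as written. Second, your normalization condition (i), ``at least $t$ children of each realized type,'' does not deliver what you later use, namely that the isomorphism type of a radius-$r$ ball is a \emph{function} of the depth-$r'$ bisimulation type of its centre: two centres of the same bisimulation type could have $t$ and $t+5$ children of some type and hence non-isomorphic balls, and then ``same bisimulation types realized'' no longer yields ``same ball isomorphism types realized.'' You need exact (truncated) multiplicities, i.e.\ genuinely regularized unravelings, and arranging this simultaneously with $l'$-acyclicity, finiteness, and the cover property is the entire technical content of Otto's proof; your ``main obstacle'' paragraph correctly identifies it but does not discharge it. Since the paper treats this as a black box, the honest conclusion is that your plan is a faithful outline of the cited proof rather than a self-contained alternative to anything in the paper.
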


\subsection{Overview of relevant oracle complexity  classes} \label{section-complexity}

We briefly review a number of complexity classes involving restricted
access to an oracle, which turn out to be relevant for our present
investigation, and may not be very well known. The reader interested
to learn more about these classes would benefit from reading the
literature cited below, as well as~\cite{Schnoebelen03:oracle}
and~\cite{Gottlob95:np} that inspired us a lot.

The first class we use is denoted \pnp, also known as $\Delta^p_2$. It consists of all problems that
are computable by a Turing machine running in time polynomial in the size of
its input, where the Turing machine, at any point during its computation, can
ask yes/no queries to an \np oracle, and take the answers of the oracle into
account in subsequent steps of the computation (including subsequent queries to
the \np oracle).  Analogously, one can define the classes $\npnp$ and
$\conpnp$, which are also known as $\Sigma^p_2$ and $\Pi^p_2$, respectively.
 An example of a \pnp-complete problem is LEX(SAT), which takes as input a
 Boolean formula $\phi(x_1, \ldots x_n)$ and asks what is the value of
 $x_n$ in the lexicographically maximal solution (where $x_n$ is treated
 as the least significant bit in the ordering)~\cite{Wagner:BH}.

 A subclass of \pnp is \pnplog. It is defined in the same way as \pnp, except
 that the number of yes/no queries that can be asked to the \np oracle is
 bounded by $O(\log(n))$, where $n$ is the size of the input. There is an
 equivalent characterization of this class, denoted \pnppar
 in~\cite{Buss91:truth-table}. It consists of all problems computable using a
 Turing machine running in time polynomial in the size of its input, where the
 Turing machine can call the oracle only once (or a constant number of
 times, as this turns out not to make a difference), but in
 doing so it may ask the oracle several (polynomially many) yes/no questions in
 parallel~\cite{Buss91:truth-table}. In other words, the answer to a query cannot be used by the Turing
 machine in choosing which subsequent queries to ask to the oracle.  A third equivalent
 characterization of \pnplog is as the class of problems that are \ptime
 truth-table reducible to \np~\cite{Buss91:truth-table}, that is, problems for which
 there is a \ptime algorithm that, given an instance of the problem, produces a
 set $y_1,\cdots,y_n$ of inputs to some \np oracle, together with a Boolean
 formula $\phi(x_1,\cdots,x_n)$, such that the input is a yes-instance iff $\phi$
 evaluates to true after replacing each $x_i$ by $1$ if $y_i$ is accepted by
 the \np oracle and $0$ otherwise.
 An example of a \pnplog-complete problem is the problem whether 
 two graphs have the same chromatic number~\cite{Wagner:BH}.

Finally, in between \pnplog  and \pnp lies a hierarchy of classes \pnplogi with
$i> 1$. They are defined in the same
way as \pnplog except that the number of queries to the oracle is 
bounded by $O(\log^i(n))$. Each class \pnplogi can be equivalently characterized as
the class of problems that can be solved in polynomial time allowing
$O(\log^{i-1}(n))$ many rounds of parallel queries to an \np oracle \cite{CastroSeara1996}.

There are few known natural complete problems for the classes
\pnplogi. We introduce here a complete problem \lexisat that we will make
use of later on in our lower bound proofs.  Recall that \lexsat is the
problem to decide, given a Boolean formula $\phi(x_1,
\ldots, x_n)$, whether the value of $x_n$ is 1 in the
lexicographically maximal solution. Here, $x_1$ is treated as the most
significant bit and $x_n$ as the least significant bit in the
ordering. Similarly, for $i\geq 1$, we define \lexisat to be the
problem of testing, given a Boolean formula $\phi(x_1,
\ldots, x_n)$ and a number $k\leq \log^i(n)$, whether the value of $x_k$ is
1 in the lexicographically maximal solution.
\begin{theorem}
\lexisat is \pnplogi-complete. 
\end{theorem}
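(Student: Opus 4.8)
The plan is to establish membership in \pnplogi{} and \pnplogi-hardness separately. For the upper bound I would note that the lexicographically maximal satisfying assignment of a Boolean formula $\phi(x_1,\ldots,x_n)$, when one exists, can be computed greedily one bit at a time using adaptive queries to an \np-oracle: if the first $j-1$ bits have already been fixed to $b_1,\ldots,b_{j-1}$, then the $j$-th bit $b_j$ equals $1$ if and only if the \np-question ``does $\phi$ have a satisfying assignment with $x_1=b_1,\ldots,x_{j-1}=b_{j-1}$ and $x_j=1$?'' has a positive answer. To decide $\lexisat$ on an instance $(\phi,k)$ one therefore needs only the first $k$ of these queries, together with one preliminary query checking whether $\phi$ is satisfiable at all; this is at most $k+1\le \log^i n+1=O(\log^i(\text{input size}))$ adaptive \np-queries, exactly what a \pnplogi{} machine is permitted. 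Hence $\lexisat\in\pnplogi$.

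For hardness, fix a language $L\in\pnplogi$, witnessed by a deterministic polynomial-time oracle machine $M$ that on every input of length $n$ issues exactly $m=m(n)=O(\log^i n)$ adaptive queries to a SAT-oracle (dummy queries added if needed) and then accepts or rejects according to a polynomial-time computable predicate of its input and the sequence of oracle answers. Given $w$ with $|w|=n$, I would construct in polynomial time a Boolean formula
\[
\phi \;=\; \mathrm{Stage}_1\wedge\cdots\wedge\mathrm{Stage}_m\wedge\mathrm{Stage}_{m+1}\wedge(\text{padding}),
\]
whose variables are ordered as $x_1,\ldots,x_{m+1}$ (most significant, in this order), followed by all auxiliary variables. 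Here $\mathrm{Stage}_j$ (for $j\le m$) is the implication $x_j\to Q_j(x_1,\ldots,x_{j-1},\bar y^j)$, where $\bar y^j$ is a fresh block of auxiliary variables and $Q_j$ is a formula — obtained by a Tseitin-style translation of a polynomial-size Boolean circuit — such that for every fixed bit-string $(a_1,\ldots,a_{j-1})$ the formula $Q_j(a_1,\ldots,a_{j-1},\bar y^j)$ is satisfiable in $\bar y^j$ if and only if the $j$-th oracle query that $M$ issues on $w$ under the assumption that the first $j-1$ answers were $a_1,\ldots,a_{j-1}$ is a yes-instance of SAT; such a circuit exists because both ``compute the $j$-th query from $w$ and the previous $j-1$ answers'' and the Cook--Levin translation are polynomial-time, and the whole family $Q_1,\ldots,Q_m$ is produced by a single polynomial-time procedure since $m=O(\log^i n)$. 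The last stage $\mathrm{Stage}_{m+1}$ is $x_{m+1}\to Q_{\mathrm{Acc}}(x_1,\ldots,x_m,\bar t)$, where $Q_{\mathrm{Acc}}$ encodes, in the same way, the statement ``$M$'s deterministic acceptance predicate evaluates to $1$ on $w$ with answer sequence $x_1,\ldots,x_m$''. The padding consists of polynomially many tautologous clauses $(v_\ell\vee\lnot v_\ell)$ over fresh, least-significant variables, added so that the total number $N_{\mathrm{var}}$ of variables of $\phi$ satisfies $\log^i N_{\mathrm{var}}\ge m+1$; only polynomially many fresh variables are needed, since $(m+1)^{1/i}=O(\log n)$.

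Correctness proceeds by induction on $j$: the $j$-th bit of the lexicographically maximal satisfying assignment of $\phi$ equals the $j$-th oracle answer $a_j$ in $M$'s computation on $w$. The crucial point — and the step I expect to need the most care — is to check that the greedy lex-max construction does not leak information between stages: when it reaches $x_j$, the bits $x_1,\ldots,x_{j-1}$ are already set to $a_1,\ldots,a_{j-1}$, the later bits $x_{j+1},\ldots,x_{m+1}$ are still free and can each be set to $0$ (making $\mathrm{Stage}_{j+1},\ldots,\mathrm{Stage}_{m+1}$ vacuously true), the earlier stages stay satisfiable by the induction hypothesis together with disjointness of the auxiliary blocks, and so the only binding constraint on whether $x_j$ can be $1$ is satisfiability of $Q_j(a_1,\ldots,a_{j-1},\bar y^j)$, which by construction holds exactly when $a_j=1$. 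Consequently the $(m+1)$-st bit of the lex-max assignment is $1$ iff $M$ accepts $w$. Setting $k:=m+1$ yields a legal $\lexisat$-instance $(\phi,k)$ with $w\in L$ iff the $k$-th bit of the lex-max solution of $\phi$ is $1$, and $w\mapsto(\phi,k)$ is a polynomial-time reduction. Beyond the no-leakage verification, the remaining thing to get right is this validity-and-polynomiality accounting: that $k=m+1\le\log^i N_{\mathrm{var}}$ after merely polynomial padding, and that all $O(\log^i n)$ circuits $Q_j$, together with $Q_{\mathrm{Acc}}$ and the padding, are produced by one polynomial-time algorithm.
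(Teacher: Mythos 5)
Your proposal is correct and follows essentially the same route as the paper: the upper bound is the same greedy bit-by-bit computation with at most $O(\log^i n)$ adaptive \np-queries, and the hardness proof is the same Wagner-style adaptation in which the oracle-answer bits are placed as the most significant variables so that lexicographic maximality forces them to the true oracle answers, followed by padding with dummy variables to ensure $k\leq\log^i$ of the variable count. The only difference is cosmetic: you encode each stage by a separate Cook--Levin circuit $Q_j$ computing the $j$-th query from the previous answer bits, whereas the paper uses one global formula describing the run of the base machine together with run-descriptions of the nondeterministic oracle machine; both realize the same ``$z_j=1$ only if the $j$-th query is a yes-instance'' mechanism.
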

\begin{proof}
  The upper bound proof is immediate. In order to test whether the value of $x_1$  is 1
  in the lexicographically maximal solution it is enough to ask the oracle
  whether the Boolean formula $\phi(1,x_2,\ldots, x_n)$ has a solution or
  not. Depending on the result we continue with $\phi(1,x_2,\ldots, x_n)$  or
  $\phi(0,x_2,\ldots, x_n)$ and with $j$ calls to the oracle we learn this way the
  value of $x_1,\ldots,x_j$ in the lexicographically maximal solution.

  The proof for the lower bound is a straightforward adaptation of the proof in
  \cite{Wagner:BH} that \lexsat  is \pnp-complete.
  Let $A$ be any problem in \pnplogi, let $M$ be a deterministic
  polynomial-time Turing machine accepting $A$ using an oracle
  for a problem $B\in\np$, and let $M'$ be a non-deterministic
  polynomial-time Turing machine accepting $B$. Let 
  $f(n)= O(\log^i(n))$ be a function
  bounding the number of oracle queries asked by $M$ on an input of
  size $n$.

  Recall the textbook proof of NP-hardness of propositional
  satisfiability (cf., for example, \cite{WagnerWechsung}), which is
  based on efficiently constructing a Boolean formula describing runs of a given
  non-deterministic polynomial-time Turing machine. By the same construction, we
  can efficiently compute a Boolean formula
  $$H^n(\tuple{x},\tuple{u},\tuple{v}_1, \ldots, \tuple{v}_{f(n)}, z_1, \ldots, z_{f(n)},
  z)$$ (in fact, a conjunction of clauses with 3 literals per clause),
  parametrized by a natural number $n$,
  whose satisfying assignments describe the runs of $M$ on input words
  of length $n$,
  where
\begin{enumerate}
\item $\tuple{x}$ describes an input word of length at most $n$
\item $\tuple{u}$ describes the sequence of configurations of $M$
   during the run (including tape content, head position and state in
   each configuration),
\item $\tuple{v}_j$ describes the $j$-th query asked to the oracle,
\item $z_j$ describes the answer of the $j$-th query asked to the
  oracle (where $z=0$ means ``no'' and $z=1$ means ``yes''),
\item $z$ describes the result of the entire computation of $M$ (where $z=0$ 
  means ``reject'' and $z=1$ means ``accept'').
\end{enumerate}
The formula $H^n(\tuple{x},\tuple{u},\tuple{v}_1, \ldots, \tuple{v}_{f(n)}, z_1,
\ldots, z_{f(n)}, z)$ does \emph{not} enforce that each $z_j$ is the
\emph{correct} answer to the oracle query $\tuple{v}_j$. Thus, 
the formula may have several satisfying assignments with the same values
of $\tuple{x}$ (one for each
possible sequence of answers that the oracle may give).

In the same way, we can efficiently compute a Boolean formula 
$$G^n(\tuple{v},\tuple{y},z')$$ 
(in fact, a conjunction of clauses with 3 literals per clause),
  whose satisfying assignments describe all (not necessarily accepting) runs of $M'$ on input words
  of length $n$,
  where
\begin{enumerate}
\item $\tuple{v}$ describes an input word of length at most $n$
\item $\tuple{y}$ describes the sequence of configurations of $M'$
   during the run (including tape content, head position and state in
   each configuration),
\item $z'$ describes the result of the entire computation of $M$ (where $z=0$ 
  means ``reject'' and $z=1$ means ``accept'').
\end{enumerate}
Since $M'$ is non-deterministic, the formula
$G^n(\tuple{v},\tuple{y},z')$ may have many satisfying assignments with
the same values of $\tuple{v}$. 

Finally, for each input word $w$ given as a bitstring of length $n$, we define $\phi_w$ to
be the Boolean formula 
$$\phi_w = H^n(w,\tuple{u},\tuple{v}_1, \ldots, \tuple{v}_{f(n)}, z_1,
\ldots, z_{f(n)}, z)\land\bigwedge_{j=1\ldots f(n)} G^n(\tuple{v}_j,\tuple{y}_j,z_j)$$
Observe that $\phi_w$ only asserts that the $z_j$ is the result of 
\emph{some} run of the non-deterministic Turing machine $M'$ on input $\tuple{v}_j$. It
does not require that $z_j=1$ when $M'$ has an accepting run
 on  input $\tuple{v}_j$. Consequently, not every satisfying
 assignment
of $\phi_w$ describes the correct computation of $M$ on input
$\tuple{x}$.  However, it is easy to see that the lexicographically 
maximal satisfying assignment \emph{does}  describe the correct
computation, due to the fact that it makes $z_j=1$ whenever possible
(given the already obtained values for $z_\ell$ for $\ell<j$). Thus, 
we have
$z=1$ in the lexicographically maximal solution of
  $\phi_w$ if and only if $w\in A$.

  We order the variables in the formula $\phi_w$ so that $z_1, \ldots,
  z_{f(n)}, z$ come first (and in this order).  By construction, $z$
  is then the $f(n)+1$st variable of $\phi_w$.  Let $m$ be the total
  number of variables occurring in $\phi_n$ (which is bounded by
  some polynomial in $n$).  If $f(n)+1\leq \log^i m$, then
  $(\phi_w,m)$ is a valid input for the \lexisat problem, and we
  are done. Otherwise, we extend $\phi_w$  with 
  additional dummy variables, which serve no role other than making
  sure that $f(n)+1\leq \log^i m$. It is easy to see that this can
  always
  be done.
\end{proof}

\section{Model theory}\label{section-express}
\newcommand\type{\text{type}}
\newcommand{\unhom}{\to_{\textup{UN}}}
\newcommand{\unkhom}{\to_{\textup{UN}^k}}

In this section we give many key definitions, we show results about the
expressive power of \unfo and \unfp, and we show that \unfo has Craig
Interpolation and the Projective Beth Property.

\subsection{UN-bisimulations, the finite model property, and the
  tree-like model property} \label{sec:unbisimulations}

%
We define a game that captures model indistinguishability, and we use it to
characterize the expressive power of \unfo and \unfp.  The game is as follows:
the two players maintain a single pair $(a,b)$ of elements from the two
structures. A move of Abelard consists of choosing a set $X$ of points in one
of the two structures.  Then Eloise responds with a homomorphism $h$ from the
set $X$ into a set of points in the other structure, where the homomorphism maps
$a$ to $b$ (respectively $b$ to $a$) if $a$ (respectively $b$) belongs to the
set $X$.  Finally, Abelard picks a pair $(u,h(u))$ (respectively $(h(u),u)$) and the players
continue with that pair. The game is parametrized by the size of the sets
chosen by Abelard in each round.

Equivalently, we can present the game in terms of a back-and-forth
system:

\begin{definition}\label{def-bisimulation}
  Let $M,N$ be two structures. A UN-bisimulation (resp. a UN-bisimulation of width $k \geq 1$)
  is a binary relation $Z\subseteq M\times N$ such that the following hold for
  every pair $(a,b)\in Z$:

\begin{itemize}
\item {[{\bf Forward property}]} For every finite set $X\subseteq \dom(M)$ (resp. with $|X|\leq k$)
  there is a partial homomorphism $h:M\to N$ whose domain is $X$,
  such that $h(a)=b$ if $a\in X$, and such that every pair $(a',b')\in h$ belongs to $Z$.
\item {[{\bf Backward property}]} Likewise in the other direction, where
  $X\subseteq \dom(N)$.
\end{itemize}
We write $M\unbisim N$ if there is a non-empty UN-bisimulation between
$M$ and $N$, and we write $M\unkbisim N$ if there is a
non-empty UN-bisimulation of width $k$ between $M$ and $N$.
\end{definition}

It is not difficult to see that the existence of a UN-bisimulation implies
indistinguishability by \unfp sentences, and that the (weaker) existence of a
UN-bisimulation of width~$k$ implies indistinguishability in
$\unfp^k$. 

\begin{prop}\label{prop-unbisim}
  For any $k\geq 1$, if $M\unkbisim N$ then $M$ and $N$ satisfy the
  same sentences of $\unfp^k$. In particular, if $M\unbisim
  N$ then $M$ and $N$ satisfy the same sentences of $\unfp$.
\end{prop}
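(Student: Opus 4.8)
The plan is to prove the more general statement first and derive the width-$k$ case as the specialization obtained by bounding the size of Abelard's moves. Concretely, I would fix a non-empty UN-bisimulation $Z$ of width $k$ between $M$ and $N$, and prove by induction on the structure of formulas the following strengthened claim: for every subformula $\psi(\bar x, \bar X)$ occurring in some $\unfp^k$ sentence, and for every pair $(a,b)\in Z$ and every interpretation of the fixpoint parameters $\bar X$ by sets that are ``$Z$-invariant'' (i.e., if $(a',b')\in Z$ then $a'$ is in the $M$-interpretation of $X_i$ iff $b'$ is in the $N$-interpretation), whenever $\psi$ has at most one free variable we have $(M,\bar S, a)\models\psi$ iff $(N,\bar T, b)\models\psi$; and for $\psi$ with possibly several free variables we need the appropriate homomorphism-based statement, namely that a satisfying tuple on one side can be transported to the other via a partial homomorphism whose graph lies in $Z$. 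The reason we need this asymmetric induction hypothesis is exactly the reason UN-bisimulations are phrased in terms of homomorphisms: formulas with $\geq 2$ free variables are positive existential, so they are preserved under homomorphisms but not under $Z$ in both directions, whereas formulas with $\leq 1$ free variable, being closed under negation, are genuinely preserved in both directions.

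The inductive steps break down as follows. For an atom $R(\bar x)$ with $\geq 2$ variables, preservation under a partial homomorphism with graph in $Z$ is immediate from the definition of homomorphism. For a unary atom $P(x)$ or $X_i(x)$, the two-directional preservation follows from the first clause in the definition of UN-bisimulation (the $(a,b)\in Z$ implies same unary predicates — note this is hidden inside the Forward property applied to $X=\{a\}$, which forces $(a,b)\in h$, hence $a,b$ agree on all relations of arity $1$) together with $Z$-invariance of the $\bar X$. Conjunction and disjunction are routine once one is careful about which variant of the hypothesis is being used. The interesting case is $\exists x\,\psi$: if $(M,\bar S)\models\exists\bar y\,\psi(a,\bar y)$ via a witness tuple $\bar u$, then by the Forward property applied to the finite set $X=\{a\}\cup\{\bar u\}$ (whose size is at most $k$ in the width-$k$ game once the formula is in UN-normal form and has width $k$) we get a partial homomorphism $h$ with $h(a)=b$ and graph in $Z$; applying the induction hypothesis for formulas with several free variables transports satisfaction of $\psi$ to $(N,\bar T)$ with witnesses $h(\bar u)$. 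The negation case $\neg\psi(x)$ uses the two-directional hypothesis for the unary subformula $\psi(x)$. Finally, the fixpoint case $[\LFP_{X,x}\,\phi(X,\bar X,x)](y)$ is handled by an ordinal induction on the stages ${\mathcal O}_\phi^\lambda(\emptyset)$: at each stage the stage-set on the $M$-side and the corresponding stage-set on the $N$-side remain $Z$-invariant, which is exactly what lets us feed them back into the induction hypothesis for the body $\phi$ (a formula with one free variable); the limit stages are handled because $Z$-invariance is preserved under unions. Here one should note that the ordinal length of the two fixpoint computations may differ, but $Z$-invariance of every stage, plus the fact that $Z$ is total in the relevant sense on the reachable parts (again via the Forward/Backward clauses), is enough.

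The main obstacle, and the place where care is genuinely needed, is the bookkeeping around the UN-normal form and the width bound: Abelard is only allowed to pick sets of size $\leq k$, so in the $\exists$ case I must make sure that the set $\{a\}\cup\{\text{witnesses}\}$ I feed to the Forward property really has size at most $k$. This works because a width-$k$ formula, by definition, can be put in UN-normal form with at most $k$ first-order variables, so any maximal block of existential quantifiers binds at most $k$ variables and thus the relevant witness tuple lives in a set of size $\leq k$; hence one should first replace the given sentence by an equivalent UN-normal-form sentence of the same width and run the induction on that. Everything else is the routine but slightly delicate matter of threading the two forms of the induction hypothesis (the ``two-directional for arity $\leq 1$'' form and the ``homomorphism-transport for arity $\geq 2$'' form) through the connectives. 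The ``in particular'' clause is then immediate, since a UN-bisimulation is a UN-bisimulation of width $k$ for every $k$, and every $\unfp$ sentence has some finite width.
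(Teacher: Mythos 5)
Your proposal is correct and follows essentially the same route as the paper's proof: induction on formula structure in UN-normal form with the hypothesis stated for formulas with one free first-order variable and free monadic second-order parameters interpreted by $Z$-invariant sets, the existential block handled by feeding the ($\leq k$-sized) witness set to the forward property and letting the resulting partial homomorphism preserve the atoms while the induction hypothesis handles the unary subformulas, and the fixpoint case handled by transfinite induction on the stages ${\mathcal{O}_\phi}^\lambda(\emptyset)$. Your explicit bookkeeping of the two forms of the induction hypothesis and of including $a$ in the set passed to the forward property is slightly more careful than, but not materially different from, what the paper does.
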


\begin{proof}
  The second claim follows immediately from the first one, because 
  $M\unbisim N$ implies $M\unkbisim N$ for all $k\geq 1$. 

  The proof of the first claim is by induction on the nesting of fixpoints and existential
  quantification in the formula. We assume without loss of generality that all
  formulas are in \unnormalform.  It is convenient to state the induction
  hypothesis for $\unfo^k$-formulas $\phi(x)$ in one free first-order variable and several free
  monadic second-order variables.  The induction hypothesis then
  becomes: for all formulas
  $\phi(x,\tuple{Y})$ of width $k$, for all UN-bisimulations $Z$ of width $k$ between
  $(M,\tuple{P})$ and $(N,\tuple{Q})$, and for  all pairs $(a,b) \in Z$, we have
  $(M,\tuple{P},a)\models\phi$ iff $(N,\tuple{Q},b)\models\phi$.  We show only
  the important cases of the inductive step.  Let $M,N$ be two structures, $Z$
  be a UN-bisimulation of width $k$ between $M$ and $N$, $\tuple{P}$ and
  $\tuple{Q}$ be valuations of $\tuple{Y}$ respectively on $M$ and $N$, and
  $(a,b)\in Z$.

\begin{itemize}
\item $\phi(x,\tuple{Y})$ starts with an existential quantifier.  Then, by
  definition of \unnormalform, $\phi$ starts with a block of existential
  quantifications, followed by a Boolean combination of atomic formulas or
  formulas with at most one free first-order variable, i.e. again of the form
  $\psi(y,\tuple{Y})$.  Let $x_1, \ldots, x_n$ be the initial existentially
  quantified variables of $\phi$. In particular $n \leq k$.

  First, suppose $(M,\tuple{P},a)\models \phi$. Let $X=\set{a_1, \ldots, a_n}$
  be the quantified elements of $M$ witnessing the truth of $\phi$.  By the
  definition of UN-bisimulation, there is a homomorphism $h:M\restrict X \to N$
  such that $h(a)=b$ (if $a$ is in the domain of $h$) and such that
  $\{(a_i,h(a_i))\mid i\leq n\}\subseteq Z$. By induction hypothesis, a subformula
  $\psi(y,\tuple{Y})$ of $\phi$ is true on $(M,\tuple{P},a_i)$ iff it is true on
  $(N,\tuple{Q},h(a_i))$.  Hence the assignment that sends $x_1, \ldots, x_n$ to
  $h(a_1),\ldots, h(a_n)$ makes $\phi$ true on $(N,\tuple{Q},b)$. The opposite direction,
  from $(N,\tuple{Q},b)\models\phi$ to $(M,\tuple{P},a)\models\phi$, is symmetric.

 \item $\phi(x,\tuple{Y})$ is any Boolean combination of formulas of the form
   $\psi(y,\tuple{Y})$, the result is immediate from the induction hypothesis.

 \item $\phi(x,\tuple{Y})$ is of the form
   $[\LFP_{X,y}~\psi(X,\tuple{Y},y)](x)$. We proceed by induction on
   the the fixpoint iterations. Let $\mathcal{O}_{\phi,(M,\tuple{P})}$
   and $\mathcal{O}_{\phi,(N,\tuple{Q})}$ be be the monotone
   set-operations induced by $\phi$ on subsets of the domain of
   $(M,\tuple{P})$ and $(N,\tuple{Q})$, respectively, and let $\kappa
   = \max\{|M|,|N|\}$. Recall that the least fixpoint of 
$\mathcal{O}_{\phi,(M,\tuple{P})}$ is equal to 
${\mathcal{O}_{\phi,(M,\tuple{P})}}^\kappa(\emptyset)$, and similarly
for the least fixpoint of 
$\mathcal{O}_{\phi,(N,\tuple{Q})}$. A
   straightforward transfinite induction shows that, for all
   ordinals $\lambda$, and for all $(a,b)\in Z$, 
   $a\in {\mathcal{O}_{\phi,(M,\tuple{P})}}^\lambda(\emptyset)$ if and only if $b\in
   {\mathcal{O}_{\phi,(N,\tuple{Q})}}^\lambda(\emptyset)$.  We
   conclude
   that $(M,\tuple{P},a)\models  [\LFP_{X,y}~\psi(X,\tuple{Y},y)](x)$
   if and only if $(N,\tuple{Q},b)\models  [\LFP_{X,y}~\psi(X,\tuple{Y},y)](x)$.\qedhere
\end{itemize}
\end{proof}

\noindent Note that it is crucial, here, that we have defined width in terms of the
\unnormalform. For example, if $R$ is a binary relation, then 
the existence
of a cyclic directed $R$-path of length $k$ (i.e., a sequence of not
necessarily distinct nodes $a_1, \ldots, a_k$ with $R(a_i,a_{i+1})$
and $R(a_k,a_1)$) can be expressed in \unfo using
only 3 variables, by means of a careful reuse of variables, but the
formula in question would not be in \unnormalform. Indeed, 
the existence of a cyclic directed $R$-path of length $k$, for $k>3$, is not
preserved by UN-bisimulations of width~$k-1$. 

A similar invariance property holds for formulas with free
variables. For simplicity, we only state a version of the result without
reference to the width of formulas. 
\begin{definition}
 Let $M$ and $N$ be structures with the same signature. 
 A \emph{UN-homomorphism}
$h:M\to N$ is a homomorphism with the property that $(M,a)\unbisim
(N,h(a))$ for all $a\in \dom(M)$. We write $(M,\tuple{a})\unhom
(N,\tuple{b})$ if there is a UN-homomorphism $h:M\to N$ such that 
$h(\tuple{a})=\tuple{b}$. 
\end{definition}

\begin{prop}\label{prop-unhom}
If $(M,\tuple{a})\unhom (N,\tuple{b})$ and $M\models\phi(\tuple{a})$
then $N\models\phi(\tuple{b})$, for all \unfp-formulas
$\phi(\tuple{x})$. 
\end{prop}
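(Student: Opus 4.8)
The plan is to peel off the outermost structure of $\phi$ using the \unnormalform and then appeal to Proposition~\ref{prop-unbisim}. Since the claim is purely semantic, it suffices to establish it for some formula equivalent to $\phi(\tuple{x})$; so I would first bring $\phi$ into \unnormalform, writing it as $\exists\tuple{y}\,\gamma(\tuple{x},\tuple{y})$ where $\gamma$ is built using only $\wedge$ and $\vee$ from ``atoms'', each atom being either an atomic formula (a relation atom or an equality) over the variables in $\tuple{x}\cup\tuple{y}$, or a possibly-negated \unfp-formula $\psi(w)$ with a single free variable $w\in\tuple{x}\cup\tuple{y}$, itself in \unnormalform.

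The second ingredient is a pointed, one-free-variable reading of Proposition~\ref{prop-unbisim}: its proof in fact shows that whenever $(M,a)\unbisim(N,b)$, the pointed structures $(M,a)$ and $(N,b)$ satisfy exactly the same \unfp-formulas with at most one free variable, since the induction carried out there is for formulas $\psi(x,\tuple{Y})$ and a pair $(a,b)$ lying in the bisimulation (one applies it with the width parameter set to the width of the formula at hand, using that a full UN-bisimulation is in particular one of every width $k$). Note also that $\unbisim$ is symmetric, so this biconditional is available in both directions. I would flag this step — using the one-variable pointed form rather than the literally-stated sentence form — as the only point genuinely needing a word of justification.

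Given these, the argument is direct. Let $h:M\to N$ be a UN-homomorphism with $h(\tuple{a})=\tuple{b}$, and suppose $M\models\phi(\tuple{a})$, witnessed by a tuple $\tuple{c}$ in $M$ interpreting $\tuple{y}$, so that $M\models\gamma(\tuple{a},\tuple{c})$. I claim $N\models\gamma(\tuple{b},h(\tuple{c}))$, which yields $N\models\phi(\tuple{b})$ with witnesses $h(\tuple{c})$. As $\gamma$ is assembled from its atoms using only $\wedge$ and $\vee$, it is enough to check that every atom of $\gamma$ made true by the assignment $\tuple{x}\mapsto\tuple{a},\tuple{y}\mapsto\tuple{c}$ in $M$ is made true by $\tuple{x}\mapsto\tuple{b},\tuple{y}\mapsto h(\tuple{c})$ in $N$. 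For a relation atom or an equality this is immediate because $h$ is a homomorphism and a function. For an atom $\psi(w)$, let $d\in\dom(M)$ be the value assigned to $w$ (one of the $a_i$ or $c_j$); then $h(d)$ is the value assigned to $w$ on the $N$-side, and $(M,d)\unbisim(N,h(d))$ since $h$ is a UN-homomorphism, so by the one-free-variable form of Proposition~\ref{prop-unbisim} the truth of $\psi(d)$ in $M$ transfers to $\psi(h(d))$ in $N$. The one-directionality of the statement is exactly as expected: the asymmetry is confined to relation atoms, preserved only forward along $h$, and to the existential block, whose witnesses push forward through $h$ but need not pull back, whereas single-variable subformulas — negated or not — are governed by the two-way biconditional coming from $\unbisim$.
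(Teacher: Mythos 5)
Your proof is correct and follows essentially the same route as the paper's: the paper's own argument is a one-line appeal to Proposition~\ref{prop-unbisim} plus the observation that every \unfp-formula $\phi(\tuple{x})$ is a positive existential combination of atomic formulas and one-free-variable \unfp-formulas, which is exactly the decomposition you carry out in detail. Your flagged step is also fine: the induction in the proof of Proposition~\ref{prop-unbisim} is explicitly stated for formulas with one free first-order variable and pairs in the bisimulation, so the pointed form you need is available.
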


\begin{proof} Follows from Proposition~\ref{prop-unbisim}, together
 with the fact that positive existential  formulas are preserved
 by homomorphisms (note that every \unfp-formula $\phi(\tuple{x})$
 can be viewed as
 a positive existential formula built from atomic formulas and from \unfp-formulas in
 one free variable).  
\end{proof}

%
From the invariance for UN-bisimulation it follows by a standard infinite
unraveling argument that \unfp has the tree-like model property.  A more
involved partial unraveling, using back-edges in order to keep the structure
finite, can also be used to show that \unfo has the finite model property.  We
only state the results without giving the details of these constructions, as it turns out that both
results will follow from the material presented in Section~\ref{section-sat}.

\begin{theorem}\label{thm-finite-model-prop}
  Every satisfiable \unfo formula has a finite model.
\end{theorem}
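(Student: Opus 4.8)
The plan is to obtain this statement as a by-product of the satisfiability analysis of Section~\ref{section-sat} rather than by a direct unraveling. Given a satisfiable \unfo formula $\phi$, I would first bring it into \unnormalform of some \width $k$ and record its negation depth $d$. The heart of the matter is a \emph{finiteness-preserving} reduction from the satisfiability of $\phi$ over relational structures to the satisfiability of a formula $\phi^\sharp$ of \ml over Kripke structures: one codes a model of $\phi$ by a Kripke structure whose points are the ``bags'' of a tree-like model of width $k$ (i.e.\ tuples of at most $k$ elements together with the atoms holding among them), with accessibility relations linking overlapping bags, and whose propositional labels record which $\unfo^k$-subformulas of $\phi$ with at most one free variable hold at each coordinate of the bag. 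The decisive point for the present theorem is that, because $\phi$ is first-order (bounded negation depth, no fixpoints) and \ml already provides the global modality \jump, the formula $\phi^\sharp$ that one needs lives in \ml and not in $\ml_\mu$: searching the structure for the witnesses of an existential block is expressed with $\jump$, and the recursion on subformulas is of bounded depth $d$, so no genuine fixpoint is required. This is exactly the asymmetry that later forces the \unfp argument into $\ml_\mu$, which has no finite model property, whereas here we can invoke the finite model property of \ml from~\cite{FisherLadner}.

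The reduction has two directions. For the easy one, start from an arbitrary model of $\phi$, pass to a tree-like model of width $k$ (the tree-like model property), and code its tree decomposition as a Kripke structure satisfying $\phi^\sharp$; we do not care whether this coding is finite. For the hard direction, start from a Kripke model $K$ of $\phi^\sharp$ and \emph{decode} it into a relational structure $M_K$ whose domain consists of the coordinates of the bags coded by the points of $K$ and whose relations are read off from the labels; then $|M_K|\le k\cdot|K|$, so $M_K$ is finite whenever $K$ is. The substantial step is to check that $M_K\models\phi$, and this is where I expect the main obstacle to lie: gluing overlapping bags can create relational tuples that are not ``intended'', and while such extra tuples can only help the positive-existential parts of $\phi$, they could in principle falsify some negated subformula $\neg\psi(x)$. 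The remedy is precisely the invariance machinery of Section~\ref{section-express}: the labels of $K$ pin down at each point the $\unfo^k$-type in one free variable, and one shows by induction on negation depth, using Proposition~\ref{prop-unbisim} together with the fact that every point of $M_K$ is UN-bisimilar of width $k$ to the ``reference'' point recorded in $K$, that no spurious tuple can change the truth value of any relevant subformula.

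Putting the pieces together, the argument runs: $\phi$ satisfiable $\Rightarrow$ $\phi^\sharp$ satisfiable over Kripke structures $\Rightarrow$ (by the finite model property of \ml) $\phi^\sharp$ finitely satisfiable $\Rightarrow$ (by the decoding) $\phi$ has a finite model.

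An alternative, more self-contained route — the ``partial unraveling with back-edges'' alluded to above — would instead take an infinite tree-like model of $\phi$ directly, unravel it to a depth large enough that only finitely many isomorphism types of bags occur, and then fold it into a finite structure by identifying each sufficiently deep bag with a shallow one of the same type, redirecting the corresponding edges backwards. Its correctness runs into the same obstacle, the back-edges creating unintended tuples, and is again settled by Proposition~\ref{prop-unbisim}: one calibrates the depth threshold as a function of the \width and negation depth of $\phi$ so that the folded finite structure is UN-bisimilar of width $k$ to the original tree-like model, hence satisfies the same $\unfo^k$ sentences, and in particular $\phi$.
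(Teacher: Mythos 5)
Your overall strategy is the paper's: reduce satisfiability of a \unfo formula to satisfiability of a modal formula over Kripke structures, observe that the absence of fixpoints keeps the target in \ml rather than $\ml_\mu$, invoke the finite model property of \ml, and decode a finite Kripke model back into a finite relational structure. (The paper's encoding goes through ``stitch diagrams'' over an auxiliary signature of neighborhood types rather than directly through bags of a tree decomposition, but that difference is cosmetic.) The gap is in the decoding direction, precisely at the obstacle you flag. A finite model of $\phi^\sharp$ produced by the finite model property of \ml (e.g.\ by filtration) will in general have many short cycles; gluing overlapping bags along such cycles creates relational configurations in $M_K$ that span several bags and are recorded in none of them, so an existential block $\exists\tuple z\,(\tau(\tuple z)\land\bigwedge_j\phi_j(z_j))$ can become true at a point whose label says it is false, and a negated subformula fails. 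Your proposed remedy --- that each point of $M_K$ is UN-bisimilar of width $k$ to a reference point, so that Proposition~\ref{prop-unbisim} applies --- assumes exactly what is in doubt: the forward property of a width-$k$ UN-bisimulation requires every $k$-element configuration of $M_K$ to map homomorphically into the intended model, and the spurious configurations are precisely those that do not. No \ml formula can forbid them, since they are visible only along long cycles of $K$.

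The missing ingredient is Theorem~\ref{thm-otto-acyclic-cover} (Otto): every finite Kripke structure has a finite $l$-acyclic $\bisim$-cover. One must first replace the finite model of $\phi^\sharp$ by a finite cover whose girth exceeds the size of any connected configuration relevant to $\phi$ (the paper's $l$ is the maximal number of atoms in a neighborhood type); only then is every implicit fact of the decoded structure witnessed by a small \emph{acyclic} substructure, which the translated formula can and does enumerate (Lemma~\ref{claim-implicit}, Lemma~\ref{lem:maintranslation}, and items (3) and (4) of Proposition~\ref{prop-sunfp}). Your alternative ``fold the unraveling with back-edges'' route needs the same girth control and amounts to a hand-rolled special case of Otto's construction; calibrating the depth threshold alone does not suffice, since identifying deep bags with shallow ones of the same type can still create short cycles unless the folding is arranged with care. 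So the architecture is right, but without the large-girth cover the decoding step does not go through.
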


\begin{theorem}\label{thm-tree-like-prop}
 Every satisfiable \unfp formula of \width $k$  has a model 
 of tree-width $k-1$.
\end{theorem}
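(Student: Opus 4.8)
The plan is to prove the tree-like model property for $\unfp^k$ by an infinite unraveling argument, mirroring the classical construction for the $\mu$-calculus but adapted to the fact that UN-bisimulation quantifies over sets of up to $k$ elements rather than single successors. Suppose $\phi$ is a satisfiable $\unfp$ formula of width $k$, witnessed by a model $M$ with a distinguished element, say $a_0$ (if $\phi$ is a sentence we may add a dummy free variable, or simply pick an arbitrary element). We may assume $\phi$ is in \unnormalform. We build a new structure $M^*$ whose elements are finite sequences $(a_0, S_1, b_1, S_2, b_2, \ldots, S_n, b_n)$ where each $S_i \subseteq \dom(M)$ has $|S_i| \leq k$, each $b_i \in S_i$, and $b_{i-1} \in S_i$ whenever the \unnormalform requires the ``current'' element to be reused (so that the homomorphism condition of Definition \ref{def-bisimulation} can be met). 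A sequence's ``current element'' is its last component $b_n$. We put a tuple of such sequences into a relation $R^{M^*}$ precisely when they share a common final-but-one ``set stage'' $S_n$, their current elements $b_i^{(j)}$ form a tuple in $R^M$, and they agree on the entire prefix up to that stage; this guarantees that every atom of $M^*$ is ``created'' by a single set of size $\leq k$ chosen at some stage, so the Gaifman graph of $M^*$ decomposes into bags of size $\leq k$ arranged in a tree, i.e.\ $M^*$ has tree-width $\leq k-1$.

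The two things to verify are that $M^*$ really has tree-width $k-1$ and that $M^* \unkbisim M$, so that by Proposition~\ref{prop-unbisim} $M^*$ still satisfies $\phi$. For the tree-width bound, the tree decomposition has one bag per finite sequence $s = (a_0, S_1, b_1, \ldots, S_n)$, containing the $\leq k$ elements of $M^*$ of the form $s\cdot(b)$ for $b\in S_n$ (together with the parent's current element, which by construction already lies in $S_n$); the tree is the prefix order on sequences. One checks the connectivity condition: each element $s\cdot(b)$ of $M^*$ occurs in the bag for $s$ and the bags for its one-step extensions, which form a connected subtree. For the bisimulation, take $Z = \{(s, b) : s \in \dom(M^*),\ b = \text{current element of } s\}$ together with the seed pair $((a_0), a_0)$. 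The backward property is easy: the projection sending each sequence to its current element is a homomorphism $M^* \to M$ compatible with $Z$. The forward property is where the construction was designed to work: given the current element $b$ of a sequence $s$ and a set $X \subseteq \dom(M)$ with $|X|\leq k$ and $b\in X$ (or not), we append the stage $S = X$ to $s$ and send each $x\in X$ to $s\cdot(x)$; this is a partial homomorphism into $M^*$ by the definition of the relations of $M^*$, it respects $Z$, and it maps $b$ to $s\cdot(b)$, an extension of $s$, as required.

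The main obstacle is bookkeeping the ``reuse of the current element'' correctly so that the forward property holds on the nose while keeping the bags of size exactly $k$ rather than $k+1$. The subtlety is that Definition~\ref{def-bisimulation} demands, from a pair $(a,b)\in Z$, a homomorphism on a set $X$ of size $\leq k$ with $a\mapsto b$ when $a\in X$ — so the ``old'' current element must be allowed to sit inside the new set $X$ without inflating it. This is exactly why \unnormalform matters: a width-$k$ formula in \unnormalform only ever quantifies a block of $\leq k$ variables on top of at most the single inherited free variable, and one of those $k$ slots can be the inherited one. Provided the unraveling is set up so that the parent's current element is always among the $\leq k$ elements of the child stage (reusing a slot rather than adding one), the bags stay of size $k$ and tree-width $k-1$ is achieved; getting this accounting right, and checking that the infinite iteration still converges to a genuine structure with well-defined fixpoint semantics (which is automatic, since Proposition~\ref{prop-unbisim} already handles fixpoints via transfinite induction), is the only delicate point. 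As the authors note, this also follows from the automata-theoretic constructions of Section~\ref{section-sat}, so full details are deferred there.
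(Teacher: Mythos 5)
Your route --- a direct unraveling of a model of $\phi$ along sets of size at most $k$ --- is genuinely different from the paper's proof. The paper never carries out the unraveling it alludes to in Section~\ref{sec:unbisimulations}; instead it proves Theorem~\ref{thm-tree-like-prop} at the end of Section~\ref{section-sat} by translating $\phi$ into a \sunfp formula $\phi'$ over stitch diagrams, invoking Item~(2) of Proposition~\ref{prop-sunfp} (which rests on the tree-model property of the two-way $\mu$-calculus) to obtain an acyclic stitch diagram $N$, and observing that $\stitch{N}$ has tree-width $k-1$. Your approach is more self-contained, needing only Proposition~\ref{prop-unbisim}; the paper's approach pays for itself by simultaneously yielding the finite model property and the complexity bounds. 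So the unraveling is a legitimate alternative --- but as written it has a gap.

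The gap sits exactly at the point you flag as ``the only delicate point,'' and you then assert the wrong resolution of it. With $Z\subseteq \dom(M^*)\times\dom(M)$ and a pair $(s,b)\in Z$, the backward property of Definition~\ref{def-bisimulation} demands, for each $X\subseteq\dom(M)$ with $|X|\le k$ and $b\in X$, a partial homomorphism $h$ with domain $X$ satisfying $h(b)=s$ --- the inherited element must be sent to the node $s$ \emph{itself}. You send $b$ to the fresh copy $s\cdot(b)$ and declare this ``as required''; it is not. With that choice the witnessing $R^{M^*}$-tuples created at the new stage contain $s\cdot(b)$ rather than $s$, so an existential subformula whose distinguished variable is the inherited one is witnessed at $s\cdot(b)$ but not at $s$, and the transfer of $\phi$ from $M$ to $M^*$ via Proposition~\ref{prop-unbisim} (whose proof uses $h(a)=b$ precisely to place the witness tuple at the pebbled element) breaks down. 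The same defect costs you the tree-width bound: if $s$ and $s\cdot(b)$ are distinct, the new bag needs $k+1$ elements to remain connected to its parent's bag. The fix is standard but must actually be built into the construction: when the new set $X$ contains the current element $b$ of $s$, create no fresh copy of $b$; the new bag is $s$ itself together with the fresh copies $s\cdot(x)$ for $x\in X\setminus\{b\}$, and the $R^{M^*}$-tuples of that stage are formed from exactly these nodes. Then $h(b)=s$ holds literally, the bags have at most $k$ elements, and the connectivity condition of the tree decomposition is satisfied. Finally, deferring ``full details'' to Section~\ref{section-sat} is not available to you: that section contains a different proof, not a completion of yours.
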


Note, that \unfp does \emph{not} have the finite model
property. This follows from the fact that \unfp contains the two-way
$\mu$-calculus which is known to lack the finite model
property~\cite{Streett82}. Indeed, if $\max(x)$ is shorthand for $\neg\exists y ~E(x,y)$, then 
the formula 
\[ \exists x \max(x) \lor \exists x \neg[\LFP_{X,y}~\neg\exists
z(E(z,y) \land \neg X(z))](x)\]
expresses the property that either there exists a maximal element or there is
an infinite backward path. This formula is therefore obviously false in
the infinite structure $(\mathbb{N},suc)$. However it holds on any finite
structure as if a finite structure has no maximal elements,
it must contain a cycle, and hence an infinite backward path. The negation of
this sentence is satisfiable, by $(\mathbb{N},suc)$, but has no finite model.

\subsection{Characterizations}
%
We have seen in Proposition~\ref{prop-unbisim} that \unfo sentences are
first-order formulas that are preserved under \unbisim-equivalence.  It turns out
that the converse is also true. Indeed, in the same way that
bisimulation-invariance characterizes modal
logic~\cite{bent:moda83,rose:moda97} and guarded bisimulation-invariance
characterizes the guarded fragment of \fo~\cite{andr:moda98,Otto12}, we will see that
$\unbisim$-invariance characterizes $\unfo$.  We show two variants of this
result depending on whether we consider finite or infinite structures.
It turns out that the proof for the finite case also works for the infinite
case. However we give an independent proof for the infinite case as it is simpler
and introduces techniques that will be useful later when considering Craig
Interpolation.

We say that a \fo sentence is \emph{\unbisim-invariant} if for all structures
$M$ and $N$ such that $M\unbisim N$, we have $M\models\phi$ iff
$N\models\phi$. The notion of $\unkbisim$-invariance is defined similarly.

\begin{theorem}\label{thm-charac-fo}
A sentence of \fo is equivalent to a formula of \unfo iff it is
$\unbisim$-invariant.

For all $k\geq 1$, a sentence of \fo is equivalent to a formula of $\unfo^k$ iff it is
$\unkbisim$-invariant.
\end{theorem}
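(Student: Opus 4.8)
The plan is to prove the characterization via the usual route: one direction is already done (Proposition \ref{prop-unbisim} gives that every $\unfo$ sentence, resp. $\unfo^k$ sentence, is $\unbisim$-invariant, resp. $\unkbisim$-invariant), so I concentrate on the converse. Let $\phi$ be a $\fo$ sentence that is $\unbisim$-invariant (the $\unkbisim$ case is analogous, tracking widths throughout). The goal is to produce an equivalent $\unfo$ formula. The standard strategy is a compactness argument: show that $\phi$ is a logical consequence of the set $T_\phi$ of all $\unfo$ sentences implied by $\phi$; then by compactness $\phi$ follows from a finite subset, hence from a single $\unfo$ sentence (a conjunction), and that sentence is in turn implied by $\phi$, so it is equivalent to $\phi$. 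To establish $T_\phi \models \phi$, one takes a model $M \models T_\phi$ and must find a model $N \models \phi$ with $M \equiv_{\unfo} N$ (same $\unfo$ theory); by a suitable Ehrenfeucht-Fraïssé-style saturation argument one upgrades $M \equiv_{\unfo} N$ to the existence of a $\unbisim$ between (elementary extensions of) $M$ and $N$, and then $\unbisim$-invariance of $\phi$ transfers $\phi$ from $N$ to $M$.

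The key technical step is the passage from "same $\unfo$-theory" to "$\unbisim$-related": I would move to $\omega$-saturated elementary extensions $M^* \succ M$ and $N^* \succ N$, and then verify directly that the relation
\[
Z = \{(a,b) \in \dom(M^*)\times\dom(N^*) \mid (M^*,a)\equiv_{\unfo}(N^*,b)\}
\]
(and for the sentence-level claim, its nonemptiness witnessed by the empty-tuple equivalence $M^*\equiv_{\unfo}N^*$) is a $\unbisim$ in the sense of Definition \ref{def-bisimulation}. This means checking the forward property: given $(a,b)\in Z$ and a finite set $X \subseteq \dom(M^*)$, produce a partial homomorphism $h: X \to N^*$ with $h(a)=b$ (if $a\in X$) and all pairs in $Z$. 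Here is where the shape of $\unfo$ is used crucially: because $\unfo$ can express, via a single existentially quantified positive formula $\exists \bar z\, \chi(x,\bar z)$, that a point lies in a homomorphic image of any given finite "pattern" with marked element, and because $\unfo$ in one free variable can describe arbitrarily rich $\unfo$-types of the individual target points, the $\unfo$-equivalence $(M^*,a)\equiv_{\unfo}(N^*,b)$ plus $\omega$-saturation of $N^*$ yields exactly such an $h$ realizing the prescribed $\unfo$-types. I would make this precise by noting that for each finite $X=\{a_1,\dots,a_n\}$ (with, say, $a=a_1$) and each finite collection of $\unfo$-formulas, the statement "there exist $z_1,\dots,z_n$ with $z_1 = x$, realizing the atomic diagram of $X$ and the given $\unfo$-formulas at each $z_i$" is itself an $\unfo$ formula in $x$ true at $a$ in $M^*$; hence true at $b$ in $N^*$; a saturation/compactness argument then lets one realize the full $\unfo$-types simultaneously, giving $h$.

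I expect the main obstacle to be exactly the verification that $Z$ satisfies the forward (and backward) properties — in particular, being careful that $\unfo$ really is expressive enough to pin down the needed information: it can assert the existence of a homomorphic image of a finite pattern (a positive existential formula, which is an $\unfo$ formula), and independently it can assert, for the chosen element $x$, any $\unfo$-describable property in one free variable, but it \emph{cannot} assert inequalities among the witnesses $z_i$. One must check that this is not a problem, i.e. that Definition \ref{def-bisimulation} asks only for a homomorphism (not an embedding) and that collapsing witnesses is harmless because any genuine homomorphism respecting the required $\unfo$-types on its image already suffices. A secondary point is the $\unkbisim$ refinement: throughout, one restricts to $\unfo^k$-types and to sets $X$ of size at most $k$, and one must recall (as emphasized in the remark after Proposition \ref{prop-unbisim}) that width is measured via \unnormalform, so the existential blocks of length $\le k$ in $\unfo^k$ line up precisely with the size-$\le k$ sets Abelard is allowed to choose; with that alignment the same argument goes through verbatim. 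Finally, for the quantitative bookkeeping it is cleanest to phrase the saturation step so that $\omega$-saturation suffices, and to observe that passing to elementary extensions preserves both the $\unfo$-theory and (for the finite-structure half of later results, not needed here) can be avoided when one instead argues with $\unbisim^l$-approximants — but for this theorem the infinitary compactness route is the simplest, and the authors indeed signal that a separate, simpler proof is given for the infinite case.
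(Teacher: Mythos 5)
Your proposal is correct and follows essentially the same route as the paper: the converse direction is reduced, via a standard compactness argument, to the key lemma that on $\omega$-saturated structures the relation of $\unfo$-equivalence (resp.\ $\unfo^k$-equivalence) of points is a UN-bisimulation (of width $k$), proved exactly as you sketch by packaging a finite pattern together with finitely many unary $\unfo$-types into a single positive-existential $\unfo$ formula and then saturating. Your remarks on the harmlessness of collapsing witnesses (homomorphism, not embedding) and on aligning width-$k$ existential blocks with size-$k$ sets match the paper's Lemma~\ref{lem:saturated} and the surrounding discussion.
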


Before proving Theorem~\ref{thm-charac-fo} we state and prove the
following useful lemma. We write $(M,a)\equiv_{\unfo} (N,b)$ if
$(M,a)$ and $(N,b)$ satisfy the same \unfo-formulas. We define
$(M,a)\equiv_{\unfo^{k}} (N,b)$ similarly.  This lemma makes use of the
classical notion of $\omega$-saturation. The actual definition is not
needed here and the interested reader is referred
to~\cite{hodg:mode93}. For our purpose it is enough to know the
following two key properties.
\begin{enumerate}
\item  If a (possibly infinite) set of
first-order formulas is satisfiable, then it is satisfied by a model
that is $\omega$-saturated.
\item Let $M$ be $\omega$-saturated, let $a_1, \ldots, a_m\in
  \dom(M)$, and let $T(x_1, \ldots, x_n)$ is an infinite set of
  first-order formulas with free variables $x_1, \ldots, x_n$ and
  using $a_1, \ldots, a_m$ as parameters. If every finite subset
  $T'$ of $T$ is realized in $M$ (meaning that $(M,\tuple{b},\tuple{a})
  \models T(\tuple{x})$ for some $\tuple{b}=b_1, \ldots, b_n\in \dom(M)$), then the entire set $T$ is realized in $M$.
\end{enumerate}

\begin{lemma}\label{lem:saturated}
For all $\omega$-saturated structures $M$ and $N$ with elements
  $a$ and $b$, respectively, the following hold.
\begin{enumerate}
\item The relation $\{(a,b)\mid (M,a)\equiv_{\unfo}(N,b)\}$ is a
  UN-bisimulation.
\item  The relation $\{(a,b)\mid (M,a)\equiv_{\unfo^k}(N,b)\}$ is a
  UN-bisimulation of width $k$ ($k\geq 1$).
\end{enumerate}
\end{lemma}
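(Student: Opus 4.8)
The plan is to verify directly that the relation $Z = \{(a,b)\mid (M,a)\equiv_{\unfo}(N,b)\}$ satisfies the Forward and Backward properties of Definition~\ref{def-bisimulation}; the width-$k$ case is entirely analogous, tracking the bound $|X|\le k$ throughout. By symmetry it suffices to establish the Forward property. So fix $(a,b)\in Z$ and a finite set $X=\{a_1,\dots,a_n\}\subseteq\dom(M)$, say with $a=a_1$ when $a\in X$ (the case $a\notin X$ is handled the same way, just dropping the constraint $h(a)=b$). I need to produce a partial homomorphism $h\colon M\to N$ with domain $X$, with $h(a_1)=b$, and with $(a_i,h(a_i))\in Z$ for all $i$.

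The first step is to write down the \unfo-type we want to realize in $N$. Consider the set $T(y_1,\dots,y_n)$ consisting of: every atomic formula $R(y_{i_1},\dots,y_{i_m})$ (for relations $R$ of the schema, including equalities $y_i=y_j$) that holds of the tuple $(a_1,\dots,a_n)$ in $M$; and, for each $i$, every \unfo-formula $\psi(y_i)$ in one free variable such that $(M,a_i)\models\psi$. Note that the atomic part uses only positive atoms and equalities — no negated atoms — which is exactly what is needed so that an assignment satisfying $T$ gives a \emph{homomorphism} (not necessarily an embedding): a tuple $(b_1,\dots,b_n)$ in $N$ realizing $T$ automatically satisfies $(b_1,\dots,b_n)\in R^N$ whenever $(a_1,\dots,a_n)\in R^M$, and we may then set $h(a_i)=b_i$. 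The single-variable part of $T$ guarantees that $(M,a_i)\equiv_{\unfo}(N,b_i)$, i.e.\ $(a_i,b_i)\in Z$; and since the constant formulas true at $a_1$ (seen through the free variable $y_1$) together with $\equiv_{\unfo}$-equivalence pin down $b$, a small additional argument — or simply adjoining to $T$ the requirement, phrased via a fresh parameter, that $y_1$ be mapped to $b$ — ensures $h(a_1)=b$. Concretely, since $(M,a)\equiv_{\unfo}(N,b)$, every finite conjunction from the single-variable part of $T$ restricted to $y_1$ is realized by $b$ in $N$; I will use $b$ itself as a parameter and include ``$y_1 = b$'' so that realizing $T$ forces $h(a_1)=b$.

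The second step is to invoke $\omega$-saturation of $N$. By property~(2) of $\omega$-saturation, to conclude that the infinite set $T$ is realized in $N$ it suffices to show that every finite subset $T'\subseteq T$ is realized in $N$. Given such a finite $T'$, form the single \unfo-formula $\chi := \exists y_2\cdots\exists y_n\,\bigwedge T'$ (quantifying out $y_2,\dots,y_n$ but keeping $y_1$ free; note $\bigwedge T'$ is a conjunction of positive atoms, equalities, and single-variable \unfo-formulas, so $\chi(y_1)$ is a legitimate \unfo-formula — no illegal negation of a multi-variable subformula appears). Since $(a_1,\dots,a_n)$ witnesses $\chi$ in $M$ (at $y_1=a_1$), we have $(M,a)\models\chi$, hence by $(a,b)\in Z$ also $(N,b)\models\chi$, which says precisely that $T'$ is realized in $N$ by some tuple with $y_1=b$. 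Applying property~(2), the full $T$ is realized in $N$ by some $(b_1,\dots,b_n)$ with $b_1=b$; setting $h(a_i)=b_i$ yields the desired partial homomorphism.

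The only genuinely delicate point is making sure that the formula $\chi$ we extract is actually in \unfo — i.e.\ that existentially quantifying the auxiliary variables never forces us to negate a formula with more than one free variable. This is automatic because $T$ contains only positive atomic assertions, equalities, and single-variable \unfo-formulas, so $\bigwedge T'$ is positive-existential over these building blocks and prenexing the existentials stays within \unfo (indeed within \unnormalform, cf.\ the rewrite rules in Section~\ref{section-prelim}); in the width-$k$ version we additionally observe $|X|\le k$ caps the number of quantified variables, so $\chi\in\unfo^k$ and property~(2) is applied to $\omega$-saturated structures as before, using $\equiv_{\unfo^k}$ in place of $\equiv_{\unfo}$. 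This completes the plan.
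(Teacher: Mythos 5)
Your proposal is correct and follows essentially the same route as the paper's proof: both build the type consisting of the positive atomic facts of the chosen tuple together with the unary $\unfo$ (resp.\ $\unfo^k$) formulas true at each of its elements, observe that each finite subset yields a legitimate $\unfo$-formula satisfied at $a$ and hence at $b$, and then invoke $\omega$-saturation with $b$ as a parameter to realize the whole type, reading off the partial homomorphism. The only cosmetic difference is that the paper presents the width-$k$ case and declares the unrestricted case analogous, while you do the reverse.
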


\begin{proof}
  We prove the second claim. The proof of the first claim is similar.
  Let $Z = \{(a,b)\mid (M,a)\equiv_{\unfo^k}(N,b)\}$.
  We show that $Z$ satisfies the forward property, the proof of the
  backward property is analogous. 

  Suppose $(c,d)\in Z$ and let $X\subseteq \dom(M)$ with $|X|\leq k$.  We can
  distinguish two cases: either $c\in X$ or $c\not\in X$. We will consider the
  first case (the second case is simpler). Thus, let $X=\{c,c_1, \ldots, c_n\}$
  ($n<k$). Let $T[x_1, \ldots, x_n]$ be the set of all formulas $\phi(x,x_1,
  \ldots, x_n)$ that are positive Boolean combinations of atomic formulas or
  unary formulas of $\unfo^k$ and that are true in $(M,c, c_1, \ldots, c_n)$. We
  view $T$ as an $n$-type with one parameter. 

  Notice that by construction of $T$, for each finite subset $T'$ of $T$ the
  formula $\exists x_1\ldots x_n(\bigwedge T')$ is in $\unfo^k$ and is
  satisfied by $(M,c)$. By hypothesis this formula is therefore also satisfied
  by $(N,d)$. Since $N$ is $\omega$-saturated (and treating $T$ as an $n$-type
  with parameter $d$), it follows that the entire set $T[x,x_1, \ldots, x_n]$
  is realized in $N$ under an assignment $g$ that sends $x$ to $d$.  This
  implies that the function $h$ sending $c$ to $d$ and $c_i$ to $g(x_i)$ is
  a homomorphism such that, for all $i$, $c_i$ and $h(c_i)$ satisfy the same
  formulas of $\unfo^k$ and therefore $(c_i,h(c_i)) \in Z$ by definition
  of~$Z$.
\end{proof}

\begin{proof}[Proof of Theorem~\ref{thm-charac-fo}]
  One direction follows from Proposition~\ref{prop-unbisim}. For the other
  direction, we only give the proof for the case of $\unfo^k$, the argument for
  full $\unfo$ being identical.

  Let $\phi$ be any $\unkbisim$-invariant \fo sentence. We want to
  show that $\phi$ is equivalent to a $\unfo^k$-sentence.

  We first show that whenever two structures agree on all sentences of
  $\unfo^{k}$, they agree on $\phi$. Suppose $M$ and $N$ satisfy the same
  sentences of $\unfo^{k}$. Without loss of generality we can assume that $M$
  and $N$ are $\omega$-saturated. Define $Z\subseteq M\times N$ as the set of
  all pairs $(a,b)$ such that $(M,a)$ and $(N,b)$ satisfy the same
  $\unfo^k$-formulas. By Lemma~\ref{lem:saturated}, $Z$ is a UN-bisimulation of
  width $k$. We claim that $Z$ is non-empty. Let $a$ be any element of $M$, and
  let $\Sigma(x)$ be the set of all $\unfo^k$-formulas with one free variable,
  true for $a$ on $M$. Notice that for every finite subset $\Sigma'$ of $\Sigma$,
  the formula $\exists x ~\bigwedge\Sigma'$ is a sentence of $\unfo^{k}$ that is
  satisfied by $M$. Hence, by hypothesis, it is also satisfied in
  $N$. Therefore, by $\omega$-saturation, the entire set $\Sigma(x)$ is
  realized by an element $b$ in $N$, and hence $(a,b)\in Z$, which implies that
  $Z$ is non-empty. This implies that $M \unkbisim N$. Assume now that
  $M\models \phi$. As $\phi$ is $\unkbisim$-invariant and $M \unkbisim N$, this
  implies that $N \models \phi$. By symmetry we get $M\models \phi$ iff
  $N\models \phi$ as desired.

  The rest of the proof is a well known argument using Compactness: If $\phi$
  is not satisfiable then $\phi$ is equivalent to the $\unfo^1$ sentence
  ``false''. Otherwise let $M \models \phi$ and let $\Theta$ be the set of all
  $\unfo^k$ sentences $\theta$ such that $M \models \theta$. We show that
  $\Theta \models \phi$ (i.e. any model of $\Theta$ is a model of $\phi$). If
  this were not the case then we have a structure $N$ such that $N\models
  \Theta\land\lnot\phi$. But because $\Theta$ contains each $\unfo^k$ sentence
  or its negation we have $M\unkbisim N$ and $M,N$ disagree on $\phi$. This
  contradict the claim of the previous paragraph.

  By compactness, there is a finite subset $\Theta'$ of $\Theta$ such that
  $\Theta' \models \phi$. By construction, this implies that $\phi$ is equivalent to the
  conjunction of all the sentences in $\Theta'$.
\end{proof}

Before we turn to the finite variant of Theorem~\ref{thm-charac-fo},
we remark
that a similar characterization can be obtained for formulas with free
variables, using UN-homomorphisms instead of UN-bisimulations.

\begin{theorem}\label{thm-charac-fo-general}
A formula of \fo with free variables is equivalent to a formula of \unfo iff it is
preserved under UN-homomorphisms.
\end{theorem}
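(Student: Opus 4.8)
The statement to prove is Theorem~\ref{thm-charac-fo-general}: a first-order formula $\phi(\tuple{x})$ with free variables is equivalent to a \unfo-formula iff it is preserved under UN-homomorphisms. One direction is already in hand: Proposition~\ref{prop-unhom} gives that every \unfo-formula is preserved under UN-homomorphisms. So the work is the converse: assuming $\phi(\tuple{x})$ is preserved under UN-homomorphisms, produce an equivalent \unfo-formula. The plan is to mimic the structure of the proof of Theorem~\ref{thm-charac-fo}, but replacing $\unbisim$ with $\unhom$ and adapting the saturation lemma accordingly. The key reduction step is: show that whenever $(M,\tuple{a})$ and $(N,\tuple{b})$ agree on all \unfo-formulas with free variables $\tuple{x}$, they agree on $\phi$. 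Then a routine Compactness argument (exactly as in the last two paragraphs of the proof of Theorem~\ref{thm-charac-fo}) packages this into an explicit equivalent \unfo-formula: take $\Theta(\tuple{x})$ to be the set of all \unfo-formulas true of $(M,\tuple{a})$, argue $\Theta\models\phi$, and extract a finite subconjunction by compactness.

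**The saturation step.** To carry out the reduction, assume $(M,\tuple{a})\equiv_{\unfo}(N,\tuple{b})$ with $M,N$ $\omega$-saturated (legitimate by key property~(1) of saturation, applied to the theory asserting the $\unfo$-type of $\tuple{a}$ over an expanded signature with constants for $\tuple{a}$). I would then build a UN-homomorphism out of both structures into a common ``amalgam'', but a cleaner route, closer to the existing text, is the following: introduce a disjoint-union-style structure and show directly that $(M,\tuple{a})\unhom (P,\tuple{c})$ and $(N,\tuple{b})\unhom (P,\tuple{c})$ for a suitable $\omega$-saturated $P$, where the two images $\tuple{c}$ coincide because $\tuple{a}$ and $\tuple{b}$ have the same positive-existential \unfo-type. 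Concretely: let $P$ be an $\omega$-saturated elementary extension of the disjoint union $M\uplus N$ (or just an $\omega$-saturated model of the common \unfo-theory realized appropriately). The inclusion maps $M\hookrightarrow P$ and $N\hookrightarrow P$ are homomorphisms; to see they are UN-homomorphisms, one checks $(M,a')\unbisim (P,a')$ for each $a'$, which follows because $P$ is an elementary extension — and elementary equivalence of pointed structures, between $\omega$-saturated models, yields a UN-bisimulation by Lemma~\ref{lem:saturated}(1) (ensuring $M$ is $\omega$-saturated too). Finally $\tuple{a}$ and $\tuple{b}$ map to tuples satisfying the same positive atomic diagram entries via \unfo-equality statements, so after an additional identification/fusion step — or by noting that preservation under UN-homomorphisms only needs the images to be hit by a homomorphism, not that they be equal — one concludes $M\models\phi(\tuple{a})\Rightarrow P\models\phi(\tuple{c})$ and symmetrically $N\models\phi(\tuple{b})\Leftarrow P\models\phi(\tuple{c})$, hence $M\models\phi(\tuple{a})\Leftrightarrow N\models\phi(\tuple{b})$.

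**The main obstacle.** The delicate point is exactly the handling of the free variables: a UN-homomorphism is required to \emph{send} $\tuple{a}$ to $\tuple{b}$, so I cannot simply reuse the bisimulation from Lemma~\ref{lem:saturated} — I need the tuple to be tracked, and moreover the \emph{positive atomic} relationships among $\tuple{a}$ (which atoms $R(\ldots)$ hold among the components) must be reflected in the image, since homomorphisms preserve atoms in one direction only. The clean way to see this is legitimate is the remark made just after Proposition~\ref{prop-unhom} in the excerpt: every \unfp (hence \unfo) formula with free variables $\tuple{x}$ is a positive-existential combination of atomic formulas in $\tuple{x}$ and \unfo-formulas in one free variable; the one-variable parts are controlled by the UN-bisimulation from the $\omega$-saturated setup, and the atomic parts among $\tuple{x}$ are controlled because, over an $\omega$-saturated structure, the \unfo-type of $\tuple{a}$ includes all positive atomic facts $R(a_{i_1},\dots,a_{i_n})$ — these are themselves (variable-free-after-substitution) \unfo-formulas, so equality of \unfo-types forces the same positive atomic facts on $\tuple{b}$, which is exactly what is needed to build the homomorphism. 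Once this bookkeeping is set up, the Compactness wrap-up is verbatim the argument from the proof of Theorem~\ref{thm-charac-fo}, and no further ideas are required.
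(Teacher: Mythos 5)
There is a genuine gap, and it is in the central step. Your amalgamation sends \emph{both} structures into a common $P$: you get UN-homomorphisms $(M,\tuple{a})\unhom(P,\tuple{c})$ and $(N,\tuple{b})\unhom(P,\tuple{c})$. Preservation under UN-homomorphisms only pushes $\phi$ \emph{forward} along such maps, so from these you can conclude $M\models\phi(\tuple{a})\Rightarrow P\models\phi(\tuple{c})$ and $N\models\phi(\tuple{b})\Rightarrow P\models\phi(\tuple{c})$ --- both implications point into $P$, and the step you label ``symmetrically $N\models\phi(\tuple{b})\Leftarrow P\models\phi(\tuple{c})$'' is exactly the converse of what the hypothesis supplies. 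To transfer $\phi$ from the $M$-side to the $N$-side you need a UN-homomorphism going \emph{from} (something elementarily equivalent to) $(M,\tuple{a})$ \emph{to} (something elementarily equivalent to) $(N,\tuple{b})$; the disjoint-union construction provides no retraction of $P$ onto $N$, and $N$ is not an elementary substructure of $M\uplus N$, so elementarity cannot pull $\phi$ back down either. This is why the paper's proof builds a homomorphism from the $M$-side to the $N$-side directly: $\omega$-saturation yields finite partial homomorphisms realizing the relevant types (a variant of Lemma~\ref{lem:saturated}), and the real work --- which your proposal does not address --- is assembling these into a \emph{total} UN-homomorphism. The paper does this via a ``potential homomorphism'' (a back-and-forth family of finite partial maps with the extension property) combined with a downward L\"owenheim--Skolem step to countable elementary substructures, where the limit of a chain of finite partial maps can be taken.

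A second, related problem is the hypothesis of your reduction lemma. In the compactness wrap-up you quote, the pair $(M,\tuple{a})$, $(N,\tuple{b})$ that compactness produces satisfies only the one-directional condition ``every \unfo-formula true at $(M,\tuple{a})$ is true at $(N,\tuple{b})$'': since \unfo is not closed under negating formulas with more than one free variable, $\Theta(\tuple{x})$ does not contain each \unfo-formula or its negation, so you cannot upgrade to two-way \unfo-equivalence the way the sentence case of Theorem~\ref{thm-charac-fo} does. A lemma assuming $\equiv_{\unfo}$ would at best show that $\phi$ is a Boolean combination of \unfo-formulas (for instance, $\neg R(x,y)$ is $\equiv_{\unfo}$-invariant but not \unfo-definable), which is weaker than the theorem. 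The lemma must be proved under the asymmetric hypothesis, and that again forces the directed construction from $M$ to $N$.
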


\begin{proof}
  One direction is provided by Proposition~\ref{prop-unhom}. For the other
  direction, let $\phi(\tuple{x})$ be a \fo formula preserved under
  UN-homomorphisms.  As for the proof of Theorem~\ref{thm-charac-fo}, using a
  standard Compactness argument (cf.~\cite[Lemma 3.2.1]{ChangKeisler}), it is enough to show that, for structures
  $M,N$ with tuples $\tuple{a}, \tuple{b}$, if every \unfo-formula true in
  $(M,\tuple{a})$ is true in $(N,\tuple{b})$, then also $M\models\phi(\tuple{a})$
  implies $N\models\phi(\tuple{b})$. Without loss of generality we may assume
  that $M$ and $N$ are $\omega$-saturated. As $\phi$ is preserved under
  UN-homomorphisms, the result is now a direct consequence of the following
  claim.

\begin{quote}
  If $M,N$ are $\omega$-saturated structures, and $\tuple{a},
  \tuple{b}$ tuples of elements such that every \unfo-formula true in
  $(M,\tuple{a})$ is true in $(N,\tuple{b})$, then there is a
 UN-homomorphism from an elementary substructure of $(M,\tuple{a})$
  to an elementary substructure of $(N,\tuple{b})$ that maps $\tuple{a}$ to
  $\tuple{b}$.
\end{quote}
%
  In what follows, we prove the above claim.
  First of all, note that every equality statement satisfied in
  $\tuple{a}$ is satisfied in $\tuple{b}$, which makes it meaningful
  to speak about functions mapping $\tuple{a}$ to $\tuple{b}$.

  We need the notion of a \emph{potential homomorphism} from a
  structure $M$ to a structure $N$. It is a non-empty collection
  $\mathcal{F}$ of finite partial homomorphisms $f:M\to N$, which
  satisfies the following extension property: for all
  $f\in\mathcal{F}$ and for all $a\in \dom(M)$, there is an
  $f'\in\mathcal{F}$ which extends $f$ and whose domain includes $a$.
  By a \emph{potential UN-homomorphism} we will mean a potential
  homomorphism whose finite partial homomorphisms preserve
  the UN-bisimilarity type of each node. 
  It is not hard to see that if $M,N$ are countable structures and
  $\mathcal{F}$ is a potential homomorphism from $M$ to $N$, then
  there is a homomorphism $h:M\to N$, which can be defined as the limit
  of a sequence of finite partial homomorphisms belonging to
  $\mathcal{F}$.  In particular, if $\mathcal{F}$ is a potential
  UN-homomorphism, then $h$ is a UN-homomorphism.

  Now, let $M$ and $N$ be structures as described by the statement of
  the Lemma.  A straightforward variation of the proof of
  Lemma~\ref{lem:saturated} shows that there is a potential
  UN-homomorphism $\mathcal{F}$ from $M$ to $N$ mapping $\tuple{a}$ to
  $\tuple{b}$.

  Next, we take the model pair $(M,N,\tuple{a},\tuple{b})$ expanded with
    the maximal UN-bisimulation relation $Z$ between $M$ and $N$
    (i.e., the binary relation containing all pairs $(a',b')$ such
    that $(M,a')\unbisim (N,b')$), plus
    infinitely many additional relations that represent the potential
    UN-homomorphism $\mathcal{F}$ (for each $k\leq 1$, we use a new
    $2k$-ary relation $R_k$ to represent all finite partial
    homomorphisms defined on $k$ elements). We then apply to downward
    Löwenheim-Skolem theorem to obtain a similar situation
    $(M',N',\tuple{a},\tuple{b}, \ldots)$, but where $M'$ and $N'$ are
      countable elementary substructures of $M$ and $N$. 
    Since we added the $Z$ and $R_k$ relations 
    before applying the L\"owenheim-Skolem theorem, we still have a potential
  UN-homomorphism from $M'$ to $N'$ mapping $\tuple{a}$ to
  $\tuple{b}$.
  It then follows by the earlier remark that there is
      a UN-homomorphism from $M'$ to $N'$ mapping $\tuple{a}$ to
      $\tuple{b}$. 
\end{proof}

Finally, we consider the case of finite structures.  We say that a \fo sentence
is \emph{\unbisim-invariant on finite structures} if for all finite
structures $M$ and $N$ such that $M\unbisim N$, we have $M\models\phi$ iff
$N\models\phi$. The notion of \emph{$\unkbisim$-invariance on finite structure} is
defined in the same way.
We prove that $\unfo^k$ is also the $\unkbisim$-invariant fragment of \fo on
finite structures. The proof of this result constitutes, at the same
time, also an alternative
proof of the second part of Theorem~\ref{thm-charac-fo}. However it relies in a
crucial way on the parameter $k$ and therefore does not yield a
characterization of \unfo in terms of
$\unbisim$-invariance. In particular, it remains open whether $\unfo$ is the
$\unbisim$-invariant fragment of \fo on finite structures. For
simplicity, we state the result for formula with at most one free variable.

\begin{theorem}\label{thm-charac-fo-finite}
  Fix a finite schema $\sigma$, and let $m$ be the maximal arity of the
  relations in $\sigma$.  For all $k\geq m$, a formula of \fo with at most one
  free variable is equivalent over finite $\sigma$-structures to a sentence of
  $\unfo^k$ iff it is $\unkbisim$-invariant on finite $\sigma$-structures.
\end{theorem}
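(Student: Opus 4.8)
One direction is immediate from (the pointed version of) Proposition~\ref{prop-unbisim}. For the converse, fix an $\unkbisim$-invariant \fo formula $\phi$ with at most one free variable and let $q$ be its quantifier rank; the plan is to replace the $\omega$-saturation and compactness arguments from the proof of Theorem~\ref{thm-charac-fo} by finitary ones. A $\unfo^k$-formula in \unnormalform has quantifier rank at most $k$ times the maximal number of its nested existential blocks (call this number its \emph{block-depth}), so for fixed $k$, $d$ and $\sigma$ there are, up to logical equivalence over finite structures, only finitely many $\unfo^k$-formulas of block-depth $\leq d$ with at most one free variable, and $\unfo^k$ is closed under Boolean combinations of these. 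Hence it suffices to find $d = d(q,k,\sigma)$ such that $\phi$ is equivalent over finite structures to the disjunction of all ``$\unfo^k$-types of block-depth $d$'' that are consistent with $\phi$, and this reduces to the \emph{Key Claim}: there is $d = d(q,k,\sigma)$ such that for all finite $(M,a)$ and $(N,b)$, if they agree on every $\unfo^k$-formula of block-depth $\leq d$ then $M\models\phi(a)$ iff $N\models\phi(b)$.

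To prove the Key Claim I would adapt the route used for the guarded fragment over finite structures, replacing the infinite bounded-tree-width unravelling behind Theorem~\ref{thm-tree-like-prop} by a single finite, \emph{locally acyclic} construction. Its first ingredient is a finite, unary-negation counterpart of Otto's acyclic cover theorem (Theorem~\ref{thm-otto-acyclic-cover}): for every $L$, every finite $(M,a)$ has a \emph{finite} $\unkbisim$-cover $(\widehat M,\widehat a)$ of tree-width $\leq k-1$ whose Gaifman graph --- equivalently, the hypergraph of its width-$k$ bags --- contains no cycle shorter than $L$. One builds it just as for guarded logics: encode $M$ as a Kripke structure $B(M)$ whose worlds are the elements of $M$ together with its $\leq k$-element bags, with binary relations recording membership and unary predicates recording the atomic type of each bag; apply Theorem~\ref{thm-otto-acyclic-cover} to get a finite, highly acyclic $\bisim$-cover of $B(M)$; and decode it back into a $\sigma$-structure $\widehat M$ of tree-width $\leq k-1$. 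The hypothesis $k\geq m$ is essential here: since every relation has arity $\leq m\leq k$, each atomic fact of $M$ lies inside a single bag, so the decoding loses nothing and the $\bisim$-cover property of the Kripke cover guarantees that $\widehat M$ is a genuine (pointed) $\unkbisim$-cover of $M$.

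The second ingredient uses Gaifman's locality theorem. Given $q$, one chooses the locality radius $r$, then $L$ and $d$ large enough, and shows that two ``tame'' covers $(\widehat M,\widehat a)$, $(\widehat N,\widehat b)$ that agree on all $\unfo^k$-formulas of block-depth $\leq d$ --- hence, by a finitary version of Lemma~\ref{lem:saturated}, survive enough rounds of the width-$k$ UN-bisimulation game --- satisfy $(\widehat M,\widehat a)\equiv_{\fo^q}(\widehat N,\widehat b)$. Indeed, in an $L$-acyclic structure of tree-width $\leq k-1$ the $r$-neighbourhood of a point is tree-like, so its isomorphism type is determined by the UN-bisimilarity type of that point up to depth $O(r)$; and because a highly acyclic cover realizes every local configuration at plentifully many far-apart places, the multiplicity thresholds in Gaifman's basic local sentences collapse to the mere existence of a configuration, which is again UN-bisimulation-invariant. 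As UN-bisimilarity up to depth $O(r)$ is captured by $\unfo^k$-formulas of bounded block-depth, feeding all this into Gaifman's normal form for $\phi$ yields $\equiv_{\fo^q}$. The Key Claim then follows by assembling: if finite $(M,a),(N,b)$ agree on every $\unfo^k$-formula of block-depth $\leq d$ they survive enough rounds of the game; the first ingredient produces finite $\unkbisim$-covers $(\widehat M,\widehat a),(\widehat N,\widehat b)$, which still survive enough rounds and are therefore $\fo^q$-equivalent by the second ingredient; $\unkbisim$-invariance of $\phi$ transfers its truth between $(M,a)$ and $(\widehat M,\widehat a)$ and between $(N,b)$ and $(\widehat N,\widehat b)$, while $\equiv_{\fo^q}$ transfers it between the two covers, so chaining gives $M\models\phi(a)$ iff $N\models\phi(b)$.

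The \textbf{main obstacle} is the interaction of these two ingredients: producing a finite, highly acyclic $\unkbisim$-cover and then showing that on such tame structures Eloise's \emph{homomorphic} game responses are constrained enough to recover isomorphism types of bounded-radius neighbourhoods --- this is precisely the finitary substitute for the unravelling-plus-saturation step of the infinite case. It is also here that a fixed width $k$ is indispensable: over structures of unbounded width there is no bag structure to hand to Theorem~\ref{thm-otto-acyclic-cover}, and local acyclicity of a cover is worthless once its neighbourhoods are not of bounded tree-width. For this reason the argument simultaneously re-proves the second part of Theorem~\ref{thm-charac-fo}, yet leaves open whether full $\unfo$ is the $\unbisim$-invariant fragment of \fo on finite structures. (When $\phi$ has a free variable one works throughout with the pointed versions of $\unkbisim$ and with UN-homomorphisms, as in the proof of Theorem~\ref{thm-charac-fo-general}.)
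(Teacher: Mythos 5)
Your overall architecture is the same as the paper's: encode each structure as a Kripke structure whose worlds are the elements together with $k$-neighborhoods/bags, invoke Otto's finite cover machinery on that encoding, decode back to a $\sigma$-structure, and check that the decoded cover is a width-$k$ UN-bisimilar cover of the original (your ``first ingredient'' is, in substance, the paper's construction of $G_M$ and its Claim that $\widehat{G'_M}\unkbisim M$). The reduction to a ``Key Claim'' via the finite index of bounded-block-depth $\unfo^k$-equivalence is also exactly the paper's strategy.

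The genuine gap is in your second ingredient. The paper does not use Theorem~\ref{thm-otto-acyclic-cover} here at all; it uses Theorem~\ref{thm-otto-cover} (Otto's Proposition~33), which for a given $q$ produces, from $(G_M,a)\bisim^l(G_N,b)$, a \emph{coordinated pair} of finite $\bisim$-covers that are already $\equiv_{\fo^{q+1}}$. You instead build each acyclic cover separately and then try to recover $\equiv_{\fo^q}$ by Gaifman locality, asserting that in an $L$-acyclic bounded-tree-width cover ``the isomorphism type of the $r$-neighbourhood is determined by the UN-bisimilarity type up to depth $O(r)$.'' That assertion is false: bisimilarity (and a fortiori UN-bisimilarity, which only asks for homomorphic responses) does not control multiplicities, so a node with two pairwise bisimilar children and a node with three have the same bisimilarity type to every depth but non-isomorphic $1$-neighbourhoods, and Theorem~\ref{thm-otto-acyclic-cover} gives no mechanism for equalizing such multiplicities across two independently constructed covers. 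Your remark that ``multiplicity thresholds collapse to mere existence'' addresses the Gaifman counting of far-apart occurrences of a neighbourhood type, but not the isomorphism type of a single neighbourhood itself, which is where the multiplicities live. Repairing this requires covers that additionally saturate branching so that bisimilar points acquire isomorphic (or at least $\fo^q$-equivalent) neighbourhoods --- that is precisely the extra content of Otto's Proposition~33 beyond acyclicity, and it is why the paper cites Theorem~\ref{thm-otto-cover} rather than Theorem~\ref{thm-otto-acyclic-cover} at this step. If you replace your second ingredient by an application of Theorem~\ref{thm-otto-cover} to the Kripke encodings (and then lift the resulting $\equiv_{\fo^{q+1}}$ back through the decoding, as in the paper's Claim on $\equiv_{\fo^q}$), the rest of your argument goes through.
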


\begin{proof}
  Recall that we write $(M,a)\equiv_{\fo_q} (N,b)$ if $M$ and $N$ satisfy the
  same \fo sentences of quantifier depth $q$.  Recall the definition of
  \unnormalform, and observe that, in \unfo formulas that are in \unnormalform
  and that have at most one free variable, every existential quantifier must be
  either directly below another existential quantifier, or, otherwise, the
  subformula starting with that quantifier has at most one free variable. In
  the latter case, we call the existential quantifier in question a
  \emph{leading existential quantifier}. We say that a formula of $\unfo^k$ in
  \unnormalform with at most one free variable has \emph{block depth $q$} if
  the nesting depth of its leading quantifiers is less than $q$. We denote by
  $\unfo^k_q$ the fragment of $\unfo^k$ consisting of formulas in \unnormalform
  with at most one free variable that have block depth $q$, and we will write
  $(M,a)\equiv_{\unfo^k_q} (N,b)$ if $M$ and $N$ satisfy the same formulas of
  $\unfo^k_q$.  One can show by a straightforward induction on $k$ that, for
  every fixed finite schema, $\equiv_{\unfo^k_{q}}$ is an equivalence relation
  of finite index (that is, there are only finitely many equivalence classes)
  and, consequently, each equivalence class can be described using a single
  $\unfo^k_{q}$-formula.

The basic proof strategy is as follows: we will show that for every $q\geq 0$
there exists a $l\geq 0$ such that whenever $(M,a)\equiv_{\unfo^k_{l}} (N,b)$, then
there exist structures $(M^*,a^*)$ and $(N^*,b^*)$ such that $(M,a)\unkbisim (M^*,a^*)$,
$(M^*,a^*)\equiv_{\fo^q} (N^*,b^*)$, and $(N^*,b^*)\unkbisim (N,b)$. In other words, using the
terminology of~\cite{Otto2004}, ``the equivalence relation $\equiv_{\unfo^k_{l}}$ can be lifted to
$\equiv_{\fo^q}$ modulo $\unkbisim$''.  This implies the theorem:
starting with a $\unkbisim$-invariant \fo formula $\phi$ in one free variable of quantifier depth
$q$, we obtain that $\phi$ is {$\equiv_{\unfo^k_{l}}$-invariant}. Therefore $\phi$ is equivalent to the disjunction of
all (finitely many, up to equivalence) formulas describing an equivalence class of $\equiv_{\unfo^k_{l}}$ containing a model
of $\phi$.

The construction of $(M^*,a^*)$ and $(N^*,b^*)$ makes use of
Theorem~\ref{thm-otto-cover}. We first need to introduce some auxiliary
definitions that will allow us to construct a Kripke structure from an arbitrary structure, and
vice versa.

Let $M$ be any structure over a signature $\sigma$. In what follows, by a
\emph{$k$-neighborhood} of $M$, we will mean a pair $(K,h)$, where $K$ is
a structure with domain $\set{1, \ldots, k}$ over $\sigma$ and $h$ is a
homomorphism from $K$ to $M$. Intuitively, one can think of a
$k$-neighborhood as a realization in $M$ of some positive existential
description of size $k$. 
From $\sigma$ and $k$ we construct the signature
$\sigma_k$ containing a unary predicate $P_K$ for each structure $K$ over
$\sigma$ with domain $\set{1, \ldots, k}$ and a binary relation
$R_i$ for each $1\leq i\leq
k$. Given a structure $M$ over $\sigma$, we associate to it a
structure $G_M$, called \emph{the graph of
  $M$}, over $\sigma_k$, as follows.  The nodes of $G_M$ are the elements of
$M$ plus the $k$-neighborhoods of $M$. Each node of $G_M$ that is a $k$-neighborhood $(K,h)$ is labeled
by the unary predicate $P_K$.  There is an $R_i$-edge in $G_M$ from a $k$-neighborhood $(K,h)$ of $M$ to an element $a\in M$ if
and only if $h(i)=a$.

Conversely, we can transform a structure $G$ over the signature $\sigma_k$
into a structure $\widehat{G}$ over $\sigma$ as follows. The universe
of $\widehat G$ consists of the nodes of $G$ that do not satisfy
any unary predicate. A tuple of such elements $\tuple{a}=a_1\ldots
a_n$ (with $n\leq m\leq k$) belongs to
$R$ 
iff there is a node $u$ of $G$ satisfying $P_K$, for some $K$, 
   and a tuple $\tuple{b}=b_1\ldots b_n$ of elements of
  $K$ such that $K\models R(\tuple{b})$ and such that
   for each $i\leq n$, $G$ contains the edge $R_{b_i}(u,a_i)$.

We start by establishing a useful property linking $M$ to its graph $G_M$.

  \begin{claim}\label{claim-graph-struct-formula}
For all structures $(M,a)$ and $(N,b)$ and  for all $l\in\mathbb{N}$, if   $(M,a)\equiv_{\unfo^k_{l}} (N,b)$ then   $(G_M,a)\bisim^l (G_N,b)$.
  \end{claim}

  \begin{proof}
    Let $Z_0$ be the set of all pairs of elements $(a,b)$ such that $(M,a)
    \equiv_{\unfo^k_{0}} (N,b)$, and, for $\ell>0$, let $Z_{\ell}$ be the set
    of all pairs of elements $(a,b)$ such that $(M,a) \equiv_{\unfo^k_{\ell}}
    (N,b)$, together with the set of all pairs $((K,h),(K,g))$ such that, for
    all $i\leq k$, $(M,h(i)) \equiv_{\unfo^k_{\ell-1}} (N, g(i))$.  We show, by
    induction on $\ell$, that from a pair $(u,v)$ in $Z_{\ell}$, Elo\"ise can
    survive $\ell$ steps of the global two-way bisimulation game. Therefore, if
    $(a,b)$ is such that $(M,a)\equiv_{\unfo^k_{l}} (N,b)$, $Z_l$ witnesses the
    fact that $(G_M,a)\bisim^l (G_N,b)$ and the lemma is proved.

The base of the induction, where $\ell=0$, is trivial. For the
induction step,
recall that Abelard can choose a structure, and subsequently, play one
of three types of moves: moving the pebble forward along an edge,
moving the pebble backward along an edge, and moving the pebble to an
arbitrary node. We will assume that Abelard chooses $G_M$ (the
case where Abelard chooses $G_N$ is symmetric) and that Abelard either
(i) moves the pebble (forward or backward) along an edge, 
or (ii) moves the pebble to an arbitrary node. We treat the two cases
separately. 

As a convenient notation, for each structure $K$ over $\sigma$ with universe
$\set{1,\ldots,k}$, we define $\alpha_K(x_1,\ldots,x_k)$ to be the
quantifier-free formula that is the conjunction of all atomic formulas true in
$K$, using free variables $x_1, \ldots, x_k$ for the elements $1, \ldots, k$.
Moreover we say that $\tau$ is the $\unfo^k_{\ell}$-type of an element $a$ of
$M$ if it is the conjunction of all the $\unfo^k_{\ell}$ formula $\varphi(x)$
such that $M\models \varphi(a)$. Modulo logical equivalence there are only
finitely many formulas in $\unfo^k_{\ell}$ hence each $\unfo^k_{\ell}$-type is
equivalent to a formula of $\unfo^k_{\ell}$.

(i) Let $(u,v)\in Z_\ell$. By construction, $(u,v)$ is either of the
form  $(a,b)$ with $a\in \dom(M)$ and $b\in \dom(N)$, or it is 
of the form $((K,h),(K,g))$. First, 
suppose that $(u,v)=(a,b)$, where $a\in M$ and $b\in N$. Let $u'$  be a
node of $G_M$ reachable from $a$ by an edge. By construction of $G_M$, $u'$
must be of the form $(K,h)$ and is connected with an edge of label $i$ to $a$,
i.e. $h(i)=a$. For all $i$, let $\tau _i$ be the $\unfo^k_{\ell-1}$-type of $h(i)$. By
definition, the formula $\beta(x_i)$ defined as $\exists x_1 \ldots x_{i-1}
x_{i+1} \ldots x_k ~\alpha_K \land \bigwedge_{j\neq i} \tau_j(x_j)$ is a formula
of $\unfo^k_{\ell}$ and by construction we have $(M,a)\models
\beta(x_i)$. Therefore $(N,b)\models \beta(x_i)$. Let $g$ be the valuation
corresponding to the initial block of existential quantifications. By
construction $g$ is a homomorphism from $K$ to $N$ and, for each $i$,
$(h(i),g(i))\in Z_{\ell-1}$. Hence $((K,h),(K,g))\in Z_{\ell-1}$. 

Next, suppose $(u,v)=((K,h),(K,g))$ and a there is a node $u'$ connected
to $u$ via an edge. By construction $u'$ must be an element $a$ of $G_M$
connected with an edge of label $i$ to $(K,h)$, i.e. $h(i)=a$. By definition,
for $b=g(i)$ we have $(a,b) \in Z_{\ell-1}$.

(ii) Let $(u,v)\in Z_\ell$ and let $u'$ be any element of
$G_M$. First, consider the case where $u'$ is an element of $M$, and
let $\tau(x)$ be its $\unfo^k_{\ell-1}$-type. Since
$(M,a)\equiv_{\unfo^k_\ell} (N,b)$ and $(M,a)\models \exists x~\tau(x)$, 
we have that $(N,b)\models\exists x~\tau(x)$. Let $v'$ be the
witnessing
 element of $N$. Then $(N,u)\equiv_{\unfo^k_{\ell-1}} (N,v)$, and hence,
$(u',v')\in Z_{\ell-1}$. 

Next, suppose that $u'$ was of the form $(K,h)$. For each $i\leq k$,
let $\tau_i(x)$ be the $\unfo^k_{\ell-1}$-type of $h(i)$. By
definition, the formula $\beta$ defined as $\exists x_1 \ldots x_k
~\alpha_K \land \bigwedge_{i} \tau_i(x_i)$ is a formula
of $\unfo^k_{\ell}$ and by construction we have $(M,a)\models
\beta$. Therefore $(N,b)\models \beta$. Let $g$ be the valuation
corresponding to the initial block of existential quantifications in $\beta$. By
construction $g$ is a homomorphism from $K$ to $N$ and, for each $i$,
$(h(i),g(i))\in Z_{\ell-1}$. Hence $((K,h),(K,g))\in Z_{\ell-1}$. 
\end{proof}

Fix some $q\in\mathbb{N}$ and let $l$ be the number given by
Theorem~\ref{thm-otto-cover} for $q+1$.  Let now $(M,a)$ and $(N,b)$ be such that
$(M,a)\equiv_{\unfo^k_{l}} (N,b)$. By Claim~\ref{claim-graph-struct-formula} we have
$(G_M,a)\bisim^l (G_N,b)$.  By Theorem~\ref{thm-otto-cover} there exists $(G'_M,a')$ and
$(G'_N,b')$ such that $(G'_M,a') \bisim (G_M,a)$, $(G'_N,b') \bisim (G_N,b)$ and $(G'_M,a)
\equiv_{\fo^{q+1}} (G'_N,b)$.
Let $M^*$ be $\widehat G'_M$ and $N^*$ be $\widehat G'_N$. The following two
claims show that $(M^*,a')$ and $(N^*,b')$ have the desired properties.

  \begin{claim}\label{claim2}
  $(M^*,a') \equiv_{\fo^q} (N^*,b')$.
  \end{claim}

  \begin{proof}
    We  lift the winning strategy in the Ehrenfeucht-Fra\"iss\'e game given by $(G'_M,a') \equiv_{\fo^{q+1}} (G'_N,b')$
    to a winning strategy for the Ehrenfeucht-Fra\"iss\'e game
    $(M^*,a') \equiv_{\fo^q} (N^*,b')$: When, at stage $i$ of the game,
    Spoiler chooses an element $a_i$ in $M^*$ Duplicator responds with the
    element $b_i$ provided by his strategy in the game between $G'_M$ and
    $G'_N$, and vice versa if Spoiler chooses an element in $N^*$. We 
    show that $M^*\models R(a_{i_1} \cdots a_{i_\kappa})$ implies $N^*\models R(b_{i_1}
    \cdots b_{i_\kappa})$. Let $(K,h)$ be any $k$-neighborhood of $M$ such that
    $R(a_{i_1},\ldots, a_{i_\kappa})$ is the $h$-image of an atomic
    fact of $K$. Assuming the
    node $u=(K,h)$ is played in the game between $G'_M$ and $G'_N$,
    the extra move in the winning  strategy of Duplicator guarantees the
    existence of $v=(K,g)$ in $G'_M$ that is linked to $\tuple{b}$ the same way
    $u$ is linked to $\tuple{a}$. This implies the desired property.
  \end{proof}

  \begin{claim}\label{claim3}
  $(M^*,a')\unkbisim (M,a)$ and $(N^*,b') \unkbisim (N,b)$.  
  \end{claim}
  \begin{proof}
    We only treat the case for $M$, the other being identical.  

    By
    Theorem~\ref{thm-otto-cover}, we have that $(G'_M,a')$ is a 
    $\bisim$-cover of $(G_M,a)$. That is, there is a homomorphism  
    $h:G'_M\to G_M$ such that $h(a')=a$ and such that 
    $(G'_M,c)\bisim (G_M,h(c))$ for all $c\in \dom(G'_M)$. 

   Let $Z=\{(c,h(c))\mid c\in \dom(M^*)\}$. Note that, for all
   $c\in \dom(M^*)$ we have that $h(c)\in \dom(M)$, due to 
   the fact that $c$ and $h(c)$ agree on all unary predicates.
   We will show that, in fact, $Z$ is a UN-bisimulation of
  width $k$ between $M^*$ and $M$. Note also that $(a',a)\in Z$. 

 We first prove that $Z$ is the graph of a homomorphism. 
  Let $\tuple{a}$ be such that $M^*
 \models R(\tuple{a})$.
By construction of $M^*$, this implies that there is a node $u$ in $G'_M$ of
label $P_K$ and
elements $b_1,\ldots,b_l$ in $K$ such that $K\models R(b_1,\ldots,b_l)$ and
$G'_M \models R_{b_j}(u,a_j)$ for all $j$.
As $h$ is a homomorphism, $h(u)$ has the same label as $u$ and is connected to
$h(\tuple{a})$ the same way $u$ is connected to $\tuple{a}$. By construction of
$G_M$, this implies $M\models R(h(\tuple{a}))$.

Since we have just shown that $Z$ is the graph of a homomorphism, it remains only to
show that $Z$ satisfies the backward property of the definition of
UN-bisimulations.

Consider $(c,d)\in Z$, that is, $d=h(c)$. Let $Y$ be a subset of elements of
$M$ of size less than $k$. We assume that $d\in Y$. The case where $d\not\in Y$
is handled in the same way. Let $K$ be the structure with universe $\set{1,\ldots,k}$
which is isomorphic to the restriction of $M$ to $Y$ and let $g$ be the
corresponding isomorphism. Without loss of generality we assume that $g(1)=b$.
By definition of $G_M$ there is a node $v=(K,g)$ that is connected to all the
elements of $Y$ via edges of appropriate label. Since $(G'_M,c)\bisim (G_M,d)$, we
can find a node $u$, whose label correspond to $K$, connected to $c$ via an
edge of label $1$, and such that $h(u)=v$.  Since $(G'_M,u)\bisim (G_M,v)$, we can
further find in $G'_M$ a set of nodes $X$, connected to $u$ in the same way
that $Y$ are connected to $v$, such that $h(X)=Y$.  It follows by the
construction of $M^*$ that the map from $Y$ to $X$ is a partial homomorphism
from $M$ to $M^*$.
  \end{proof}

  This concludes the proof of Theorem~\ref{thm-charac-fo-finite}: it
  follows from Claim~\ref{claim2} and Claim~\ref{claim3} that every
  $\unkbisim$-invariant \fo formula $\phi$ in one free variable of
  quantifier depth $q$, is {$\equiv_{\unfo^k_{l}}$-invariant} for
  suitably large $l$, and, therefore (as explained earlier), is
  equivalent to a $\unfo^k_l$-formula. The analogous result for
  sentences follows immediately.
\end{proof}

\subsection{Craig Interpolation and Beth Property}

We conclude the list of nice model-theoretic properties of \unfo by showing
that it has Craig Interpolation and the Projective Beth Property. In
fact, we can show strong versions of these results, which take into account also
the width of formulas. This is remarkable, given that both Craig Interpolation
and the Beth Property fail for the $k$-variable fragment of first-order logic,
for all $k>1$. Moreover, the results presented in this section hold both on
arbitrary structures and on finite structures. Model-theoretic proofs
of Craig Interpolation typically use amalgamation
constructions \cite{ChangKeisler}, and the proof we give here
is essentially based on an amalgamation construction called 
\emph{zigzag-products} that was introduced, in the context of
modal logic, in~\cite{MarxVenema}.

For all \unfo-formulas
$\phi(\tuple{x}), \psi(\tuple{x})$, we write $\phi\models\psi$ to
express that the first-order formula $\forall\tuple{x}(\phi\to\psi)$
(which is not necessarily a \unfo-formula) is valid, i.e. holds on all models.

It turns out that it is enough to consider finite models for testing whether
$\phi\models\psi$. This is a consequence of the following remark and
Theorem~\ref{thm-finite-model-prop}:

\begin{remark}\label{remark-entailment} For all $\unfo$-formulas
  $\phi(\tuple{x}),\psi(\tuple{x})$, $\phi \models \psi$ holds (resp. holds on
  finite structures) if and only if the \unfo sentence
  \[ \exists \tuple{x}(\phi\land\bigwedge_{1\leq i\leq n} P_{i}(x_i))\land\neg\exists
  \tuple{x}(\psi\land\bigwedge_{1\leq i\leq n} P_{i}(x_i)) \] is not satisfiable (resp. does
  not have a finite model), where $\tuple{x}=x_1, \ldots, x_n$ and
  $P_1, \ldots, P_n$ are fresh unary predicates.
\end{remark} 

Remark~\ref{remark-entailment} implies that all the results we prove
for sentences apply to entailment between formulas with free variables
as well. In particular, since \unfo has the finite model property by
Theorem~\ref{thm-finite-model-prop}, we have that $\phi \models \psi$
holds on arbitrary structures if and only if holds on finite
structures. Consequently, Craig Interpolation, and therefore also Beth
Property, hold on arbitrary structures if and only if they
hold over finite structures. In the remaining part of this section we
only state the results for arbitrary structures, but, as we have just argued,
they also hold over finite structures.

\begin{theorem}\label{thm-craig}
  $\unfo^k$ has Craig Interpolation: for all $k\geq 1$ and for every pair of
  $\unfo^k$-formulas $\phi(\tuple{x}), \psi(\tuple{x})$ in the same free
  variables such that $\phi\models\psi$, there is a $\unfo^k$-formula
  $\chi(\tuple{x})$ over the common vocabulary of $\phi$ and $\psi$ such that
  $\phi\models\chi$ and $\chi\models\psi$.
\end{theorem}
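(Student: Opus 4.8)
The plan is to follow the classical model-theoretic route to Craig Interpolation through amalgamation, replacing the usual elementary-amalgamation step by a zigzag-product construction in the style of~\cite{MarxVenema}. Fix $\unfo^k$-formulas $\phi(\tuple{x}),\psi(\tuple{x})$ in the same free variables with $\phi\models\psi$; let $\sigma_1,\sigma_2$ be their vocabularies and $\sigma_0=\sigma_1\cap\sigma_2$. Let $\Gamma(\tuple{x})$ be the set of all $\unfo^k$-formulas $\chi(\tuple{x})$ over $\sigma_0$ with $\phi\models\chi$; since $\unfo^k$ is closed under finite disjunction, $\Gamma$ is closed under finite disjunction. If $\Gamma\models\psi$, then by a standard compactness argument (cf.~\cite{ChangKeisler}) a finite conjunction of members of $\Gamma$ is already a $\unfo^k$-interpolant over $\sigma_0$, and we are done. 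So assume otherwise and pick a $\sigma_2$-structure $(N,\tuple{b})\models\Gamma(\tuple{b})\wedge\neg\psi(\tuple{b})$. The first key observation is that the set $\{\phi(\tuple{x})\}\cup\{\neg\chi(\tuple{x})\mid\chi\in\unfo^k\text{ over }\sigma_0,\ (N,\tuple{b})\not\models\chi(\tuple{b})\}$, viewed as a set of first-order formulas, is finitely satisfiable: otherwise compactness yields $\phi\models\chi_1\vee\dots\vee\chi_m$ with all $(N,\tuple{b})\not\models\chi_i(\tuple{b})$, but $\chi_1\vee\dots\vee\chi_m$ then lies in $\Gamma$, contradicting $(N,\tuple{b})\models\Gamma(\tuple{b})$. (It is precisely here that closure of $\unfo^k$ under disjunction, but \emph{not} under negation, is used.) Hence there is a $\sigma_1$-structure $(M_1,\tuple{a})\models\phi(\tuple{a})$ such that every $\unfo^k$-formula over $\sigma_0$ true at $(M_1,\tuple{a})$ is true at $(N,\tuple{b})$; passing to $\omega$-saturated elementary extensions, we may assume $M_1$ and $N$ are $\omega$-saturated.

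Next I would amalgamate $M_1$ and $N$ over their $\sigma_0$-reducts. By the technique behind Lemma~\ref{lem:saturated}, the one-directional agreement just obtained, together with $\omega$-saturation, links $M_1\restrict\sigma_0$ and $N\restrict\sigma_0$ by a width-$k$ back-and-forth system carrying the pairs $(a_i,b_i)$, and on this system one builds a structure $P$ over $\sigma_1\cup\sigma_2$: its domain consists of related pairs of points together with records of the small ($\leq k$-element) partial homomorphisms that the system provides; relations in $\sigma_1\setminus\sigma_0$ are read off the $M_1$-coordinate, relations in $\sigma_2\setminus\sigma_0$ off the $N$-coordinate, and the shared $\sigma_0$-relations are glued coherently from both sides. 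The two coordinate maps should be homomorphisms $\pi_1:P\restrict\sigma_1\to M_1$ and $\pi_2:P\restrict\sigma_2\to N$, and $P$ should be rich enough that it is, simultaneously, a width-$k$ UN-cover of $M_1$ on the $\sigma_1$-side and of $N$ on the $\sigma_2$-side, with a designated tuple $\tuple{c}$ lying over $(\tuple{a},\tuple{b})$.

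Granting such a $P$, the invariance results of Section~\ref{section-express} close the argument. On the $\sigma_2$-side, $\pi_2$ is a UN-homomorphism with $\pi_2(\tuple{c})=\tuple{b}$, so Proposition~\ref{prop-unhom} and $N\not\models\psi(\tuple{b})$ give $P\not\models\psi(\tuple{c})$. On the $\sigma_1$-side, the UN-cover property gives $P\restrict\sigma_1\unkbisim M_1$, so Proposition~\ref{prop-unbisim} together with Proposition~\ref{prop-unhom} — using the usual device of marking the parameters $\tuple{a}$ and $\tuple{c}$ with fresh unary predicates, carefully transported through the (non-injective) projection $\pi_1$ — yields $P\models\phi(\tuple{c})$. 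This contradicts $\phi\models\psi$; hence $\Gamma\models\psi$ after all, and a finite subconjunction of $\Gamma$ — which by construction is a $\unfo^k$-formula over the common vocabulary $\sigma_0$ — is the desired interpolant, so the width-$k$ bound comes for free. Finally, by Theorem~\ref{thm-finite-model-prop} and Remark~\ref{remark-entailment}, entailment between $\unfo$-formulas over finite structures coincides with entailment over arbitrary structures, so the same interpolant serves over finite structures as well.

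The hard part will be the zigzag construction and the verification of its properties. Because UN-bisimulations supply only partial homomorphisms on subsets of size at most $k$ — not embeddings, and not on arbitrary sets — one cannot simply take the domain of $P$ to be the set of related pairs and copy $\sigma_0$-atoms off one side: each shared $\sigma_0$-fact of $P$, of possibly large arity, must be realized by such a bounded partial homomorphism on \emph{both} sides at once, which forces the domain of $P$ to carry extra structure (records of realized bounded neighborhoods, much as in the $\bisim$-cover constructions behind Theorems~\ref{thm-otto-acyclic-cover} and~\ref{thm-otto-cover}). Arranging the back-and-forth conditions so that $P$ is genuinely a width-$k$ UN-cover of both $M_1$ and $N$ simultaneously, while keeping the free-variable parameters rigidly matched along $\tuple{a}\mapsto\tuple{c}\mapsto\tuple{b}$ and never letting Abelard's sets exceed size $k$, is where essentially all the work lies.
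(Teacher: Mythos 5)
Your overall architecture matches the paper's proof: a double compactness argument producing $(M_1,\tuple{a})\models\phi$ and $(N,\tuple{b})\models\neg\psi$ with all common-vocabulary $\unfo^k$-formulas transferring from the former to the latter, passage to $\omega$-saturated models, a zigzag product amalgamating the two over the shared vocabulary $\sigma_0$, and the two coordinate projections used to pull $\phi\land\neg\psi$ into the product. One remark before the main issue: the product needs none of the extra structure you anticipate. Taking the domain of $P$ to be simply the set of pairs $(u,v)$ with $(M_1,u)\approx_{\text{UN}^k}^{\sigma_0}(N,v)$, interpreting each shared relation as the intersection (a $\sigma_0$-fact holds of a tuple of pairs iff it holds of both coordinate tuples) and copying private relations from the corresponding coordinate, already makes $P$ a width-$k$ UN-cover of both sides: the forward half of each bisimulation is the projection itself, and the backward half is obtained by lifting the size-$\leq k$ partial homomorphisms $f$ supplied by $\approx_{\text{UN}^k}^{\sigma_0}$ via $x\mapsto\langle x,f(x)\rangle$. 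Recording "realized bounded neighborhoods" in the domain is an over-complication, not an error.

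The genuine gap is in your derivation of $P\models\phi(\tuple{c})$. Proposition~\ref{prop-unhom} preserves formulas \emph{along} a UN-homomorphism, and the homomorphism you have points from $P$ to $M_1$; it would yield $M_1\models\phi(\tuple{a})$ from $P\models\phi(\tuple{c})$, which is the wrong direction, and there is no total homomorphism from $M_1$ into $P$ to run it the other way. The marking device does not repair this: if you mark $a_i$ and $c_i=\langle a_i,b_i\rangle$ with a fresh unary predicate, the graph of the projection ceases to be a UN-bisimulation for the expanded signature (pairs $\langle a_i,b'\rangle$ with $b'\neq b_i$ break it), and pruning those pairs destroys the backward property, since the bisimulations attached to the remaining pairs have no reason to send $a_i$ to $b_i$. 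What is needed instead is to make your phrase "the system carries the pairs $(a_i,b_i)$" quantitative and then transfer $\phi$ directly: first upgrade the one-directional $\unfo^k$-transfer, by a further $\omega$-saturation argument, to the statement that for every $X\subseteq M_1$ with $|X|\leq k$ there is a partial $\sigma_0$-homomorphism on $X$ sending each $a_i\in X$ to $b_i$ and each point to a $\approx_{\text{UN}^k}^{\sigma_0}$-equivalent one; then write $\phi=\exists\tuple{z}\,\phi'$ in UN-normal form, take the at most $k$ witnesses $\tuple{a}'$ in $M_1$ (this is exactly where the width bound enters), apply that partial homomorphism to obtain $\tuple{b}'$, and verify that $\langle\tuple{a}',\tuple{b}'\rangle$ witnesses $\phi$ in $P$ --- atomic conjuncts by the definition of the product, unary subformulas by the cover property together with Proposition~\ref{prop-unbisim}. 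Without this explicit witness transport, the $\phi$-half of the contradiction is not established.
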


\begin{proof}
  The proof is by contradiction. Suppose that there is no interpolant
  in $\unfo^k$. We will show that (the first-order formula)
  $\phi\land\neg\psi$ is satisfiable. 

  Let $\sigma$ be the vocabulary of $\phi$, and $\tau$ the vocabulary
  of $\psi$, and let $\tuple{x}=x_1\ldots x_n$.  
  By $(M,\tuple{a})\ \equiv_{\unfo^k}^\sigma (N,\tuple{b})$ we will
    mean that the tuple $\tuple{a}$ in $M$ and the tuple $\tuple{b}$
    in $N$ satisfy the same $\unfo^k$-formulas using only relation symbols
    in $\sigma$, and by $(M,\tuple{a})\ \Rrightarrow_{\unfo^k}^\sigma (N,\tuple{b})$ we will
    mean that every $\unfo^k$-formula using only relation symbols
    in $\sigma$ that is satisfied by the tuple $\tuple{a}$
    of $M$ is satisfied also by the tuple $\tuple{b}$ of $N$.

\begin{claim}\label{claim-one}
 There are $(M,\tuple{a})\models\phi$ and $(N,\tuple{b})\models\neg\psi$
such that $(M,\tuple{a})\Rrightarrow_{\unfo^k}^{\sigma\cap\tau} (N,\tuple{b})$.
\end{claim}

\begin{proof}[Proof of Claim~\ref{claim-one}:] 
  A classic argument involving two applications of Compactness
  (cf.~\cite[Lemma 3.2.1]{ChangKeisler}).
  Let $\Phi(\tuple{x})$ be the set of all $\unfo^k$-formulas in the
  joint vocabulary that are valid consequences of $\phi(\tuple{x})$.
  The first-order theory $\Phi(\tuple{x})\cup
  \{\neg\psi(\tuple{x})\}$ is consistent, because, if this were not the
  case, then by Compactness, some finite conjunction of formulas in
  $\Phi$ would be an interpolant, which we have assumed is not the case. Hence, let
  $(N,\tuple{b})\models\Phi(\tuple{x})\cup\{\neg\psi(\tuple{x})\}$. 
  
Next, let
$\Gamma(\tuple{x})$ be the set of all $\unfo^k$-formulas in the joint vocabulary
that are \emph{false} in $(N,\tuple{b})$. Then the first-order theory
$\{\neg\gamma\mid\gamma\in\Gamma(\tuple{x})\}\cup\{\phi(\tuple{x})\}$ is consistent, for, if it were not, then by
Compactness, there would be $\gamma_1, \ldots, \gamma_m\in\Gamma$ such
that $\bigwedge_{i}\neg\gamma_i\models\neg\phi(\tuple{x})$ and hence
$\bigvee_{i}\gamma_i\in \Phi(\tuple{x})$, which contradicts $(N,\tuple{b})\models\Phi(\tuple{x})$.
Let $(M,\tuple{a})\models
\{\neg\gamma\mid\gamma\in\Gamma(\tuple{x})\}\cup\{\phi(\tuple{x})\}$. 
By construction, all $\unfo^k$ formulas in $\sigma\cap\tau$ true of
$\tuple{a}$ in $M$ are true of $\tuple{b}$ in $N$. In other words,
$(M,\tuple{a})\Rrightarrow_{\unfo^k}^{\sigma\cap\tau} (N,\tuple{b})$. 
\end{proof}

The following Claim~\ref{claim:unhom} is a strengthening of
Claim~\ref{claim-one}.  Recall the definition of UN-homomorphisms. 
Just as we parametrized UN-bisimulations by a width $k$ and a signature, we can 
parametrize UN-homomorphisms by a width and a signature.
We write $(M,x) ~\approx_{\text{UN}^k}^{\sigma\cap\tau}~ (N,y)$ if there is a
UN-bisimulation of width~$k$, with respect to the signature $\sigma\cap\tau$,
between $(M,x)$ and $(N,y)$.
We write $(M,\tuple{a})
~\to_{\text{UN}^k}^{\sigma\cap\tau}~ (N,\tuple{b})$ if, for every set
$X\subseteq M$ with $|X|\leq k$, there is a partial homomorphism (with
respect to the signature $\sigma\cap\tau$)
$h:M\to N$ with domain $X$, such that (i) $h(a_i)=b_i$ for all $a_i\in \tuple{a}\cap X$,
and (ii) $(M,x) ~\approx_{\text{UN}^k}^{\sigma\cap\tau}~ (N,h(x))$ for all $x\in
X$.  In particular this implies that $(M,\tuple{a})\Rrightarrow_{\unfo^k}^{\sigma\cap\tau} (N,\tuple{b})$. 

\begin{claim}\label{claim:unhom}
 There are $(M,\tuple{a})\models\phi$ and $(N,\tuple{b})\models\neg\psi$
such that $(M,\tuple{a})~\to_{\text{UN}^k}^{\sigma\cap\tau}~
(N,\tuple{b})$.
\end{claim}

\begin{proof}[Proof of Claim~\ref{claim:unhom}:]
We may assume without loss of generality that the models $M$ and $N$
provided by Claim~\ref{claim-one} are $\omega$-saturated, and
therefore, by Lemma~\ref{lem:saturated} we have that, whenever
$(M,a) ~\equiv_{\unfo^k}^{\sigma\cap\tau}~(N,b)$, then 
$(M,a) ~\approx_{\text{UN}^k}^{\sigma\cap\tau}~(N,b)$.

Consider now a set $X=\set{u_1,\ldots,u_\kappa} \subseteq M$ with $\kappa\leq
k$.  We assume that $X$ contains no element of $\tuple{a}$. If this were not
the case we simply remove those elements and proceed as below. Let $Y=X \cup
\tuple{a}$. We view $Y$ as the sequence $(v_1,\ldots,v_\ell)$ where for $j\leq
\kappa$, $v_j=u_j$ and for $j>\kappa$, $v_j = a_j$. For each element
$u_i$ of $X$, let $T_i$ be the set of $\unfo^k$ formulas $\psi(x)$ such that $M
\models \psi(u_i)$. Let $\alpha$ be the conjunction of all atoms
$R(x_{i_1},\ldots,x_{i_l})$ such that $M\models R(v_{i_1},\ldots,v_{i_l})$.
For each finite subset $T'_i$ of $T_i$, the formula $\exists x_1,\ldots
x_\kappa ~\alpha \land \bigwedge_i T'_i(x_i)$ is in $\unfo^k$ and is satisfied by
$(M,\tuple{a})$. Because $(M,\tuple{a})
~\Rrightarrow_{\unfo^k}^{\sigma\cap\tau}~(N,\tuple{b})$ it is also satisfied by
$(N,\tuple{b})$. By $\omega$-saturation, this implies that 
$\{\alpha(x_1, \ldots, x_i)\}\cup \bigcup_{i\leq \kappa} T_i(x_i)$ is
realized in
$(N,\tuple{b})$. The witnessing tuple $\tuple{c}$ provides the desired homomorphism from $X$ to $N$.

Altogether this shows that $(M,\tuple{a})
~\to_{\text{UN}^k}^{\sigma\cap\tau}~(N,\tuple{b})$.
\end{proof}

\noindent We now construct a new $\sigma\cup\tau$-structure $K$ out of the
$\sigma$-structure $M$ and the $\tau$-structure $N$ provided by Claim~\ref{claim:unhom}.  This new
structure $K$ will contain a tuple satisfying $\phi\land\neg\psi$. 
Essentially, $K$ will be the substructure of the cartesian product of
$M$ and $N$ containing pairs of elements that are UN-bisimilar in the
joint language. Relations in the joint vocabulary are interpreted as
usual in cartesian products, while relations outside of the joint
vocabulary are copied from the respective structure.
The precise definition of $K$ is as follows:

\begin{itemize}
\item The domain of $K$ is the set of all pairs
$(a,b)\in M\times N$ such that $(M,a) ~\approx_{\text{UN}^k}^{\sigma\cap\tau}~ (N,b)$.
\item For each $R\in\sigma\cap\tau$ of arity $m$, $R^{K}$
 contains all tuples $(\langle a_1,b_1\rangle,\ldots,\langle
  a_m,b_m\rangle)$ in the domain of $K$ such that $(a_1, \ldots, a_m)\in R^M$ and $(b_1,
  \ldots, b_m)\in R^N$. 
\item For each $R\in \sigma\setminus \tau$ of arity $m$
  $R^{K}$ 
   contains all tuples $(\langle a_1,b_1\rangle,\ldots,\langle
  a_m,b_m\rangle)$ in the domain of $K$ such that $(a_1, \ldots, a_m)\in R^M$.
\item Analogously for all relation symbols $R\in\tau\setminus\sigma$. 
\end{itemize}

\begin{claim}\label{claim-product} 
  For all elements $\langle a,b\rangle$ of $K$, we have 
\[(K,\langle
  a,b\rangle) ~\approx_{\text{UN}^k}^\sigma ~ (M,a)
  \quad\hbox{and}\quad(K,\langle
  a,b\rangle) ~\approx_{\text{UN}^k}^\tau~ (N,b)\ .
\] 
\end{claim}

\begin{proof}[Proof of Claim~\ref{claim-product}:]
We will show that $(K,\langle
  a,b\rangle) ~\approx_{\text{UN}^k}^\sigma~ (M,a)$. The proof of the other half
of the claim is analogous.

Let $Z$ be the graph of the natural projection from $K$ onto $M$,
i.e., $Z = \{(\langle u,v\rangle,u)\mid \langle u,v\rangle\in K\}$. 
Clearly, $(\langle a,b\rangle,a)\in Z$. Therefore, it suffices to show 
  that $Z$ is a UN-bisimulation of width $k$ for the signature
  $\sigma$. 

  Consider any pair $(\langle u,v\rangle,u)\in Z$, and let $X$ be any
  subset of the domain of $K$, with $|X|\leq k$.  Let $h$ be the
  natural projection from $K$ onto $M$, restricted to $X$. It is clear
  from the definition of $K$ that $h$ is a partial homomorphism with
  respect to all relations in $\sigma$ (those that belong to
  $\sigma\cap\tau$ as well as those that belong to
  $\sigma\setminus\tau$). Furthermore, it is clear that $h(\langle
  a,b\rangle)=a$ if $\langle a,b\rangle\in X$.  Therefore, the forward property
  of the definition of UN-bisimulations is satisfied.

  Next, consider any  pair $(\langle u,v\rangle,u)\in Z$, and let $X$ be any
  subset of the domain of $M$, with $|X|\leq k$. Since
  $(M,u)~\approx_{\text{UN}^{k}}^{\sigma\cap\tau}~(N,v)$, there is a 
  partial homomorphism $f:M\to N$ whose domain is $X$, such that
  $(M,x)~\approx_{\text{UN}^{k}}^{\sigma\cap\tau}~(N,h(x))$ for all $x\in X$, and
  such that $h(u)=v$ if $u\in X$. Now, define $h:M\to K$ to be the 
  map that sends every $x\in X$ to $\langle x,f(x)\rangle$. Clearly, 
  this is a well-defined partial map from $M$ to $K$, with domain $X$,
  having the property that $h(u)=(u,v)$ if $u\in X$. Therefore, it
  only
  remains to show that $h$ is a partial \emph{homomorphism}. 
  That $h$ preserves all relations in $\sigma\setminus\tau$ is immediate
  from the definition of $K$. That $h$ preserves all relations in
  $\sigma\cap\tau$ follows from the construction of $h$ and of $K$: if 
  $(x_1, \ldots, x_m)\in R^M$, then, since $f$ is a partial
  homomorphism,  we have that $(f(x_1), \ldots, f(x_m))\in R^N$. Hence, 
  by the definition of $K$, we have that $(h(x_1), \ldots, h(x_m))\in
  R^K$. 
\end{proof}

We will use the notation $\langle\tuple{a},\tuple{b}\rangle$ 
as a convenient shorthand for $(\langle a_1,b_1\rangle, \ldots, \langle
a_n,b_n\rangle)$. 

\begin{claim}\label{claim:phi-not-psi}
$(K,\langle\tuple{a},\tuple{b}\rangle)\models\phi\land\neg\psi$ 
\end{claim}

\begin{proof}[Proof of Claim~\ref{claim:phi-not-psi}:] We may assume that $\phi$
  is in \unnormalform.  Let $\phi$ be of the form $\exists\tuple{z}\phi'$,
  where $\phi'$ is built up from atomic formulas, equalities and formulas in
  one free variable, using conjunction and disjunction.  Since
  $(M,\tuple{a})\models\phi$, there is a tuple $\tuple{a}'$ witnessing
  $(M,\tuple{a}')\models\phi'$ (where $\tuple{a}$ and $\tuple{a}'$ are
  appropriately related, depending on which variables are quantified in
  $\phi$).  The fact that $(M,\tuple{a})~\to_{\text{UN}^k}^{\sigma\cap\tau}
  (N,\tuple{b})$ gives us that there is a partial homomorphism $h:M\to N$ whose
  domain is $\tuple{a}'$, such that $h(a_i)=b_i$ for all $a_i\in \tuple{a}\cap
  \tuple{a}'$, and such that $(M,x)~\approx_{\text{UN}^k}^{\sigma\cap\tau}
  (N,h(x))$ for all $x\in \tuple{a}'$.  Now, let us use
  $\langle\tuple{a}',\tuple{b}'\rangle$ again as a convenient shorthand for the
  pairwise product of $\tuple{a}'$ and $\tuple{b}'$. Then it follows from the
  definition of $K$ that all atomic formulas in signature $\tau$ that are true
  of $\tuple{a}'$ in $M$ are true of $\langle\tuple{a}',\tuple{b}'\rangle$ in
  $K$ (for relations in $\tau\setminus\sigma$, this is immediate, and for
  relations in $\sigma\cap\tau$, this follows from the fact that the atomic
  formula in question is true also of $\tuple{b}'$ in $N$). Hence (by induction
  on the conjunctions and disjunction in $\phi'$), we have that
  $(K,\langle\tuple{a}',\tuple{b}'\rangle)\models\phi'$. In fact, it can be
  seen that $\langle\tuple{a}',\tuple{b}'\rangle$ provides a witness showing
  that $(K,\langle\tuple{a},\tuple{b}\rangle)\models\phi$.

Next, consider $\psi$. We may again assume that $\psi$ is in
\unnormalform.
Recall that the definition of $K$ implies that
the natural projections from $K$ onto $N$ is a homomorphism
preserving all relations in $\tau$ (both those in $\sigma\cap\tau$ and
those in $\tau\setminus\sigma$).  It follows by  a straightforward
induction that if
$(K,\langle\tuple{a},\tuple{b}\rangle)\models\psi$, then also
$(N,\tuple{a})\models\psi$, which is not the case.
\end{proof}

To summarize: on the basis of the assumption that there is no Craig
interpolant for $\phi$ and $\psi$, we were able to show that 
$\phi\land\neg\psi$ is satisfiable, and hence, $\phi\to\psi$ is not a
valid implication. This concludes the proof of our Craig interpolation theorem.
\end{proof}

As usual, Craig Interpolation implies Beth Property.
Let $\Sigma$ be a \unfo-theory
in a signature $\sigma$ and let $R\in\sigma$ and
$\tau\subseteq\sigma$.  We say that $\Sigma$ \emph{implicitly defines}
$R$ in terms of $\tau$ if for all $\tau$-structures $M$ and for all
$\sigma$-expansions $M_1, M_2$ of $M$ satisfying $\Sigma$, we have
that $R^{M_1} = R^{M_2}$.  We say that a formula $\phi(\tuple{x})$ in
signature $\tau$ is an \emph{explicit definition} of $R$ relative to
$\Sigma$ if
$\Sigma\models\forall\tuple{x}~~(R\tuple{x}\leftrightarrow\phi(\tuple{x}))$. 
Note that the formula $\forall\tuple{x}~~(R\tuple{x}
  \leftrightarrow \phi(\tuple{x}))$ is itself not necessarily a \unfo-formula,
but this is irrelevant. 

\begin{theorem}\label{thm-beth}
  \unfo has the Projective Beth Property: whenever a \unfo-theory
  $\Sigma$ in a signature $\sigma$ implicitly defines a $k$-relation
  $R$ in terms of a signature $\tau\subseteq\sigma$, then there is a
  \unfo-formula in signature $\tau$ that is an explicit definition of
  $R$ relative to $\Sigma$.
  Moreover, if $\Sigma$ belongs to $\unfo^k$ ($k\geq 1$), then the explicit
  definition  can be found in $\unfo^k$ as well.
\end{theorem}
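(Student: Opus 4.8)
The plan is to obtain the Projective Beth Property from Craig Interpolation (Theorem~\ref{thm-craig}) by the textbook Beth-from-Craig reduction, being careful that every formula we feed into the interpolation theorem genuinely lies in \unfo (and in $\unfo^k$ for the refined statement). Fix a \unfo-theory $\Sigma$ over $\sigma$ that implicitly defines the $k$-ary relation $R$ in terms of $\tau\subseteq\sigma$; we may assume $R\notin\tau$, since otherwise $R(\tuple{x})$ is itself an explicit definition. Introduce a disjoint copy $\sigma'$ of $\sigma$ that agrees with $\sigma$ on $\tau$ and primes every symbol of $\sigma\setminus\tau$ (so $R$ becomes a fresh symbol $R'$), and let $\Sigma'$ be the $\sigma'$-theory obtained from $\Sigma$ by priming. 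Reading a $(\sigma\cup\sigma')$-structure as a $\tau$-structure equipped with two $\sigma$-expansions (the $\sigma$-reduct and the renamed $\sigma'$-reduct), the definition of implicit definability translates directly into the first-order entailment $\Sigma\cup\Sigma'\cup\{R(\tuple{x})\}\models R'(\tuple{x})$.

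Next I would apply compactness to extract a finite $\Sigma_0\subseteq\Sigma$ with $\Sigma_0\cup\Sigma_0'\cup\{R(\tuple{x})\}\models R'(\tuple{x})$, and rewrite this as
\[ \Big(\bigwedge\Sigma_0\Big)\wedge R(\tuple{x}) ~~\models~~ \Big(\bigwedge\Sigma_0'\Big)\to R'(\tuple{x})\ . \]
This is the one point that uses the \unfo-specific observation: both sides are \unfo-formulas with free variables exactly $\tuple{x}$. The left-hand side is a conjunction of the \unfo-\emph{sentences} in $\Sigma_0$ with the atom $R(\tuple{x})$, hence \unfo; the right-hand side is $\neg\big(\bigwedge\Sigma_0'\big)\vee R'(\tuple{x})$, where $\neg\big(\bigwedge\Sigma_0'\big)$ is the negation of a \emph{sentence}, which is a permitted use of negation in \unfo, so the disjunction is again \unfo. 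Applying Theorem~\ref{thm-craig} yields an interpolant $\chi(\tuple{x})$ over the common vocabulary, which is contained in $\tau$, with $\big(\bigwedge\Sigma_0\big)\wedge R(\tuple{x})\models\chi(\tuple{x})$ and $\chi(\tuple{x})\models\big(\bigwedge\Sigma_0'\big)\to R'(\tuple{x})$.

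It then remains to read off the explicit definition. From the first entailment together with $\Sigma\supseteq\Sigma_0$ we get $\Sigma\models\forall\tuple{x}\,(R\tuple{x}\to\chi(\tuple{x}))$. From the second we get $\Sigma_0'\cup\{\chi(\tuple{x})\}\models R'(\tuple{x})$; since $\chi$ uses only symbols of $\tau$, which are shared between $\sigma$ and $\sigma'$, this statement is invariant under the renaming isomorphism that replaces the primed copy by the original, giving $\Sigma_0\cup\{\chi(\tuple{x})\}\models R(\tuple{x})$ and hence $\Sigma\models\forall\tuple{x}\,(\chi(\tuple{x})\to R\tuple{x})$. Combining, $\Sigma\models\forall\tuple{x}\,(R\tuple{x}\leftrightarrow\chi(\tuple{x}))$, so $\chi$ is an explicit definition of $R$ relative to $\Sigma$. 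For the width refinement, when $\Sigma\subseteq\unfo^k$ one checks that both $\big(\bigwedge\Sigma_0\big)\wedge R(\tuple{x})$ and $\big(\bigwedge\Sigma_0'\big)\to R'(\tuple{x})$ have width $\le k$: after renaming bound variables of the sentences in $\Sigma_0$ to reuse the names $x_1,\dots,x_k$, a Boolean combination of width-$\le k$ sentences with the atom $R(x_1,\dots,x_k)$ — already in UN-normal form with $k$ free variables — is in UN-normal form with $k$ variables; hence the $\unfo^k$-version of Theorem~\ref{thm-craig} delivers $\chi\in\unfo^k$. I do not expect a real obstacle: the argument is entirely routine, and the only thing requiring vigilance is that the implication $\big(\bigwedge\Sigma_0'\big)\to R'(\tuple{x})$ — and the implicit negation it hides — never forces negating a subformula with more than one free variable, which is precisely why we keep $\Sigma_0$ as a block of sentences rather than distributing it across $R(\tuple{x})$.
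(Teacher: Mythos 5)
Your proposal is correct and follows essentially the same route as the paper: the standard Craig-to-Beth reduction via a primed copy $\Sigma'$, compactness, the implication $\bigwedge\Sigma_0\land R\tuple{x}\models\bigwedge\Sigma_0'\to R'\tuple{x}$, and an application of Theorem~\ref{thm-craig}. Your explicit checks that both sides lie in \unfo (negation of a sentence being permitted) and in $\unfo^k$ are exactly the points the paper leaves as ``straightforward to verify,'' and you even silently correct the paper's typo ($R\tuple{x}$ for $R'\tuple{x}$ on the right-hand side of its displayed implication).
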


\begin{proof}
  The argument is standard from Craig Interpolation. We give
  it here for the sake of completeness.  Suppose $\Sigma$ implicitly defines
  $R$ in terms of $\tau$.  We may assume $R\not\in\tau$ because otherwise it is
  trivial.  Furthermore, by Compactness, we may assume $\Sigma$ to be finite.
  Let $\Sigma'$ be a copy of $\Sigma$ in which all relations $S$ outside of
  $\tau$ (including $R$) have been replaced by new disjoint copies $S'$. Then
  the fact that $\Sigma$ implicitly defines $R$ in terms of $\tau$ implies that
  following first-order implication is valid:

$$\bigwedge\Sigma \land \bigwedge\Sigma' \models \forall
\tuple{x}(R\tuple{x}\to R'\tuple{x})$$

We can rewrite this into an equivalent implication of \unfo-formulas:

$$\bigwedge\Sigma \land R\tuple{x} \models \bigwedge\Sigma'\to R\tuple{x}$$

Let $\chi(\tuple{x})$ be the interpolant given by Theorem~\ref{thm-craig} for this \unfo-implication. It
is now straightforward to verify that $\chi$ is indeed an explicit definition of
$R$ relative to $\Sigma$. 
\end{proof}

\section{Satisfiability}\label{section-sat}

In this section, we show that the satisfiability problem for \unfp and for
\unfo is {\twoexptime-complete}, both on arbitrary structures and on finite
structures.  The lower bound holds already over structures with
relations of bounded arity, and, in particular, over finite trees.
Note that this is in contrast with \gfo whose complexity drops from
\twoexptime-complete to \exptime-complete when the arity of relations is
bounded~\cite{Gradel01}.  The upper bound is obtained by a reduction to the
two-way modal \mucalc and Theorem~\ref{thm-decid-mu}. Given a formula $\varphi$
of \unfp we construct in exponential time a formula $\varphi^*$ in the \mucalc
such that $\varphi$ has a (finite) model iff $\varphi^*$ has a (finite)
model. The correctness of the construction in the finite case builds on
Theorem~\ref{thm-otto-acyclic-cover}.  The same reduction to the two-way modal
\mucalc allows us to prove the finite model property of \unfo and the tree-like
model property of \unfp, i.e., Theorem~\ref{thm-finite-model-prop} and
Theorem~\ref{thm-tree-like-prop}.

We describe the reduction from $\varphi$ to $\varphi^*$ in two parts. In the
first one we consider only a special case of \unfp formulas that we call
\emph{simple}. Those are, intuitively, formulas of the global two-way \mucalc
with navigation through arbitrary relations instead of just binary relations.
The construction of $\varphi^*$ is then polynomial. In a second part we show
how the general case reduces to this one (with an exponential blow-up).

\subsection{Simple \unfp formulas}
We first consider a fragment of \unfp, which we call \emph{simple} and denote it by
\sunfp. It is a common fragment of \unfp and \gfp, which embeds the global two-way
$\mu$-calculus.  The syntax of \sunfp is given by the following grammar (recall
that we use the notation $\phi(x)$ to indicate that a formula has no
free first-order variables besides possibly $x$, but may contain some monadic
second-order free variables):
\begin{align*}
  \phi(x)::=~~~ & P(x) ~|~ X(x) ~|~ \phi(x) \wedge \phi(x) ~|~
  \phi(x) \vee \phi(x) ~|~ \neg \phi(x) 
  ~|~ [\LFP_{X,y}\phi(y)](x) ~|~ 
\\ &
  \exists y_1\ldots y_n(R(y_1\ldots y_n)\land y_i=x \land
  \!\!\!\!\!\!\!\!\bigwedge_{j\in\{1\ldots n\}\setminus\{i\}}\!\!\!\!\!\!\!\!\phi(y_j)) ~\mid~ \exists x \phi(x)
\end{align*}

Note that all formulas generated by this inductive definition have at most one
free (first-order) variable.  We denote by \sunfo the first-order
(that is, fixpoint-free) fragment
of \sunfp.

We need the following notions. A \emph{fact} of a structure $M$ is an
expression $R(a_1, \ldots, a_n)$ where $(a_1, \ldots, a_n)\in R^M$.
The \emph{incidence graph} $\inc(M)$ of a
structure $M$ is the bi-partite graph containing facts of $M$ and elements of
$M$, and with an edge between a fact and an element if the element occurs in
the fact. We say that a structure $M$ is $l$-acyclic, for $l\geq 1$,
if (i)
$\inc(M)$ has no cycle of length less than $2l$, and (ii) no element of $M$ occurs
twice in the same fact.  We call a structure acyclic, if it is $l$-acyclic for
all $l$ (i.e., the incidence graph is acyclic and no element occurs in the same
fact twice).

Based on a simple coding of relations of arbitrary arity using binary relations
we can transform a simple formula into a formula of the \mucalc and, using the
results of Section~\ref{sec-modal-logic}, obtain:

\begin{prop}\label{prop-sunfp}~
\begin{enumerate}
\item
The satisfiability problem for \sunfp is \exptime-complete, both 
on arbitrary structures and on finite structures. 
\item
If a \sunfp formula has a model, it has an acyclic model
\item
If a \sunfp formula  has a finite model, then it has a $l$-acyclic
finite model, $\forall l\geq 1$.
\item 
\sunfo has the finite model property.
\end{enumerate}
\end{prop}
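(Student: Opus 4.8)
The plan is to reduce satisfiability for \sunfp to satisfiability for the global two-way $\mu$-calculus $\ml_\mu$ by encoding a structure as a labelled version of its incidence graph, and then to obtain all four statements from Theorem~\ref{thm-decid-mu}, from the finite model property of \ml, and from Theorem~\ref{thm-otto-acyclic-cover}. Fix the schema $\sigma$ and let $\sigma'$ be the Kripke signature consisting of a proposition letter $D$ (``element node''), a proposition letter $P_R$ for each $R\in\sigma$ (``fact node of $R$''), all the unary predicates already present in $\sigma$, and, for each $R\in\sigma$ of arity $n$ and each $i\le n$, a binary relation $E_{R,i}$. To a $\sigma$-structure $M$ (possibly carrying a valuation for some free monadic variables) I associate the Kripke structure $\inc(M)$ whose domain is $\dom(M)$ (all nodes marked $D$) together with the facts of $M$ (a fact $R(a_1,\dots,a_n)$ marked $P_R$), with an $E_{R,i}$-edge from each fact $R(a_1,\dots,a_n)$ to $a_i$ and with the unary predicates of $\sigma$ and the monadic parameters inherited on the $D$-nodes. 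Conversely, any $\sigma'$-Kripke structure $K$ decodes to a $\sigma$-structure $\widehat K$ with domain $D^K$ in which $(b_1,\dots,b_n)\in R^{\widehat K}$ iff some $u\in P_R^K$ satisfies $E_{R,i}(u,b_i)$ for all $i\le n$; one checks at once that $\widehat{\inc(M)}=M$.

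Next I would define, running the table of Figure~\ref{fig:modal-semantics} together with its fixpoint clause backwards, for every \sunfp formula $\varphi(x)$ an $\ml_\mu$ formula $\varphi^*$ over $\sigma'$ of size $O(|\varphi|)$: proposition letters, fixpoint variables, Boolean connectives and $\LFP$ are handled homomorphically; $\exists x\,\psi(x)$ becomes $\jump(D\wedge\psi^*)$; and the navigation step $\exists y_1\dots y_n\,(R(y_1\dots y_n)\wedge y_i=x\wedge\bigwedge_{j\ne i}\psi_j(y_j))$ becomes $\langle E_{R,i}^{-1}\rangle\bigl(P_R\wedge\bigwedge_{j\ne i}\langle E_{R,j}\rangle(D\wedge\psi_j^*)\bigr)$. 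A routine induction on $\varphi$ --- transfinite in the fixpoint case, using the ordinal approximation of $\LFP$ recalled in Section~\ref{section-prelim} --- then shows that $K,v\models\varphi^*$ iff $\widehat K,v\models\varphi$ for every $D$-node $v$ of every $K$, and in particular $\inc(M),a\models\varphi^*$ iff $M,a\models\varphi$. Hence $\varphi$ is (finitely) satisfiable iff the $\ml_\mu$ formula $\varphi^*\wedge D$ is (finitely) satisfiable over Kripke structures; moreover $\varphi\mapsto\varphi^*\wedge D$ is polynomial-time computable and its output lies in \ml whenever $\varphi$ is fixpoint-free, i.e.\ $\varphi\in\sunfo$.

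All four items now follow. For (1), the \exptime\ upper bound is immediate from the previous paragraph and Theorem~\ref{thm-decid-mu}, while the lower bound holds because $\ml_\mu$ embeds syntactically into \sunfp (a $\ml_\mu$ formula, read as a \sunfp formula over a signature of unary and binary relations, has the same models), and (finite) satisfiability for $\ml_\mu$ is \exptime-hard by Theorem~\ref{thm-decid-mu}. For (4), if $\varphi\in\sunfo$ is satisfiable then $\varphi^*\wedge D\in\ml$ is satisfiable, hence has a finite Kripke model $K$ by the finite model property of \ml~\cite{FisherLadner}, and $\widehat K$ is then a finite model of $\varphi$. For (2), given $M\models\varphi$ I take the standard global two-way infinite unravelling $T$ of $\inc(M)$: it is globally two-way bisimilar to $\inc(M)$, has an acyclic underlying graph, and (by bisimulation-invariance of $\varphi^*$) satisfies $\varphi^*$ at a $D$-node, so $\widehat T\models\varphi$ and $\widehat T$ is acyclic --- a cycle in the incidence graph of $\widehat T$, or an element occurring twice in one of its facts, would be a cycle in the underlying graph of $T$. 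For (3), given a finite $M\models\varphi$ and $l\ge 1$, Theorem~\ref{thm-otto-acyclic-cover} gives a finite $l'$-acyclic $\bisim$-cover $K$ of $\inc(M)$ for any $l'$; taking $l'$ large enough as a function of $l$, the same reasoning yields a finite model $\widehat K$ of $\varphi$ which is $l$-acyclic, since a cycle of length below $2l$ in its incidence graph, or a repeated element in one of its facts, would be a short cycle in the underlying graph of $K$.

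The individual steps are all routine; the points needing care are the back-and-forth correctness lemma $K,v\models\varphi^*\iff\widehat K,v\models\varphi$ (in particular the transfinite induction for fixpoints and the bookkeeping of free monadic variables), and --- for (2) and (3) --- the interplay between the two notions of acyclicity, that is, cycles in the incidence graph of a $\sigma$-structure versus cycles in the underlying graph of its Kripke encoding, which is what forces the constant-factor enlargement of the parameter in Theorem~\ref{thm-otto-acyclic-cover}. One must also verify that the global modality $\jump$, which ranges over \emph{all} nodes of a Kripke structure, is handled correctly when passing to a bisimilar cover: this works because $\jump$ is invariant under global two-way bisimulation and the cover matches every element of $\inc(M)$, but it is the subtlest point of the argument.
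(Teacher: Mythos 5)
Your proposal is correct and follows essentially the same route as the paper: a polynomial encoding of facts as Kripke gadgets, a compositional translation into $\ml_\mu$ proved correct by induction, and then item (1) from Theorem~\ref{thm-decid-mu}, item (4) from the finite model property of \ml, item (2) from an acyclic unravelling, and item (3) from Theorem~\ref{thm-otto-acyclic-cover} with a constant-factor adjustment of the acyclicity parameter. The only substantive difference is your one-layer incidence encoding with labelled edge relations $E_{R,i}$ instead of the paper's two-layer gadget (a $p_R$-node whose children carry position labels $p_1,\dots,p_n$); with your encoding a repeated element in a fact appears only as two parallel edges with different labels, so in item (3) you must read ``cycle in the underlying graph'' in the multigraph sense (or insert the intermediate layer, as the paper does, which turns such repetitions into genuine $4$-cycles that the $l$-acyclic cover excludes).
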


\begin{proof}
  (1) 
  The \exptime lower bound for \sunfp follows immediately from the fact that
  \sunfp subsumes the two-way modal $\mu$-calculus $\ml_\mu$ whose
  satisfiability problem
  is \exptime-hard (cf.~Theorem~\ref{thm-decid-mu}).  For the \exptime upper
  bounds, we give a polynomial time translation from \sunfp to $\ml_\mu$, which
  preserves (un)satisfiability on arbitrary structures and on finite
  structures.

  Let $\phi$ be any \sunfp formula. Let $l$ be
  the maximal arity of the relation symbols occurring in $\phi$, and let $p_1,
  \ldots, p_l$ be fresh proposition letters. For each unary predicate $P$, we
  introduce a fresh proposition letter $p_P$. Furthermore, we associate to each
  relation symbol $R$ a proposition letter $p_R$. Intuitively, the
  idea of the translation is that, in our Kripke models,  we will encode  an $R$-tuple by
  a ``gadget'' consisting of a node satisfying $p_R$ that
  has $n$ successors (where $n$ is the arity of the relation $R$)
  satisfying $p_1$, \ldots, $p_n$, respectively, each of which has
  as a successor the corresponding element of the $R$-tuple. This is
  spelled out below in more detail.  The syntactic translation
  $[\cdot]^*$ from $\sunfp$ to $\ml_\mu$ is then as follows:
\begin{align*}
  &[P(x)]^* &=~ & p_P \\
  &[X(x)]^* &=~ & X \\
  &[\phi(x)\land\psi(x)]^* &=~ & [\phi(x)]^*\land [\psi(x)]^* \\
  &[\phi(x)\lor\psi(x)]^* &=~ & [\phi(x)]^*\lor [\psi(x)]^* \\
  &[\neg\phi(x)]^* &=~ & \neg[\phi(x)]^* \\
  &[[\LFP_{X,y}\phi(y)](x)]^* &=~ & \mu X ~ [\phi(y)]^* \\
  &[\exists y_1\ldots y_n(R(y_1\ldots y_n) \land y_i=x\land\bigwedge_{j\neq
    i}\phi_j(y_j))]^* &=~ &
  \Diamond^{-}(p_i\land\Diamond^{-}(p_R\land\bigwedge_{j\neq
    i}\Diamond(p_j\land\Diamond[\phi_j(y_j)]^*)))
  \\
  &[\exists x\phi(x)]^* &=~ & \jump[\phi(x)]^*
\end{align*}
For any structure $M$, we denote by $M^*$ the Kripke model obtained as follows:
every atomic fact $R(a_1\ldots a_n)$ of $M$ gets replaced by a substructure
consisting of a node labeled $p_R$ which has $n$ children, labeled $p_1, \ldots
p_n$, each of which has as its single child $a_i$. Then it is clear from the
construction that $(M,a)\models\phi(x)$ if and only if $(M^*,a)\models[\phi(x)]^*$.
Conversely, for every Kripke model $M$, we define a structure $M_*$ as follows:
the elements of $M_*$ are the elements of $M$. A fact $R(a_1, \ldots, a_n)$ is inserted in $M_*$ whenever
in $M$ there is a node labeled $p_R$ and $n$ (not necessarily distinct)
children of $p_R$, labeled $p_1, \ldots p_n$, each of which has $a_i$ as a
child.

It is clear from the construction that $(M,a)\models[\phi(x)]^*$ if and only if
$(M_*,a)\models\phi(x)$. 

Altogether this shows that the translation $[\cdot]^*$ preserves
(un)satisfiability, both on arbitrary structures and on finite structures.

(3) If $\phi$ has a finite model then by the construction above $[\phi]^*$ has
a finite Kripke model. By Theorem~\ref{thm-otto-acyclic-cover} this implies
that $[\phi]^*$ has a model that is $4l$-acyclic. Now notice that the
transformation of $M$ into $M_*$ described above preserves $l$-acyclicity up
to a factor of 4, and therefore, $\phi$ has a model that is is $l$-acyclic.

(2) is proved in the same way as (3), using the fact that if $[\phi]^*$ has a
model then it has an acyclic one and that the
transformation of $M$ into $M_*$ preserves acyclicity.

(4) For a \sunfo formula $\phi$, $[\phi]^*$ does not contain any fixpoint and
is therefore in  \ml. Recall from
Section~\ref{sec-modal-logic} that \ml has the finite model property. This
implies that \unfo also has the finite model property as the transformation of
$M$ into $M_*$ preserves finiteness.
\end{proof}

\subsection{Arbitrary \unfp-formulas}\label{sec-unfp-sat}

We now attack the satisfiability problem for arbitrary \unfp formulas.  We say
that a disjunction $\psi_1 \lor \psi_2$ is unary if $\psi_1\lor\psi_2$ has
at most one free variable.  We remind the reader that we use the notation $\psi(y)$ to express that
$\psi$ has \emph{at most} one free first-order variable $y$.

\begin{lemma}\label{lem:strongnormalform}
  Every \unfp sentence is equivalent to a \unfp sentence in
  \unnormalform that uses only unary disjunction, and, more precisely,
  a sentence generated by the following grammar:
\begin{equation}\label{simple-form}
 \chi(y) ::= \exists\tuple{z}~\psi(y,\tuple{z}) \mid \neg\chi(y) \mid
 \chi_1(y)\lor\chi_2(y) \mid [\LFP_{X,z}~\chi(z)](y)
\end{equation}
where $\psi(y,\tuple{z})$ is of the form:

  \begin{equation}\label{simple-CQ}
\exists \tuple{z}\big(\tau(\tuple{z})\land z_i=y \land
  \!\!\!\!\!\!\!\!\bigwedge_{j\in\{1\ldots n\}\setminus\{i\}}\!\!\!\!\!\!\!\!\phi_j(z_j)\big) 
\text{\qquad or \qquad}
  \exists \tuple{z}\big(\tau(\tuple{z}) \land
  \!\!\!\!\!\bigwedge_{j\in\{1\ldots
    n\}}\!\!\!\!\!\phi_j(z_j)\big)
\end{equation}
where $\tuple{z}=z_1\ldots z_n$, $i\leq n$, and $\tau(\tuple{z})$ is a
conjunction of relational atomic formulas with no equalities, and $\phi_j$ is
generated by the grammar~\eqref{simple-form}.
\end{lemma}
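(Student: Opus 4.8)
The plan is to prove Lemma~\ref{lem:strongnormalform} by structural induction on the \unfp sentence, first bringing it into \unnormalform (which is already available in linear time by the rewrite rules discussed in Section~\ref{section-prelim}) and then iteratively normalizing the disjunctions so that they become unary. The starting point is a \unfp formula already in \unnormalform: every existential quantifier is either immediately below another existential quantifier or heads a subformula with at most one free variable, and each ``atom'' of the existential-positive prenex block is either a positive relational atom or a (possibly negated) subformula of at most one free variable. So after this first step, a subformula of the form $\exists\tuple{z}\,\psi(y,\tuple{z})$ has its matrix $\psi$ built from relational atoms, equalities, and one-free-variable subformulas $\phi_j(z_j)$, using $\wedge$ and $\vee$. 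The two remaining tasks are (a) to eliminate equalities from the matrix except for the single allowed equality $z_i=y$ tying the quantified block to the one free variable, and (b) to push all disjunctions that are \emph{not} unary out of the way so that the matrix of each block is a pure conjunction $\tau(\tuple{z})\wedge\bigwedge_j\phi_j(z_j)$, as in~\eqref{simple-CQ}.

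For task (a), equalities between quantified variables $z_i=z_j$ can simply be eliminated by identifying the two variables (substituting one for the other throughout the matrix), which is an equivalence; equalities of the form $z_i = z_j$ where one of them is destined to be identified with $y$ are handled the same way, leaving at most one equality $z_i=y$ in each block as the normal form~\eqref{simple-CQ} permits. This is purely syntactic and does not increase width. For task (b), the key observation is the distributivity of existential quantification over disjunction combined with the fact that we only need to distribute over the \emph{non-unary} part: inside $\exists\tuple{z}(\cdots\vee\cdots)$, if a disjunction $\theta_1\vee\theta_2$ appears and has more than one free variable among the $\tuple{z}$, we distribute $\exists\tuple{z}$ over it, i.e. rewrite $\exists\tuple{z}(\cdots\wedge(\theta_1\vee\theta_2)\wedge\cdots)$ as $\exists\tuple{z}(\cdots\wedge\theta_1\wedge\cdots)\vee\exists\tuple{z}(\cdots\wedge\theta_2\wedge\cdots)$, using the disjunctive normal form of the matrix with respect to the non-unary disjunctions. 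Each disjunct in the resulting DNF is then a conjunction of relational atoms, equalities, and one-free-variable subformulas; the resulting outer disjunction is now a disjunction of \unfp formulas with at most one free variable (namely $y$), hence a unary disjunction in the sense defined. After this, every disjunction that survives is unary, matching the grammar~\eqref{simple-form} where disjunction only occurs as $\chi_1(y)\vee\chi_2(y)$.

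The induction is then organized to push these transformations through the whole formula: by structural induction, each one-free-variable subformula $\phi_j(z_j)$ is first rewritten into the shape~\eqref{simple-form}, and then the ambient block is normalized as above; the $\LFP$ and negation cases are immediate since $\neg\chi(y)$ and $[\LFP_{X,z}\chi(z)](y)$ are already of the allowed shape once $\chi$ is, and the top-level sentence is a closed formula of shape~\eqref{simple-form}. I would also note explicitly that these rewritings preserve being in \unnormalform and do not change the width $k$ — distributing $\exists\tuple{z}$ over a disjunction only duplicates an already-present quantifier block and reuses the same variable names, and eliminating equalities only removes variables — so the lemma can be stated with the width preserved, which is needed for the later complexity and interpolation results.

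The main obstacle, and the point that deserves the most care in the write-up, is controlling the interaction between the DNF conversion in task (b) and the requirement that the formula remain in \unnormalform (and in particular that each block's matrix has exactly the form~\eqref{simple-CQ} with the single equality $z_i=y$, no extra equalities, and the free variable handled consistently): one must check that after distributing $\exists$ over a non-unary disjunction, the two newly created existential blocks are still correctly ``prenex-ed'' — i.e., still directly stacked under existential quantifiers or heading a one-free-variable subformula — and that variable renaming does not silently inflate $k$. A secondary subtlety is that the DNF blowup is exponential in the size of the matrix; since the lemma only asserts equivalence (not a polynomial bound), this is acceptable, but it should be flagged, because the exponential cost here is exactly why the overall satisfiability reduction in Section~\ref{sec-unfp-sat} incurs its exponential blow-up. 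Everything else is routine application of the rewrite rules already introduced in the preliminaries.
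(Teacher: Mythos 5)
Your proposal is correct and follows essentially the same route as the paper's (much terser) proof: eliminate non-unary disjunctions by distributing them over conjunction and existential quantification in DNF fashion, accepting the exponential blowup while noting that the width is preserved, and remove equalities by identifying the corresponding quantified variables. The extra care you devote to preserving \unnormalform and the width under the distribution step is exactly the point the paper flags as important for the later complexity analysis, so nothing is missing.
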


\begin{proof}
Concerning~\eqref{simple-form}. Clearly we may assume that $\phi$ is a
sentence as all free variables can be existentially quantified without
affecting the satisfiability and the fact that $\phi$ is in \unnormalform.
It is also easy to see that non-unary occurrences of disjunction can be
eliminated at the cost of a possible exponential blowup, using the
fact that disjunction commutes with conjunction and with the
existential quantifiers. This is an exponential transformation,
reminiscent of the transformation of propositional formula into
disjunctive 
normal form, and it does not affect the width of the formula, nor the size of 
other parameters that we will define later. This remark will be
important during the complexity analysis as the width will appear in the
exponent of the complexity of some forthcoming transformations.

The form~\eqref{simple-CQ} is straightforward to obtain by eliminating equalities by
identifying the respective quantified variables.
\end{proof}

In what follows let $\phi$ be any \unfp formula of the form described
in Lemma~\ref{lem:strongnormalform}, for which we want to test satisfiability.

We denote by $\subf_\phi$ the set of all subformulas $\psi(y)$ of $\phi$ that
have one free first-order variable.  For any subformula of $\phi$ of the form
\[\exists \tuple{z}(\tau(\tuple{z})\land z_i=y \land
  \!\!\!\!\!\!\!\!\bigwedge_{j\in\{1\ldots
    n\}\setminus\{i\}}\!\!\!\!\!\!\!\!\phi_j(z_j)) 
\text{\qquad or \qquad}
\exists \tuple{z}(\tau(\tuple{z})\land
  \!\!\!\!\!\bigwedge_{j\in\{1\ldots
    n\}}\!\!\!\!\!\phi_j(z_j)) ~,
\]
we call $\tau(\tuple{z})$ a \emph{neighborhood
  type}. We denote the set of neighborhood types in
$\phi$ by $\types_\phi$. In the sequel we will sometime view a neighborhood
type $\tau(\tuple{z})$ as a structure whose universe is the set of variables of
$\tuple{z}$ and whose relations are the atoms of $\tau$.

\newcommand{\stitch}[1]{#1^{\textsf{x}}}

We now consider structures that are ``stitched together'' from copies of the
neighborhood types in $\types_\phi$. 

\begin{definition} Consider the signature that contains an $n$-ary
  relation symbol $R_\tau$ for each neighborhood type $\tau(z_1,
  \ldots, z_n)\in \types_\phi$. A structure in this new signature is
  called a \emph{stitch diagram}.  Each stitch diagram $M$ gives rise
  to a \emph{stitching} $\stitch{M}$, which is the structure (with the
  same domain of $M$) in the original signature obtained by replacing
  each $R_\tau$-tuple with a copy of the neighborhood type $\tau$
  (viewed as a structure) for all $\tau\in\types_\phi$.
\end{definition}

At this point, our basic strategy for reducing \unfp to \sunfp should be
clear: we will produce an \sunfp sentence to describe stitch diagrams
whose stitchings satisfy the desired \unfp sentence. In the 
rest of this section, we work out the details of this strategy.

It is important to realize that, even if a stitch diagram $M$ does not
contain an atomic fact $R_\tau(\tuple{a})$, it may still be the case that
$\stitch{M}\models\tau(\tuple{a})$. In this case we say that the fact
$R_\tau(\tuple{a})$ is \emph{implicit} in $M$.  For example, this could happen
if $M\models R_{\tau'}(\tuple{a})$ and $\tau$ is contained in
$\tau'$.
The following claim gives us a handle on when this phenomenon may
occur.  For any $\tau\in\types_\phi$, we denote by $|\tau|$ the number
of atomic formulas in $\tau$. We write $N\subseteq M$ if $N$ is a
not-necessarily-induced substructure of $M$.

\begin{lemma}\label{claim-implicit}
  If $R_\tau(\tuple{a})$ is implicit in a stitch diagram $M$ then
  there is an $N\subseteq M$ containing at most
  $|\tau|$ many facts, such that $R_\tau(\tuple{a})$ is already implicit
  in $N$. Moreover $N$ is connected whenever $\tau$ is.
\end{lemma}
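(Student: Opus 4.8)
The plan is to analyze what it means for $R_\tau(\tuple{a})$ to be implicit in $M$, i.e.\ for $\stitch{M}\models\tau(\tuple{a})$, and to pin down a small witnessing substructure. Recall that $\tau(z_1,\ldots,z_n)$ is a conjunction of relational atoms $\rho_1\wedge\cdots\wedge\rho_{|\tau|}$ over the original signature, viewed as a structure. Since $\stitch{M}$ is obtained by replacing each fact $R_{\tau'}(\tuple{b})$ of $M$ with a copy of $\tau'$ on the elements $\tuple{b}$, every atomic fact of $\stitch{M}$ is ``caused'' by some fact of $M$: if $\stitch{M}\models S(c_1,\ldots,c_m)$, then there is a fact $R_{\tau'}(\tuple{b})$ of $M$ and an atom $S(z_{j_1},\ldots,z_{j_m})$ of $\tau'$ such that $b_{j_1}=c_1,\ldots,b_{j_m}=c_m$. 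So first I would record this ``causation'' observation.

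Next, the key step: $\stitch{M}\models\tau(\tuple{a})$ means that, under the assignment $z_i\mapsto a_i$, every atom $\rho_\ell$ of $\tau$ holds in $\stitch{M}$. For each $\ell=1,\ldots,|\tau|$, apply the causation observation to obtain a single fact $f_\ell=R_{\tau_\ell'}(\tuple{b}^{(\ell)})$ of $M$ witnessing $\rho_\ell$. Let $N$ be the substructure of $M$ whose facts are exactly $f_1,\ldots,f_{|\tau|}$ (and whose domain is the set of elements occurring in them, which in particular contains all of $\tuple{a}$). Then $N\subseteq M$, $N$ has at most $|\tau|$ facts, and by construction each $\rho_\ell$ is already caused inside $N$, hence $\stitch{N}\models\rho_\ell(\tuple{a})$ for every $\ell$, so $\stitch{N}\models\tau(\tuple{a})$, i.e.\ $R_\tau(\tuple{a})$ is implicit in $N$. (One should note that $\stitch{N}\subseteq\stitch{M}$ and that implicitness is monotone under adding facts, which makes the ``already implicit in $N$'' wording precise.)

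For the ``moreover'' clause, suppose $\tau$ is connected, meaning its Gaifman graph on $\{z_1,\ldots,z_n\}$ is connected. I would argue that the $N$ just built is connected: consider any two elements $c,c'$ of $N$. Each occurs in some $f_\ell$, hence equals some $a_j$ after tracing through the witnessing atom — more carefully, since every atom $\rho_\ell$ of $\tau$ is over variables among $z_1,\ldots,z_n$, the elements of $\stitch{N}$ reachable through the facts $f_1,\ldots,f_{|\tau|}$ are exactly (a subset of) $\{a_1,\ldots,a_n\}$ together possibly with extra elements from the $\tau_\ell'$-copies; but in fact we can choose $N$ more carefully so that its domain is precisely $\{a_1,\ldots,a_n\}$ — each $f_\ell$ restricted to the coordinates matching $\rho_\ell$ already forces $\rho_\ell$, so we may take $N$ to be the substructure on $\{a_1,\ldots,a_n\}$ with facts $f_1,\ldots,f_{|\tau|}$ (trimming each fact to the needed structure). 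Connectedness of $N$ then follows from connectedness of $\tau$: the incidence/Gaifman graph of $N$ contains, for each atom $\rho_\ell$ of $\tau$, an edge-pattern linking the elements $a_j$ corresponding to the variables of $\rho_\ell$, so the Gaifman graph of $\tau$ embeds into that of $N$ (via $z_j\mapsto a_j$), and a connected graph maps onto a connected graph.

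The main obstacle I expect is the bookkeeping in the ``moreover'' part: one must be careful that the facts $f_\ell$ of $M$ may involve elements outside $\{a_1,\ldots,a_n\}$ and may have larger arity than what $\rho_\ell$ needs, so the clean statement that $N$ is connected requires either trimming the facts $f_\ell$ down (which is fine since the definition of implicit only looks at the stitching, and a sub-fact still lies in $M$ only if we are allowed not-necessarily-induced substructures — which we are, by the convention $N\subseteq M$ introduced just before the lemma) or allowing a few extra elements in $N$ and checking they do not disconnect it. Making this precise — i.e.\ choosing the witnessing substructure to have domain exactly $\tuple{a}$ — is the only delicate point; everything else is the routine ``atoms of the stitching come from atoms of the neighborhood types'' unwinding.
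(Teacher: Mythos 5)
Your proposal is correct and is essentially the paper's own argument, which the paper compresses into a single sentence ("we need at most one fact of $M$ to account for each atom in $\tau(\tuple{a})$"): pick one $M$-fact causing each atom of $\tau(\tuple{a})$ in the stitching, and let $N$ consist of those at most $|\tau|$ facts. On the one delicate point you raise, take the second of your two options: the ``trimming'' route is not available, since a truncated $R_{\tau'}$-tuple is not a fact of $M$ and hence cannot appear in a (not-necessarily-induced) substructure $N\subseteq M$; but keeping the full facts is harmless, because any extra elements of a fact $f_\ell$ are adjacent to the node $f_\ell$ in the incidence graph $\inc(N)$, so they cannot disconnect $N$, and connectedness then follows from connectedness of $\tau$ exactly as you describe.
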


\begin{proof}
   We need at most one fact of $M$ to account for each atom
   in $\tau(\tuple{a})$. 
\end{proof}

Let $l = \max_{\tau\in\types_\phi} |\tau|$. We will restrict attention to
stitch diagrams $M$ that are $l$-acyclic. By Item~(3) of
Proposition~\ref{prop-sunfp} this is without loss of generality. This implies
that every $N\subseteq M$ containing at most $l$ facts is acyclic. The
importance of the above claim, then, shows in two facts: (i) intuitively, there
are finitely many reasons why a fact may be implicit in $M$, and (ii) each of
these reasons is acyclic, and hence can be described in \sunfp as we will see.

%
\begin{lemma}\label{lem:maintranslation}
Let $\psi(y,\tuple{X})$ be any subformula of $\phi$ with at most one free
first-order variable. 
By induction on the structure of $\psi(y)$ we can construct a \sunfp formula
$\psi'(y)$ such that, for all $l$-acyclic stitch diagrams $M$, all $a\in M$, and all sets $\tuple{S}$ of
elements of $M$, $M \models \psi'(a,\tuple{S})$ iff $\stitch{M}
\models \psi(a,\tuple{S})$.
\end{lemma}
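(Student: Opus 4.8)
The plan is to proceed by induction on the structure of $\psi(y,\tuple{X})$, following the grammar~\eqref{simple-form} of Lemma~\ref{lem:strongnormalform}. The only genuinely new work is in the base case of an existentially-quantified conjunctive ``block'' $\exists\tuple{z}(\tau(\tuple{z})\land z_i = y\land\bigwedge_{j\neq i}\phi_j(z_j))$ (and its variable-free analogue); the Boolean and fixpoint cases will just push the translation through homomorphically, since $[\cdot]'$ is required to commute with $\land$, $\lor$, $\neg$ and $\LFP$. Concretely, for the fixpoint case $\psi = [\LFP_{X,z}~\chi(z)](y)$ I would set $\psi' = [\LFP_{X,z}~\chi'(z)](y)$, observing that positivity of $X$ is preserved because the translation does not introduce negations, and that the semantics of $\LFP$ depends only on the induced monotone operator, which by the induction hypothesis agrees on $M$ and $\stitch M$; similarly $\neg$, $\land$, $\lor$ are immediate.

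The heart of the argument is translating a block $\psi(y,\tuple{z}) = \exists\tuple{z}(\tau(\tuple{z})\land z_i=y\land\bigwedge_{j\neq i}\phi_j(z_j))$. Over $\stitch M$, this formula asserts the existence of a tuple $\tuple a$ realizing the neighborhood type $\tau$ with $a_i = a$ and $\phi_j$ holding at $a_j$. But $\tau(\tuple a)$ may hold in $\stitch M$ either because there is an explicit $R_\tau$-fact, or because $\tau(\tuple a)$ is \emph{implicit} — and by Lemma~\ref{claim-implicit}, every implicit occurrence is witnessed by a substructure $N\subseteq M$ with at most $|\tau|\leq l$ facts, which (since $M$ is $l$-acyclic) is \emph{acyclic}, and is connected when $\tau$ is. The key point is that, over acyclic stitch diagrams, the existence of such a small witnessing configuration — a homomorphic image of a bounded acyclic ``explanation'' of $\tau$, stitched from $R_{\tau'}$-facts — is exactly the kind of property that \sunfp can express: one enumerates the (finitely many, up to isomorphism) acyclic stitch diagrams $N$ over $\types_\phi$ with at most $l$ facts in which $R_\tau(\tuple b)$ is implicit for some distinguished tuple $\tuple b$, and for each such $N$ writes an \sunfp formula — built from the navigational existential quantifier of \sunfp over the relation symbols $R_{\tau'}$ — asserting that $y$ sits at the appropriate position in a homomorphic copy of $N$, and that $\phi_j'$ (obtained from the induction hypothesis) holds at the nodes corresponding to $z_j$. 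The translation $\psi'(y)$ is then a disjunction over all such $N$ and over the choice of which node of $N$ plays the role of $z_i = y$.

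I would then verify the biconditional in both directions. If $\stitch M\models\psi(a,\tuple S)$, pick a witnessing tuple $\tuple a$; apply Lemma~\ref{claim-implicit} to extract a bounded acyclic witness $N\subseteq M$ for $\tau(\tuple a)$; this $N$ is one of the finitely many diagrams enumerated, and the induction hypothesis applied to each $\phi_j$ gives $M\models\phi_j'(a_j,\tuple S)$, so the corresponding disjunct of $\psi'$ holds at $a$ in $M$. Conversely, if $M\models\psi'(a,\tuple S)$, some disjunct holds, so $M$ contains a homomorphic copy of the associated $N$ rooted appropriately at $a$; stitching $N$ into $\stitch M$ makes $\tau$ true of the images of $\tuple z$, and the induction hypothesis turns $M\models\phi_j'$ into $\stitch M\models\phi_j$, giving $\stitch M\models\psi(a,\tuple S)$. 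The main obstacle I anticipate is purely bookkeeping rather than conceptual: carefully matching variable positions between the abstract type $\tau$, the witnessing diagram $N$, and the navigational syntax of \sunfp (which only lets one move to a single coordinate of an $R_{\tau'}$-tuple at a time), and making sure the enumeration of candidate witnesses $N$ respects connectedness so that the whole configuration is reachable by a single \sunfp formula from the node named by $y$; I would handle this by treating each acyclic $N$ as a tree-shaped query and describing it by recursion on that tree, exactly as one encodes a bounded acyclic conjunctive query in the navigational fragment. One should also note, for the later complexity analysis, that the number of candidate diagrams $N$ is bounded by a function of $l$ and $|\types_\phi|$, so this step contributes only a bounded blow-up beyond what is already incurred.
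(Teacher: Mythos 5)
Your proposal is correct and follows essentially the same route as the paper: commute the translation with the Boolean and fixpoint connectives, and for an existential block reduce, via Lemma~\ref{claim-implicit} and $l$-acyclicity, to a finite disjunction over small acyclic witnessing stitch diagrams $N$ (paired with the homomorphism placing $\tau$ into $\stitch{N}$), each described in \sunfp by recursion on a tree decomposition of $N$. The bookkeeping you flag — handling one connected component of $\tau$ at a time and navigating $N$ coordinate by coordinate — is exactly how the paper resolves it.
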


\begin{proof}
The inductive translation commutes with all Boolean operators and with the \LFP~ operator.
Fix now any $\tau(\tuple{z})\in\types_\phi$ with $\tuple{z}=z_1, \ldots, z_n$, fix
an $i\leq n$, and fix a sequence of formulas $\psi_1, \ldots, \psi_{i-1},
\psi_{i+1}, \ldots, \psi_n\in\subf_\phi$ and assume $\psi$ is of the form:
\[\psi(y) ~:=~ \exists
\tuple{z}(\tau\land z_i=y\land \!\!\!\!\!\!\!\!\bigwedge_{j\in\{1\ldots
  n\}\setminus\{i\}}\!\!\!\!\!\!\!\!\psi_j(z_j))~.\]
(The argument if $\psi$ is of the form
$\exists\tuple{z}(\tau\land\bigwedge_{j\in\{i\ldots n\}}\psi_j(z_j))$ is
similar. Note that these two cases also account for the base of the
induction, if we let $n=1$).

By induction we already have constructed \sunfp formulas $\psi'_1, \ldots, \psi'_{i-1},
\psi'_{i+1}, \ldots, \psi'_n$ corresponding to $\psi_1, \ldots, \psi_{i-1},
\psi_{i+1}, \ldots, \psi_n$.

We are interested in detecting in $M$ how a node in $\stitch{M}$ may come to
satisfy $\psi$.
  We will construct a \sunfp formula that lists all the
cases in $M$ that make this happen.  It clearly suffices to consider one connected
component of $\tau$ at a time. Hence by Lemma~\ref{claim-implicit} it only
depends on a small neighborhood of $x$ in $M$.
The formula will then be essentially a long disjunction, where each disjunct
corresponds to the description of a small neighborhood of $M$ in which $\tau$ is
implicitly satisfied by a tuple of nodes satisfying in addition the formulas
$\psi_j$. Note that since we assume $M$ to be $l$-acyclic, these small
substructures are all acyclic, which will make it possible to describe them by
an (existential) formula of \sunfp.  

More precisely, consider any connected acyclic stitch diagram $N$ containing at
most $l$ facts, and any homomorphism $h:\tau\to \stitch{N}$. We now construct
an \sunfp formula $\chi_{\psi,N,h}(y)$ that describes $N$ (existentially
positively) from the point of view of $h(z_i)$, and expressing also that each
$h(z_j)$ satisfies $\psi_j$.

We shall make use of the following property of acyclic structures. If $N$ is
acyclic, i.e. $\inc(N)$ is acyclic, then there is a tree $T(N)$ such that each
of its nodes is labeled with a fact of $N$ satisfying the following properties:
(i) each atom of $N$ is the label of exactly one node of $T(N)$, (ii) if a node
$u$ of $T(N)$ is the parent in $T(N)$ of a node $v$ then their label share at
most one element of $N$, (iii) if two nodes $u$ and $v$ of $T(N)$ share an
element then either they are siblings of one is the parent of the other.
In other words, we view $N$ as a tree $T(N)$ rooted with an atom containing
$h(z_i)$ and the formula describes that tree from top to bottom.  We construct
the desired \sunfp formula by induction on the number of nodes in $T(N)$.

When $T(N)$ contains only one node, whose label is $R_\tau(a_1,\cdots,a_m)$
and, assuming $h(z_i)= a_{i_0}$, the desired formula
is then
$$\exists \tuple{y}~
R_\tau(\tuple{y})\wedge y_{i_0}=y \wedge \bigwedge_{j\neq i, h(z_j)=a_{\alpha_j}}\psi'_j(y_{\alpha_j}).
$$

Assume now that $T(N)$ is a tree whose root element is
$u=R_\tau(a_1,\cdots,a_m)$ and with several subtrees $T(N_1), T(N_2)
\cdots$. By property (ii) of $T(N)$, for all $j>0$, the label of $u$ and the
elements of $N_j$ share at most one element, say $a_{\beta_j}$. For $j> 0$, let
$h_j$ be the restriction of $h$ to the elements of $N_j$. Finally assume that
$h(z_i)=a_{i_0}$. The desired formula $\chi_{\psi,N,h}(y)$ is then:
\[
\exists \tuple{y} \Big(
R_\tau(\tuple{y}) \wedge y_{i_0}=y \wedge \!\!\!\!\!\!\bigwedge_{j\neq
  i, h(z_j)=a_{\alpha_j}}\!\!\!\!\!\!\!\!\!\!\!\!\psi'_j(y_{\alpha_j})
~~\wedge~~ \bigwedge_{j}\chi_{\psi,N_j,h_j}(y_{\beta_j}) \Big)
\]
Finally $\psi'(y)$ is the disjunction, for each $N$ and $h$ as above,
of the formulas $\chi_{\psi,N,h}(y)$.

It follows from the construction that, for all $l$-acyclic stitch diagrams $M$,
$M\models\psi'$ if and only if $\stitch{M}\models\psi$. 
\end{proof}

It follows from Lemma~\ref{lem:maintranslation} that if the
constructed \sunfo formula $\psi'$ is satisfiable (in the finite or in the
infinite), by Theorem~\ref{thm-otto-acyclic-cover} it has a $l$-acyclic model
(in the infinite this model is actually acyclic) and therefore $\psi$ is
satisfiable.
Conversely, it is easy to construct, from a (finite) model $M$ of
$\psi$, a (finite) model $M'$ of $\psi'$. Take for $M'$ the structure whose domain is
the domain of $M$ and whose relation $R_\tau$ contains all the tuples of $M$
satisfying $\tau$.
Altogether this shows that $\psi$ is satisfiable (on finite structures) if
and only if $\psi'$ is. By Proposition~\ref{prop-sunfp} this implies that \unfp is
decidable. A careful analysis of the complexity of the above translation
actually yields:

\begin{theorem}\label{theorem-sat-upper-bound}
   The satisfiability problem for \unfp is in \twoexptime, both on
   arbitrary structures and on finite structures. 
\end{theorem}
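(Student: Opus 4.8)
The plan is to finish the reduction from \unfp to \sunfp begun above and then carry out a careful size estimate. Given an arbitrary \unfp sentence $\phi_0$ whose (finite) satisfiability we want to decide, I would first replace it by the equivalent sentence $\phi$ in \unnormalform supplied by Lemma~\ref{lem:strongnormalform}, which is built from ``neighborhood-type blocks'' $\exists\tuple{z}(\tau(\tuple{z})\wedge\cdots)$ using negation, unary disjunction and $\LFP$. Setting $l=\max_{\tau\in\types_\phi}|\tau|$ and restricting attention to $l$-acyclic stitch diagrams (harmless, since $\phi'$ below will itself be a \sunfp sentence), I would then apply Lemma~\ref{lem:maintranslation} to $\phi$, obtaining a \sunfp sentence $\phi'$ with $M\models\phi'$ iff $\stitch{M}\models\phi$ for every $l$-acyclic stitch diagram $M$. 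It then remains to (a) show that $\phi$ is satisfiable (resp.\ finitely satisfiable) iff $\phi'$ is, (b) bound $|\phi'|$ in terms of $|\phi_0|$, and (c) invoke Proposition~\ref{prop-sunfp}(1), which places \sunfp satisfiability in \exptime on arbitrary and on finite structures alike.

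For (a), the implication ``$\phi'$ satisfiable $\Rightarrow$ $\phi$ satisfiable'' follows from Lemma~\ref{lem:maintranslation} once we observe, using Proposition~\ref{prop-sunfp}(2)--(3), that a satisfiable $\phi'$ has an acyclic model and a finitely satisfiable $\phi'$ has a finite $l$-acyclic one; such a model is an $l$-acyclic stitch diagram, whose stitching is a (finite) model of $\phi$. For the converse, given a (finite) model $M_0$ of $\phi$, I would take the stitch diagram $M'$ on the same domain with $R_\tau^{M'}=\{\tuple{a}\mid M_0\models\tau(\tuple{a})\}$. The key observation is that $M'$ need not be $l$-acyclic here: since $\stitch{M'}\subseteq M_0$ (so that conjunctions of atoms true in $\stitch{M'}$ remain true in $M_0$), and since by construction no neighborhood type is implicit in $M'$, a routine induction on subformulas $\psi$ of $\phi$ --- in which each block is matched by the \emph{single-fact} disjunct of the corresponding $\psi'$ built in the proof of Lemma~\ref{lem:maintranslation} --- yields $M_0\models\psi(a)$ iff $M'\models\psi'(a)$, hence $M'\models\phi'$. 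The bounded-witness property of Lemma~\ref{claim-implicit} and the $l$-acyclicity hypothesis are thus needed only for the harder direction of Lemma~\ref{lem:maintranslation}, not here. Both directions visibly preserve finiteness.

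For (b), the normal form of Lemma~\ref{lem:strongnormalform} costs a single exponential blow-up in formula size but, crucially, keeps the width $k$, the parameter $l$, and the number $|\types_\phi|$ of neighborhood types polynomial in $|\phi_0|$. In Lemma~\ref{lem:maintranslation}, each block $\exists\tuple{z}(\tau\wedge\cdots)$ becomes a disjunction $\bigvee_{N,h}\chi_{\psi,N,h}$ over connected acyclic stitch diagrams $N$ with at most $l$ facts and homomorphisms $h:\tau\to\stitch{N}$; the number of such pairs $(N,h)$ is bounded by $|\types_\phi|^{O(l)}\cdot(lk)^{O(lk)}$, hence single-exponential in $|\phi_0|$. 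Representing the inductively-built formulas so that each $\psi'_j$ is shared rather than recopied, it follows that $|\phi'|=2^{\mathrm{poly}(|\phi_0|)}$, and plugging this into the \exptime decision procedure of Proposition~\ref{prop-sunfp}(1) gives a $2^{2^{\mathrm{poly}(|\phi_0|)}}$ bound, i.e.\ \twoexptime, both on arbitrary and on finite structures. I expect the only genuine obstacle to lie exactly in this bookkeeping: the width $k$ occurs in the exponent of the translation of Lemma~\ref{lem:maintranslation}, and there is a separate exponential blow-up when passing to normal form, so one must verify these compose to a single exponential rather than a double one --- which is precisely why Lemma~\ref{lem:strongnormalform} was stated so as to leave $k$, $l$ and $|\types_\phi|$ polynomial, and why $\phi'$ must be kept in shared (non-expanded) form.
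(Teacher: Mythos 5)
Your proposal follows essentially the same route as the paper: normalize via Lemma~\ref{lem:strongnormalform}, translate to \sunfp via Lemma~\ref{lem:maintranslation}, establish equi-satisfiability using the ($l$-)acyclic models supplied by Proposition~\ref{prop-sunfp} in one direction and the explicit stitch diagram $R_\tau^{M'}=\{\tuple{a}\mid M_0\models\tau(\tuple{a})\}$ in the other, and bound the size of the output \sunfp sentence so that Proposition~\ref{prop-sunfp}(1) yields the \twoexptime bound. Your extra care in the converse direction (where $M'$ need not be $l$-acyclic, so Lemma~\ref{lem:maintranslation} cannot be invoked verbatim and a direct induction via the single-fact disjuncts is needed) addresses a point the paper glosses over, and your size bookkeeping matches the paper's $|\phi|^{O(r\cdot k\cdot l)}$ estimate, modulo the harmless inaccuracy that $|\types_\phi|$ may be exponential rather than polynomial in $|\phi_0|$ after the normal-form step (which still leaves $|\types_\phi|^{O(l)}$ single-exponential).
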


\begin{proof}
  Assume first that the input \unfp formula $\phi$ satisfies the simplifying
  assumptions of Step~1. Recall that we denote by $l$ the maximal number of
  conjuncts in neighborhood types, and by $k$ the \width of the formula.  Note
  that the formula $\chi_{\psi,N,h}(y)$ is only polynomially long in the length of
  $\psi$, but for any given $\psi$, the number of possible structures
  $N$ and homomorphisms $h$ can be
  exponential.  More precisely, each structure $N$ to be considered has at most
  $l$ facts and domain size at most $k \cdot l$. Moreover the number of
  possible atomic relations is $|\types_\phi|=O(|\phi|)$ and each relation has
  arity at most $k$ and cannot contain twice the same element. There are at
  most $|\phi|^{O(k\cdot l)}$ many such structures, up to isomorphism.  For
  each such structure $N$, since the domain size is at most $l\cdot k$, the
  number of homomorphisms $h:\tau\to\stitch{N}$ is at most $(l\cdot k)^k =
  |\phi|^{O(k)}$. All in all, the number of disjuncts occurring in $\psi'$ is
  bounded by $|\phi|^{O(k\cdot l)}$. Lets now consider the size of one such
  disjunct: as it contains formulas of the form $\psi_i'$ for smaller formulas,
  we obtain by induction that the size of the \sunfp formula $\phi'$ is bounded by
  $|\phi|^{O(r\cdot k\cdot l)}$ where $r$ is the nesting depth of existential
  blocks.

  Consider now the general case of a \unfp formula $\theta$. Putting $\theta$
  in \unnormalform is linear time. The transformation of $\theta$ into a
  formula $\phi$ satisfying the simplifying assumption is exponential in time
  but produces a formula whose parameters $k,l$ and $r$ are only polynomial
  (actually linear) in
  the size of $\theta$. Hence, from the previous paragraph, it follows that the
  size of the resulting \sunfp formula $\psi'$ can be bounded by
  $(2^{|\theta|})^{|\theta|^c}$ for some constant $c$, that is exponential in $|\theta|$.

  Hence, we obtain by Proposition~\ref{prop-sunfp} that the satisfiability
  problem is in 2-\exptime, both on arbitrary structures and on finite
  structures.
 \end{proof}

\medskip

We conclude this section by proving Theorem~\ref{thm-finite-model-prop} (Finite
Model Property of \unfo)
and Theorem~\ref{thm-tree-like-prop} (Tree-like Model Property of \unfp) using the reductions described above. 

\begin{proof}[Proof of Theorem~\ref{thm-finite-model-prop}] 
Recall the construction of
a \sunfp formula from a \unfp formula described above, and observe
that, when starting with a
formula of \unfo, we actually obtain a formula of \sunfo. Consider now a satisfiable
\unfo formula $\phi$. By construction the resulting \sunfo formula $\phi'$ is
satisfiable. By Item~(4) of Proposition~\ref{prop-sunfp} $\phi'$ has a finite
model $N$. By construction $\stitch{N}$ is a finite model of $\phi$.
\end{proof}

\begin{proof}[Proof of Theorem~\ref{thm-tree-like-prop}]
Similarly, consider a satisfiable \unfp formula $\phi$. By construction the resulting \sunfp formula $\phi'$ is
satisfiable. By Item~(2) of Proposition~\ref{prop-sunfp} $\phi'$ has an acyclic
model $N$. By construction this implies that $\stitch{N}$ is a model of $\phi$
that has tree-width at most $k-1$, where $k$ is the width of $\phi'$.
\end{proof}

\subsection{Lower bounds and restricted fragments}

The complexity result of Theorem~\ref{theorem-sat-upper-bound} is
tight:

\begin{prop}\label{prop-sat-lower-bound}
  There is a fixed finite signature such that the satisfiability problem for $\unfo$ is
  \twoexptime-hard, both on arbitrary structures and on finite structures.
\end{prop}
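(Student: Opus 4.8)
The plan is to prove \twoexptime-hardness by reducing from the word acceptance problem of a fixed alternating Turing machine $M$ whose $2^n$-space acceptance problem is \twoexptime-hard (a suitable universal such machine); by the classical fact that alternating exponential space equals \twoexptime\ this is legitimate, and since $M$ is fixed the signature of the target sentence depends only on $M$, hence is fixed. Given an input word $w$ of length $n$, I would construct in polynomial time a \unfo\ sentence $\varphi_{M,w}$, over a signature consisting of two binary ``bit'' relations $E_0,E_1$ together with a constant number of further binary and unary symbols, such that $\varphi_{M,w}$ is satisfiable iff $M$ accepts $w$, and such that every model of $\varphi_{M,w}$ codes a finite tree. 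This gives \twoexptime-hardness over finite trees; and since \unfo\ has the finite model property (Theorem~\ref{thm-finite-model-prop}), satisfiability and finite satisfiability coincide for \unfo, so the lower bound holds over finite structures as well. Together with Theorem~\ref{theorem-sat-upper-bound} this settles the complexity of both problems.

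A model of $\varphi_{M,w}$ codes an accepting computation tree of $M$ on $w$. Distinguished nodes, the \emph{configuration roots}, form a tree under a binary ``successor'' relation mirroring the alternation of $M$; below each configuration root hangs a full binary tree of depth $n$ whose $2^n$ leaves are the tape cells, the $E_0/E_1$-path from root to leaf spelling out the $n$-bit address of the cell, each leaf carrying the tape symbol and, where appropriate, the head position and control state. Each configuration root also carries a step clock of $2^{O(n)}$ bits --- enough to count the $2^{2^{O(n)}}$ possible configurations --- encoded by a binary tree of depth $O(n)$ in the same way; the initial configuration has clock value $0$, the clock increments by one along every successor edge, and a configuration whose clock has reached its maximum is forced to be halting. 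Finally, to make cross-configuration checks local, every tape cell carries explicit edges to the cell of the same address and to the cell of the next address in its successor configuration. The sentence $\varphi_{M,w}$ conjoins: (i) the root codes the initial configuration on $w$; (ii) ``shape'' conditions forcing every configuration root to carry a full depth-$n$ binary tree of cells and a clock gadget, phrased as ``every gadget node reachable from a configuration root by fewer than $n$ (resp.\ $O(n)$) bit-steps has both a $0$-child and a $1$-child''; (iii) the clock is $0$ at the root and increments along successors; (iv) correctness of the same-address and next-address edges; (v) the transition constraints, i.e.\ each cell of a successor configuration holds what the transition function prescribes from the same cell and its two neighbours in the parent; (vi) alternation: an existential non-halting configuration has a successor realizing \emph{some} transition, a universal non-halting configuration one realizing \emph{every} applicable transition; (vii) every halting configuration is accepting.

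All of this stays inside \unfo\ because the material sitting under a universal requirement is always either purely positive navigation along the gadgets, or a comparison of two binary addresses, which is also positive: ``bit $j$ of $x$ disagrees with bit $j$ of $y$'' is a disjunction of two conjunctive queries (one per polarity), each merely walking $j$ steps up a binary tree and inspecting the last edge, with $j$ ranging over polynomially many positions. Hence every negation actually needed is applied to a formula with a single free variable: correctness of the same-address edge is $\forall x\,\neg\exists y\,(S(x,y)\wedge D(x,y))$ where $S$ is that edge and $D(x,y)$ the positive ``different address'' formula, and conditions (ii) and (vii) become $\neg\exists x(\cdots)$ with single-variable negations inside. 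The local consistency checks (iii)--(v) are conjunctive queries of length $O(n)$ traversing the relevant address and clock gadgets in lock-step --- this is where we use that \unfo\ extends the language of conjunctive queries --- and alternation (vi) is a positive disjunction, respectively conjunction, of such conjunctive queries over the finitely many transitions of $M$. For correctness: if $M$ accepts, a finite accepting computation tree codes directly into a model; conversely, in any model the configuration roots reachable from the initial one form a tree whose leaves are all accepting and whose depth is bounded by the clock range, hence finite, and it unfolds to an accepting computation of $M$ on $w$.

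The step I expect to be the main obstacle is exactly this coexistence of ``for all cells / for all positions'' requirements with the rule that negation may only be applied to formulas with at most one free variable: the encoding must be engineered so that every cross-configuration and cross-address comparison is carried by an explicit edge whose \emph{presence} is a positive existential statement and whose \emph{correctness} is the negation of a one-variable formula, and so that termination of the simulated computation is forced by the explicit, polynomially describable clock rather than by a well-foundedness assertion (which is not first-order expressible). Verifying that each of the clauses (i)--(vii) really falls within \unfo, and that the address and clock bookkeeping is sound, is the bulk of the routine but lengthy work; the fixed-signature and finite-tree claims are then immediate, since the only relations used are $E_0,E_1$ and a bounded number of additional binary and unary symbols.
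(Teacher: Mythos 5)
Your overall strategy is the same as the paper's: reduce from acceptance of a fixed alternating exponential-space Turing machine, encode an accepting run as a finite tree in which each configuration node carries a depth-$n$ binary address gadget whose leaves are the tape cells, and push every universal requirement into a sentence of the form $\neg\exists\tuple{x}(\cdots)$ whose matrix uses only unary negations. The fixed-signature, finite-tree, and finite-satisfiability remarks also match the paper. However, there is one genuine weak point in your version: the explicit same-address and next-address edges. You only enforce their \emph{soundness} ($\forall x\,\neg\exists y\,(S(x,y)\wedge D(x,y))$), never their \emph{totality}. All of your cross-configuration constraints (iii)--(v) are then negations of existential ``violation'' queries that travel along these edges, so the structure with \emph{no} $S$-edges at all satisfies them vacuously, and arbitrary garbage configurations become models of $\varphi_{M,w}$; the reduction as stated is unsound. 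Worse, the natural totality statement --- ``every cell has an $S$-edge into \emph{each} successor configuration of its configuration'' --- requires negating a formula with two free variables (the cell and the target configuration root), which \unfo forbids. The gap is repairable (e.g.\ demand instead that every cell of a non-initial configuration has an \emph{incoming} $S$-edge from a same-address cell of the \emph{parent} configuration, which is unique, so the inner negated formula has one free variable), but you would need to say this.

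The paper sidesteps the issue entirely by not introducing any auxiliary edges or clock. It compares two leaves $x,y$ directly: ``$x$ and $y$ represent the same cell in successive configurations'' is the positive path formula $x\uparrow^{n+1}\downarrow^{n+2}y$ conjoined with $\bigwedge_i(P_1^{\uparrow i}(x)\leftrightarrow P_1^{\uparrow i}(y))$, where each $P_1^{\uparrow i}$ is a unary formula, so each biconditional is a positive combination of unary formulas and their unary negations. The whole consistency condition is then one sentence $\neg\exists xy(\cdots\wedge A(x)\wedge\neg A(y))$, and totality is automatic because the shape axiom forces every address to be realized under every configuration node. Likewise, instead of your clock gadget, the paper simply assumes the machine itself maintains a step counter (runs in doubly exponential time), so that any model's configuration tree is forced to be well-founded. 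Your clock is not wrong, just extra machinery; the edge-totality issue is the part you must fix.
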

\begin{proof}
  Fix an alternating $2^n$-space bounded Turing machine $M$ whose word problem
  is 2-\exptime-hard. We may assume that the Turing machine runs
  in double exponential time (e.g., by maintaining a counter). 
 Let $w$ be a word in the input alphabet of $M$.  We
  construct a formula $\phi_{w}$ that is satisfiable if and only if $M$ accepts
  $w$. Moreover, if $\phi_{w}$ is satisfiable, then in fact it is satisfied in
  some finite tree structure.  In this way, we show that the lower bound holds
  not only for arbitrary structures, but also for finite trees and for any
  class in-between. The formula $\phi_{w}$ describes an (alternating) run of
  $M$ starting in the initial state with $w$ on the tape, and ending in a final
  configuration.

  The run is encoded as a finite tree whose nodes labeled by a unary
  predicate $C$ represent configurations of the Turing machine, and
  where a child-edge between two $C$-nodes indicates that the second
  node represents a successor configuration of the first. Each
  $C$-node is also labeled by a unary predicate $Q_i$ indicating the state
  of the Turing machine in that configuration. In addition,
  each $C$-node is the root of a subtree uniformly of height $n$, in which
  each non-leaf node has a child satisfying $P_0$ and a child
  satisfying $P_1$ (and no children satisfying both). We can associate to every leaf node of this
  subtree a number between $0$ and $2^n$, determined by whether each of
  its $n$ ancestors (the node itself included) satisfies $P_0$ or $P_1$. Thus, each leaf node of the
  subtree represents a tape cell, and, using further unary predicates,
  we can encode at each tape cell the current content of that tape
  cell, and whether the head is
  currently located there. See Figure~\ref{fig:TMencoding}. All in
  all, the schema of the structure consists of a binary relation $R$,
  unary relations $C, P_0, P_1$, a unary relation $Q_i$ for each 
  state of the Turing machine, a unary relation for each element of
  the alphabet, and a unary relation $H$ to represent the head
  position.

\begin{figure}
\begin{center}
\includegraphics[scale=0.35]{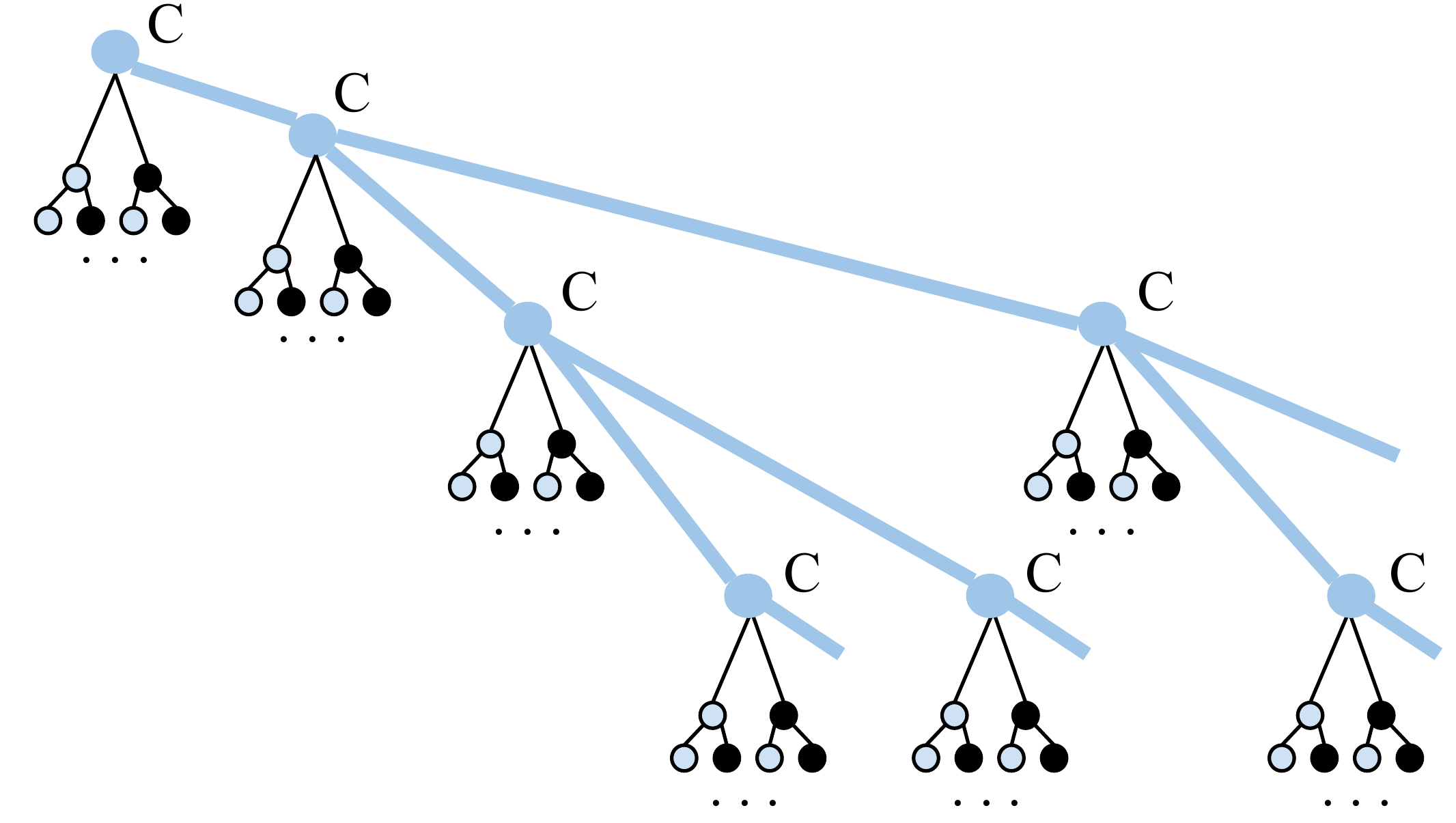}
\end{center}
\caption{Encoding of a run of the alternating Turing machine}\label{fig:TMencoding}
\end{figure}

The construction
  of the formula $\phi_w$ is based on the above encoding of runs as
  structures. More precisely, $\phi_w$ is the conjunction of
\begin{itemize}
\item A formula $\phi_1$ expressing that the root node is labeled $C$, 
    and that
    each $C$-node is the root of a subtree uniformly of height $n$ in which every non-leaf node has a child satisfying
    $P_0$ and a child satisfying $P_1$ (and no children satisfying both).
\item A formula $\phi_2$ expressing that, whenever two leaf nodes
    represent the same tape cell in the same configuration, then they 
    agree on all unary predicates (note that, in \unfo, we cannot 
    force that there is only one node representing each tape cell in a 
    given configuration, because we cannot express
    inequality).
\item A formula $\phi_3$ expressing that, whenever two leaf nodes
    represent the same tape cell in the successor configuration, and 
    this tape cell is not the head position, then the two nodes
    agree on all unary predicates. 
\item A formula $\phi_4$ encoding the transition function of the 
   Turing machine (i.e., whenever, in some configuration, the Turing
  machine is in a $\exists$-state ($\forall$-state), and its head reading a particular
  letter, then for some (respectively, for every) possible transition
  there exists a corresponding successor configuration).
\item A formula $\phi_5$ expressing that, in the initial (root)
  configuration, the tape content is $w$ and the Turing machine is
  in the initial head position and state; and all final configurations
  (i.e., $C$-nodes without $C$-successors) are accepting configurations.
\end{itemize}

\noindent We omit a detailed definition of the formulas in question, which is
tedious but not difficult. For example, if we use
$x\uparrow^n\downarrow^m y$ as a shorthand for a \unfo formula stating
that there is a path going $n$ steps up in a tree from $x$ and then
$m$ steps down reaching $y$, if we use $\text{leaf}(x)$ as a shorthand
for  $\neg\exists y R(x,y)$, and if we use $P_1^{\uparrow i}(x)$ as a
shorthand for a \unfo formula that $x$ has an $i$ step ancestor satisfying
$P_1$, then  $\phi_2$ can be expressed
  as follows (for every unary predicate $A$):
\begin{equation*}
\neg\exists x y(\text{leaf}(x)\land \text{leaf}(y)\land
x\uparrow^n\downarrow^n y \land \bigwedge_{i=0\ldots n-1} (P_1^{\uparrow i}(x) \leftrightarrow
P_1^{\uparrow i}(y)) \land A(x) \land\neg A(y))
\end{equation*}
For $\phi_3$ and $\phi_4$ we make use of the following \unfo formula expressing
the fact that $x$ and $y$ denote the same tape position in successive
configurations:
\begin{equation*}
\text{leaf}(x)\land \text{leaf}(y)\land (x\uparrow^{n+1}\downarrow^{n+2}y)\land
\bigwedge_i(P_1^{\uparrow i}(x)\leftrightarrow P_1^{\uparrow i}(y))
\end{equation*}
The rest of the construction is straightforward.
\end{proof}

The above proof actually established the lower bound for arbitrary structures,
for finite trees, and on any class in-between. Moreover, the proof only uses
formulas that have negation depth~2. For formulas of negation
depth~1, the satisfiability problem turns out to have a lower
complexity.

\begin{thm}\label{thm:negationdepth1}
  The satisfiability problem for $\unfo$ formulas of negation depth 1 is
  \npnp-complete (even for formulas containing unary predicates only).
\end{thm}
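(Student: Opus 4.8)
The plan is to establish matching upper and lower bounds for the \npnp\ membership and hardness of the satisfiability problem for \unfo\ formulas of negation depth~$1$, restricting attention throughout to formulas over a signature of unary predicates only (as announced in the statement).

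For the upper bound, I would first analyze the syntactic shape of a negation-depth-$1$ \unfo\ sentence. After putting it in \unnormalform\ (Lemma on normal forms gives this in linear time), such a sentence is a positive existential combination of atoms and of \emph{negated} positive-existential formulas in one free variable; the negations cannot be nested, so each negated subformula $\neg\exists\tuple{z}\,\psi(x,\tuple{z})$ has a quantifier-free matrix $\psi$ built only from unary atoms $P(x), P(z_j)$ (since the signature is unary, there are no relational atoms tying the $z_j$ together, so each such formula is equivalent to a Boolean combination of unary atoms on $x$ — i.e.\ it simply constrains the 1-type of $x$). Consequently the whole sentence is equivalent to an existential-positive sentence over ``colours'' (1-types over the unary predicates), and a model, if one exists, can be taken to be a set of elements each carrying a 1-type. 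The idea is: guess nondeterministically a small set of 1-types that will be realized (polynomially many suffice, since each quantified block has polynomially many variables and only their 1-types matter), and then verify. The verification of the positive part is in \np; the verification that the negative constraints are met requires checking, for each negated conjunct $\neg\exists\tuple{z}\,\psi$, that no choice of realized 1-types satisfies $\psi$ — which is a \conp\ (hence \np-oracle) check. So the whole procedure is a nondeterministic polynomial-time algorithm with an \np\ oracle, giving \npnp. I would be careful here to double-check that ``only 1-types matter'' really does collapse the problem, using the fact (from the preliminaries) that \unfo\ is invariant under UN-bisimulations — over a purely unary signature, any two elements with the same 1-type are UN-bisimilar, so we only need to decide which 1-types to populate.

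For the lower bound, I would give a reduction from a canonical \npnp-complete (i.e.\ $\Sigma^p_2$-complete) problem, namely satisfiability of a quantified Boolean formula of the form $\exists \tuple{p}\,\forall \tuple{q}\,\theta(\tuple{p},\tuple{q})$ with $\theta$ in, say, 3-DNF or 3-CNF. The existential block $\exists\tuple{p}$ is encoded by letting a single model element carry a guessed truth assignment to $\tuple{p}$ via unary predicates $P_1,\dots,P_n$ (one predicate per variable); the outer $\exists$ of the \unfo\ sentence corresponds to guessing this element / assignment. The inner $\forall\tuple{q}$ becomes a negated existential: $\neg\exists$ an element encoding an assignment to $\tuple{q}$ that falsifies $\theta$ given the already-fixed $\tuple{p}$-values. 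The key point is that this negated subformula has exactly one free variable (the element carrying the $\tuple{p}$-assignment), the negation is not nested, and everything inside is a positive existential over unary predicates — so the construction lives in \unfo\ of negation depth~$1$ over a unary signature. Matching the two sentences so that satisfiability of $\phi_w$ is equivalent to truth of the QBF is then a routine encoding check; one must make sure the $\tuple{p}$- and $\tuple{q}$-bits are forced to be Boolean (each $P_i$ either holds or not — automatic) and that the DNF/CNF clauses translate correctly.

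The main obstacle, I expect, is the upper bound — specifically, arguing rigorously that negation depth~$1$ plus a unary signature really does reduce model existence to a polynomially-bounded set of 1-types, and that nesting of positive existential quantifiers (which is allowed to arbitrary depth even at negation depth~$1$) does not blow up the search space. The subtlety is that a deeply nested positive-existential block can have many variables, but since each variable's relevant content is just its 1-type, and there are only exponentially many 1-types (so polynomially many \emph{bits} to guess per element), one still gets an \np-sized certificate; I would need to verify that the interaction between the positive part and the (universally-quantified, oracle-checked) negative part is correctly captured by a single \np\ oracle call rather than requiring alternation beyond $\Sigma^p_2$. Once that bookkeeping is pinned down, membership in \npnp\ follows, and together with the hardness reduction this gives \npnp-completeness.
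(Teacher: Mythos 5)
Your proposal has two genuine gaps. First, you restrict the entire argument, including the upper bound, to signatures with unary predicates only. The parenthetical in the statement strengthens the \emph{lower} bound (hardness holds already for unary signatures); the membership claim must cover arbitrary \unfo formulas of negation depth~1, including those with higher-arity relations, and there your 1-type analysis collapses: the positive existential part may require witnesses linked by relational atoms, and the negated subformulas may contain relational atoms over their quantified variables, so satisfiability is no longer determined by which 1-types are realized. The paper's upper bound instead brings the formula into prenex form $\exists\tuple{x}\forall\tuple{y}\chi'$ (possible precisely because negations are unnested and unary), observes that every \emph{positive} atom of $\chi'$ uses only variables from $\tuple{x}$, and deduces a polynomial-size model property: one may keep just the witnesses $\tuple{a}$ for $\tuple{x}$ together with the positively occurring facts about them. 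Satisfiability is then decided by guessing such a polynomial-size structure and invoking the \pnplog model-checking procedure for bounded negation depth, which yields membership in \npnp for arbitrary signatures.

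Second, your lower bound as sketched does not correctly simulate the universal block. You encode $\forall\tuple{q}$ as ``$\neg\exists$ an element encoding a falsifying assignment to $\tuple{q}$,'' but a first-order existential ranges only over elements actually present in the model; a model may simply omit every element encoding the falsifying assignment, making the negated existential vacuously true and the formula satisfiable even when the QBF is false. The paper avoids this by encoding truth values, not whole assignments, as elements: it asserts $\exists x\, T(x)$ and $\exists x\, F(x)$ (and that no element satisfies both), represents the assignment to $\tuple{q}$ by a \emph{tuple} of variables each ranging over the $T$- and $F$-elements, and writes the universal block as $\neg\exists\tuple{y}\bigl(\bigwedge_i(T(y_i)\lor F(y_i))\land -\phi\bigr)$; since both truth-value elements are guaranteed to exist, every assignment to $\tuple{q}$ is realized by some choice of tuple. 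Your encoding of the existential block via unary predicates on a guessed element is fine in spirit (the paper does something analogous with predicates $P_i$ recording which truth value $x_i$ takes), but without a coverage device for the universal block the reduction fails in one direction.
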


The proof of Theorem~\ref{thm:negationdepth1} is postponed until
Section~\ref{section-model-check} as it makes use of results
regarding the complexity of model checking which are obtained there.

\section{Model Checking}\label{section-model-check}
In this section we study the complexity of the model-checking problem for \unfo
and \unfp. The model-checking problem takes as input a structure $M$ and a
sentence $\phi$, and it asks whether $\phi$ is true in $M$. We focus here on
the \emph{combined complexity} of the model-checking problem, where the input
consists of a sentence and a structure.  It was already observed in
\cite{tencate07:navigational} that the model checking problem for \unfo (there
called CRA(mon$\neg$)) is in \pnp . Here, we show that the problem is in fact
$\pnplogsq$-complete, and that the model checking problem for $\unfp$ is in
$\npnp\cap\conpnp$.  We refer to the reader to Section~\ref{section-complexity}
for the definition of the relevant complexity classes.

\subsection{Model checking for \unfo}

We start by showing that the model checking problem for \unfo is
{\pnplogsq-complete}. We then show that bounding the nesting of negations
lowers the complexity to \pnplog-complete and that allowing subformula-sharing increases the
complexity to \pnp-complete.

In order to give a better intuition about the upper bound algorithms we start
by presenting a naive evaluation algorithm. Recall that the \emph{negation
  depth} of a \unfo formula is the maximal nesting depth of negations in its
syntactic tree.  Let $\phi(x)$ be a formula of \unfo of negation depth~$l$ and $M$
a model. Testing for an element $u \in \dom(M)$, whether $M\models
\phi(u)$, can be done by induction on~$l$ using the following bottom-up strategy.

If $l=0$ then $\phi$ is existential and we can test whether $\phi$ holds using
one call to an \np-oracle (the oracle guesses witnesses fror the existentially quantified
variables and then checks whether the remaining quantifier-free part
of the formula holds).

If $l\neq 0$, then, for all elements $v$ of $M$, and all maximal subformulas $\psi(y)$ of
$\phi$ that have negation depth $l-1$, we test whether $M\models \psi(v)$
using the induction on $l$. All these tests can be performed
independently of each other, and hence (by induction hypothesis) we
need only $l-1$  parallel calls to the \np-oracle. Based on the
outcomes of these calls, we
compute a new structure $M'$ by expanding $M$ with new unary predicates
recording for each element of $M$ which of the maximal subformulas
$\psi(y)$ of negation depth $l-1$
hold at that element. We also transform $\phi$ into $\phi'$ replacing
$\neg\psi(y)$ with the appropriate newly introduced unary predicate. 
It remains to evaluate $\phi'$ on $M'$ using the same algorithm as in  the base case where $l=0$.

Altogether this yields an algorithm that makes $l$ parallel calls
to an \np-oracle. When $l$ is constant this implies that the total process can
be made in \pnplog and this gives the upper bound of
Theorem~\ref{thm:bounded-negation} below. When $l$ is at most $\log |\phi|$
this implies that the total process can be made in \pnplogsq (recall the
discussion in Section~\ref{section-complexity}) and
the upper bound of Theorem~\ref{thm-unfo-model-check} essentially reduces
the general case to this case.

\begin{theorem}\label{thm-unfo-model-check}
 The model checking problem for \unfo is $\pnplogsq$-complete.
\end{theorem}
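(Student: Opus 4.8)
The plan is to establish a matching upper and lower bound, building on the naive evaluation procedure described just above for the upper bound and reducing from \lexisat for the lower bound.

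\textbf{Upper bound.} I would start from the naive algorithm. On an input $(M,\phi)$, writing $l$ for the negation depth of $\phi$, it runs in $l$ consecutive rounds: the $j$-th round computes, simultaneously at every element of $M$, the truth values of all maximal subformulas of negation depth $j$; once the unary predicates recording rounds $1,\dots,j-1$ have been added to $M$, this is a single round of polynomially many parallel \np-queries. So the problem lies in \ptime with $l$ rounds of parallel \np-queries, hence in $\pnplogsq$ as soon as $l=O(\log(|M|+|\phi|))$, by the characterisation of the classes $\pnplogi$ recalled in Section~\ref{section-complexity}. The catch is that the negation depth of a \unfo formula can be linear in its size (for instance a formula expressing alternating reachability of depth $l$ has negation depth $\Theta(l)$), so the heart of the argument is a preprocessing step reducing the general case to the case $l=O(\log(|M|+|\phi|))$.

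\textbf{The preprocessing, and the main obstacle.} The idea would be to exploit that $M$ is fixed. First, any subformula whose value at all elements of $M$ can be computed deterministically in polynomial time, given the extensions of its negated subformulas, is folded into the deterministic part of the computation and costs no oracle round; this already handles the ``modal-like'' subformulas, which is why the alternating-reachability formula needs no \np-call at all. What then remains is a ``skeleton'' of genuinely non-deterministic evaluations, and one must reschedule the \np-queries over this skeleton --- by a divide-and-conquer argument on the syntax tree, exploiting that the information passed between consecutive negation levels is a single unary predicate --- so that only $O(\log(|M|+|\phi|))$ rounds of \emph{non-adaptive} parallel \np-queries survive, rather than the $l$ adaptively chained rounds of the naive pass. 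I expect this rescheduling to be the main obstacle: the obvious bottom-up evaluation is inherently sequential in the negation depth, and collapsing it into logarithmically many parallel rounds --- each an ordinary batch of \np-queries and not an $O(\log)$-alternation computation --- is precisely the delicate point. Once it is done, the naive algorithm applied to the logarithmic-depth residual formula yields the $\pnplogsq$ upper bound.

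\textbf{Lower bound.} For $\pnplogsq$-hardness I would reduce from \lexisat in the case $i=2$, shown $\pnplogsq$-complete above. Given a Boolean formula $\theta(x_1,\dots,x_N)$ and $k\le\log^2 N$, one constructs in polynomial time a structure encoding $\theta$ in the standard way (literals, clauses, and a linear order on the variables) together with a directed ``stage path'' of length $k$, whose $j$-th node represents stage $j$ of the bit-by-bit construction of the lexicographically maximal satisfying assignment. The \unfo sentence then states that the $k$-th node of the path carries the bit $1$, where the bit of stage $j$ is characterised, using one further level of negation, by: $\theta$ is satisfiable under an assignment that agrees, on $x_1,\dots,x_{j-1}$, with the bits of stages $1,\dots,j-1$ (themselves defined by the same construction) and sets $x_j$ to $1$. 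Each satisfiability statement is an existential-positive (conjunctive-query-style) formula over the structure, and the ``agrees with the earlier stages'' clause is obtained by navigating the stage path; the encoding needs some care to keep the sentence of polynomial size, but once set up its correctness is immediate from the definition of the lexicographically maximal solution, exactly as in the completeness proof for \lexisat. Together with the upper bound this gives $\pnplogsq$-completeness.
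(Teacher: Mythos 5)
Your upper bound stops exactly where the real work begins. You correctly observe that the naive bottom-up evaluation spends one round of parallel \np-queries per level of negation, and that this yields \pnplogsq only when the negation depth is $O(\log)$; but the ``rescheduling'' that is supposed to collapse a linear chain of adaptive rounds into $O(\log)$ rounds is precisely the step you leave unproved (``once it is done\ldots''). Note also that the information passed between consecutive negation levels is not a single bit but one unary predicate per maximal subformula, i.e.\ polynomially many bits, and certifying a guessed intermediate predicate requires verifying non-memberships as well as memberships, so a naive guess-and-verify split of the syntax tree does not go through. The paper does not prove this balancing from scratch either: it reduces \unfo model checking to Schnoebelen's Tree Block Satisfaction problem TB(SAT) --- one block per subformula in one free variable, with $|\dom(N)|$ output gates recording which elements satisfy it --- and then invokes the known \pnplogsq upper bound for TB(SAT), which encapsulates exactly the balancing trick you flag as the main obstacle. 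Without that citation or your own proof of the rescheduling, your upper bound is incomplete.

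Your lower bound has a quantitative gap. With a stage path of length $k\leq\log^2 N$ and one bit decided per stage, the formula expressing ``the bit of stage $j$ is $1$'' must re-embed the formulas for stages $1,\ldots,j-1$: plain \unfo has no fixpoints and no subformula sharing (sharing is exactly the \unfolet extension, whose model checking is \pnp-complete), and the structure cannot store the stage bits, since they depend on the lexicographically maximal assignment of the input formula rather than on the fixed structure. Hence the stage-$j$ formula has size at least on the order of $2^{j}$, which at $j\approx\log^2 N$ is $N^{\Theta(\log N)}$, not polynomial; ``navigating the stage path'' does not circumvent this. The paper's reduction from LEX$_2$(SAT) avoids the blow-up by grouping the Boolean variables into $d=\log n$ blocks of $d$ bits each, representing every possible block-value by a domain element carrying $d$ unary predicates; then only $d=\log n$ nested stage formulas $\chi_1,\ldots,\chi_d$ are needed, the exponential growth in the nesting is capped at $2^{\log n}=n$, and the queried bit $x_{d^2}$ is the last bit of the last block. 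This blocking device is the heart of the construction, and ``the encoding needs some care to keep the sentence of polynomial size'' does not supply it.
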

\begin{proof}
    For the \textbf{lower bound}, we give a 
    reduction from LEX$_2$(SAT): 
\emph{given a Boolean formula $\phi(x_1....x_n)$, is $x_{\lceil\log^2(n)\rceil}$ true
  in the lexicographically maximal satisfying assignment?}

Let $\phi(x_1, \ldots, x_n)$ be a given Boolean formula, and let $d=\log(n)$
(we may assume that $n$ is a power of 2). We are interested in knowing whether
$x_{d^2}$ is set to $1$ in the lexicographically maximal satisfying assignment
of $\phi$. For this we construct, in time polynomial in $n$, a model $M_n$ and
a formula $\theta_n$ such that $M_n \models \theta_n$ iff $x_{d^2}$ is set to
$1$ in the lexicographically maximal satisfying assignment of $\phi$.

Let $M_n$ be the structure containing $n$ elements, $a_1, \ldots, a_n$,
together with two elements, $1, 0$. The elements $1,0$ represents truth and
falsity and are distinguished from the elements $a_1, \ldots, a_n$ using a unary
predicate $Q$ that holds only for $0,1$.  Each of the elements $a_1, \ldots,
a_n$ of $M_n$ represents a bit-string of length $d$ encoded using $d$ unary
predicates, $P_1,\cdots,P_d$, with the intended meaning that $P_i(a_j)$ holds
in $M_n$ iff the $i$-th bit of the bit-string represented by $a_j$ is true.
$M_n$ is such that $a_1, \cdots, a_n$ code all possible bit-strings of length
$d$.

Below, as a suggestive notation, we will write $x_i$ for variables intended to
range over truth and falsity, and $y_i$ for variables intended to range over
the $n$ elements of $M_n$ that represent length-$d$ bit-strings. Hence $\exists
x~ \psi$ should be understood as $\exists x~ Q(x) \land \psi$ and $\exists y~
\psi$ as $\exists y~ \lnot Q(y) \land \psi$.

By $\widehat{\phi}$ we denote the \unfo formula obtained from $\phi$ by replacing,
for $k\leq d$ and $j\leq d$, the variable $x_{(k-1)d+j}$ by
$P_j(y_k)$. 
Note that the free variables of $\widehat{\phi}$ are $y_1,\ldots,y_d$ and
$x_{d^2+1}, \ldots, x_n$. As $y_1,\ldots,y_d$ will range over bit strings of
length $d$, we are interested in the last bit of $y_d$, i.e. $P_d(y_d)$.

We define, by induction on $i$, a formula $\chi_i(y)$ that is true for an
element $a_j$ if the length-$d$ bit-string represented by $a_j$ describes the
bits $x_{(i-1)d+1}\cdots x_{(i-1)d+d}$ of the lexicographically maximal
satisfying assignment of $\phi$. It is convenient to define simultaneously
another formula, $\psi_i(y)$, which is true for an element $a_j$ if the
length-$d$ bit-string represented by $a_j$ describes the bits
$x_{(j-1)d+1}\cdots x_{(j-1)d+d}$ in the \emph{some} satisfying assignment
whose prefix up to $x_{(j-1)d}$ is the same as the lexicographically maximal
satisfying assignment of $\phi$.
\begin{eqnarray*}
  \psi_i(y) &=& \exists y_1\ldots y_d ~~x_{d^2+1} \ldots
x_n~~~\big(\widehat{\phi} \land y_i=y \land \bigwedge_{j<i}\chi_j(y_j)\big) \\
  \chi_i(y) &=& \psi_i(y) \land \neg\exists y'(y'>y \land \psi_i(y'))
\end{eqnarray*}
where $y'>y$ is shorthand for a formula expressing that the bit-string denoted
by $y'$ is lexicographically greater than the bit-string denoted by $y$:
\begin{equation*}
\bigvee_{i<d} \big((\bigwedge_{j<i} P_j(y) \leftrightarrow P_j(y'))  \wedge
\lnot P_i(y) \wedge P_i(y') \big).
\end{equation*}
\smallskip

\noindent Finally, take 
\begin{equation*}
\theta_n = \exists y_1\ldots y_d~~ x_{d^2+1} \ldots
x_n~~~(\widehat{\phi}\land\bigwedge_{i\leq d}(\chi_i(y_i))\land P_d(y_d)).
\end{equation*}
Then $\theta_n$ is true in $M_n$ if and only if $x_{d^2}$ is true in the
lexicographically maximal satisfying assignment of $\phi$ as required.
Moreover $\theta_n$ is indeed a formula of \unfo.
Finally notice that each $\chi_i$ has a size exponential in $i$ but they are used only for $i\leq d=\log
n$. Hence $\theta_n$ has polynomial size and can be computed in time polynomial
in $n$.

\smallskip

We now turn to the \textbf{upper bound}. Recall the algorithm presented in the
preamble of this section. If the syntactic tree of the formula would be a
balanced binary tree then its negation depth $l$ would be bounded by the $\log$
of the size of the formula and we would be done. The idea is to evaluate the
formula by first making its syntactic tree balanced. This trick has already
been applied in the context of branching time model checking using ``Tree Block
Satisfaction''~\cite{Schnoebelen03:oracle}. Instead of redoing the trick, we
reduce the model checking problem of \unfo to this one.

We first present ``Tree Block Satisfaction'', called TB(SAT)$_{1\times M}$
in~\cite{Schnoebelen03:oracle}. In fact, for present purposes, it suffices to
consider a restricted version of the problem TB(SAT)$_{1\times M}$, which we
will refer to simply as TB(SAT) in what follows.  We will describe here this
restricted version.  A TB-tree of width $k\geq 1$ is a tree consisting of
blocks, where each block is, intuitively, a kind of Boolean circuit having $k$
output gates and having $k$ input gates for each of its children. See
Figure~\ref{fig:tbtrees}. The $i^{th}$ output of a block is defined in terms of
the input gates by means of an existentially quantified Boolean formula
$\chi_i$ of the form
 \[\exists \tuple{b}_1 c_1 \ldots
\tuple{b}_m c_m \tuple{d}~~~\big(c_1=\mathsf{input}_{i_1}(\tuple{b}_1)\land\cdots\land
c_m=\mathsf{input}_{i_m}(\tuple{b}_m)) \land
\psi\big)\]
where each $\tuple{b}_j$ is a tuple of $\log k$ Boolean variables, 
encoding a number $[\![\tuple{b}_j ]\!]$ from $1$ to $k$, and
$\mathsf{input}_{i_j}(\tuple{b}_j)$ represents the value of the
$[\![\tuple{b}_j]\!]$-th output bit of the $i_j$-th child block
(which is denoted by $y^{(i_j)}_{[\![\tuple{b}_j]\!]}$ in Figure~\ref{fig:tbtrees})
and $\psi$ is a Boolean formula using any of the existentially
quantified Boolean variables.  
\begin{figure}
\begin{center}
\includegraphics[scale=0.4]{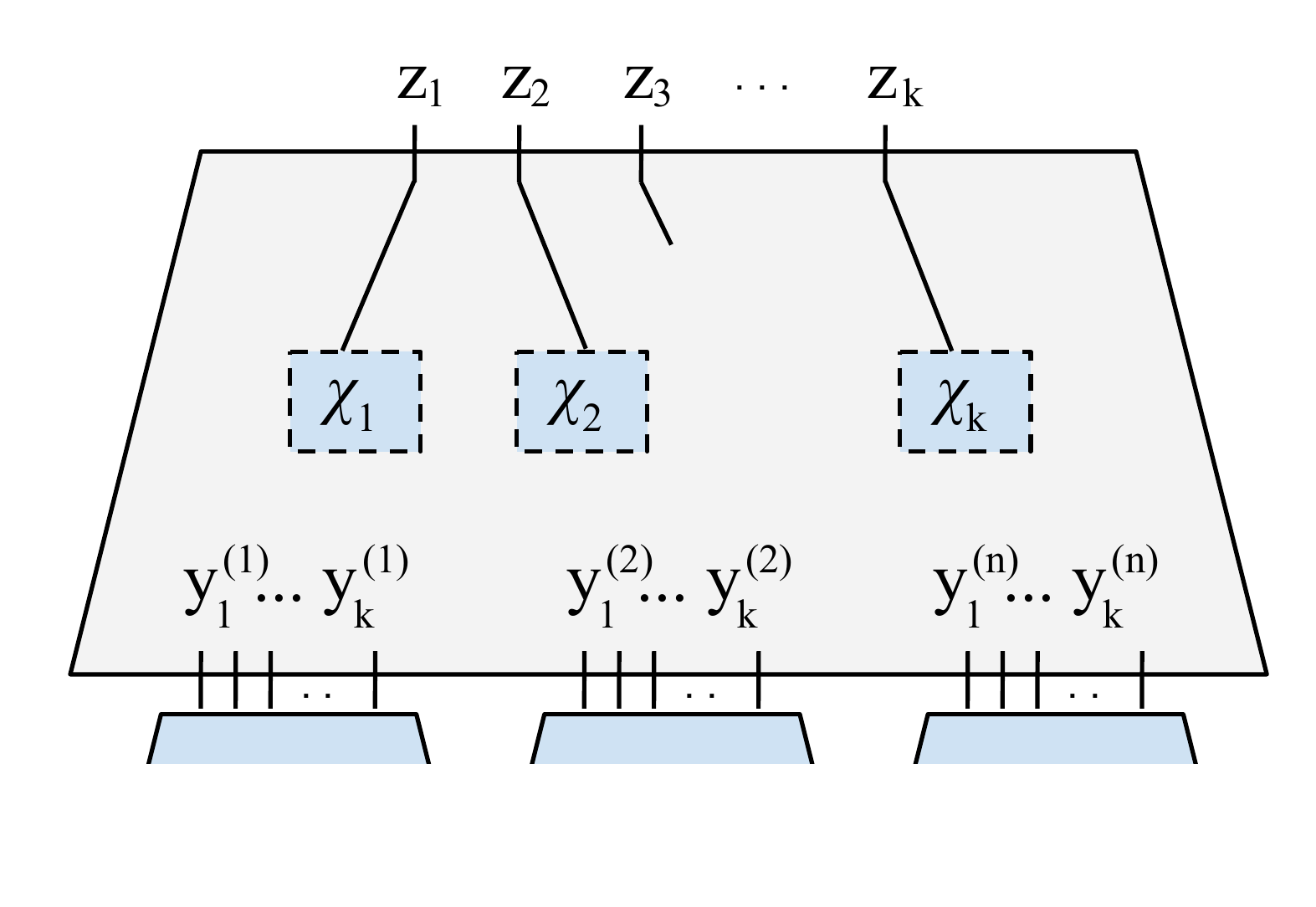}
\end{center}
\vspace{-5mm}
\caption{A block in a TB-tree with $n$ children}\label{fig:tbtrees}
\end{figure}

TB(SAT) is then the problem, given a TB-tree and a truth assignment
for all inputs of the blocks that are leaves of the TB-tree, whether
the first output bit of the root block evaluates to true. It was shown
in~\cite[corollary 3.4]{Schnoebelen03:oracle} that TB(SAT) can be
decided in \pnplogsq.

Given a formula $\phi$ of \unfo and a structure $N$, we construct in polynomial
time a TB-tree that is a yes instance of TB(SAT) iff $N\models\phi$. The
reduction is simple and reflects the naive evaluation of $\phi$ on $N$.

The construction of the TB-tree is by induction on $\phi$. The width $k$ of the
TB-tree is the size of the domain of $N$. At each step of the induction the
block of the root of the TB-tree is associated to a subformula of $\phi$ having
one free variable, and is such that its output gate $i$ is set to $1$ iff the
$i$-th element of $N$ makes the associated subformula true. Altogether the
shape of the TB-tree resemble the one of the syntactic tree of the formula.
In the QBF formulas constructed below we denote by $\tuple{b}$ a vector of
$\log k$ variables and $[\![\tuple{b}]\!]=i$ is a shorthand for the Boolean
formula stating that $\tuple{b}$ represents the binary encoding of $i$.

\medskip

Let $\phi(x)$ be any formula in one free variable.

\medskip

Case 1: $\phi(x)$ is of the form $\neg\psi(x)$. By induction hypothesis, we
have a TB-tree for $\psi(x)$ such that the outputs from the root block indicate
which elements of $N$ satisfy $\psi(x)$. We extend this TB-tree with one extra
block on top, in which the $i$-th output gate is defined by the formula
\[\exists
\tuple{b},c~~~(c=\mathsf{input}(\tuple{b}) \land [\![\tuple{b}]\!]=i \land c=0).
\]
 This formula sets its output
to $1$ iff its input $i$ is set to $0$, hence preserves the inductive
hypothesis as required. 

\medskip

Case 2: $\phi(x)$ is built up from atomic formulas and formulas in one free
variable using conjunction and disjunction and existential quantification. Let
$y_1, \ldots, y_n$ be the set of all existentially quantified variables in
$\phi$. Let $\psi_1(z_1)\ldots, \psi_m(z_m)$ be the maximal subformulas in
one free variable occurring in $\phi$, where $z_1 \ldots z_m$ are among
$x,y_1, \ldots, y_n$. We may assume without loss of generality that $z_1,
\ldots, z_m$ are the first $m$ variables from the sequence $y_1, \ldots, y_n$
(in particular that $z_1, \ldots, z_m$ are distinct variables) because if not,
one can always introduce another existentially quantifier variable $y_{n+1}$
and replace $\psi_j(z_j)$ by $(\psi_j(y_{n+1})\land y_{n+1}=z_j)$.  By
induction, we have a TB-block $T_j$ for each $\psi_j(y_j)$ such that the
outputs from the root block of $T_j$ indicate which elements of $N$ satisfy
$\psi_j(y_j)$.  The TB-tree for $\phi$ will consist of a new root block
whose children are the roots of each of the $T_j$. The definition of the $i$-th
output gate is
  \[\exists
  \tuple{b}_1,c_1,\ldots,\tuple{b}_m,c_m,\tuple{b}_{m+1},\ldots,\tuple{b}_n~~~\big
  (\bigwedge_{j\leq
    m}~~~(c_j=\mathsf{input}_j(\tuple{b}_j))\land \chi_N\big)\]
 where $\chi_N$ is obtained from
  $\phi$ by replacing each subformula of the form $\psi_l(y_l)$ by $c_l$, each subformula
  of the form $x=y_l$ by $\tuple{b}_l=i$, each subformula of the form $y_l = y _{l'}$ by
  $\tuple{b}_l=\tuple{b}_{l'}$ and each subformula of the form $R(y_{l_1},\ldots,y_{l_\kappa})$ by a 
  Boolean formula listing all tuples in the relation $R^N$:
\[
\bigvee_{([\![\tuple{d}_1]\!],\ldots,[\![\tuple{d}_\kappa]\!])\in R^N} (\tuple{b}_{l_1}=\tuple{d}_1 \land
\ldots \land  \tuple{b}_{l_\kappa}=\tuple{d}_\kappa)
\]

\medskip

Note that Case~2 covers the base case when $\phi(x)$ has no subformula
with one free variable.

It is now easy to check that the TB-tree constructed by the above induction has
the desired property and can be computed in polynomial time. This concludes the
proof of the Theorem.
\end{proof}

As mentioned earlier, restricting the nesting of negations gives a lower complexity.

\begin{theorem}\label{thm:bounded-negation}
For all $l>0$, the complexity of the model-checking problem for \unfo formula
of negation depth bounded by $l$ is \pnplog-complete. The lower bound holds
even 
for a fixed structure. 
\end{theorem}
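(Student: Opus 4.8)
The plan is to obtain the upper bound directly from the naive evaluation algorithm described in the preamble of this section, and to obtain the lower bound by reducing an appropriate \pnplog-complete problem to model checking over a single \emph{fixed} structure, using only one layer of negation.

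\textbf{Upper bound.} Recall that evaluating a \unfo formula of negation depth $l$ on a structure $M$ decomposes into $l$ rounds of parallel \np-oracle queries: by induction one evaluates, in parallel over all elements of $M$, all maximal subformulas of negation depth $l-1$, records the outcomes in fresh unary predicates, and is then left with an existential formula to be decided by one more parallel batch of \np queries. For a fixed $l$ this is $O(1)$ rounds of parallel \np queries, and by the characterization of \pnplog recalled in Section~\ref{section-complexity} — the case $i=1$ of the characterization of \pnplogi via $O(\log^{i-1}n)$ rounds of parallel \np queries, equivalently polynomial time with a single parallel batch of \np queries — this places the problem in \pnplog.

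\textbf{Lower bound.} I would reduce from the problem $T$: given graphs $G_1,\dots,G_m$ and a Boolean formula $B(x_1,\dots,x_m)$, decide whether $B$ is true under the assignment setting $x_i$ to $1$ exactly when $G_i$ is $3$-colorable. This problem is in \pnplog (it is a polynomial-time truth-table reduction to $3$-colorability, which is in \np) and is \pnplog-hard (every problem in \pnplog is polynomial-time truth-table reducible to \np, hence, composing with the standard reduction to $3$-colorability, to $T$). The reduction to model checking exploits the fact that $3$-colorability of a graph $G$ is expressed by the positive existential \unfo \emph{sentence} $Q_G := \exists\,(x_v)_{v\in V(G)}\ \bigwedge_{(u,v)\in E(G)}E(x_u,x_v)$, to be evaluated in the \emph{fixed} structure $K_3$, the complete graph on three vertices: indeed $K_3\models Q_G$ iff $G$ has a homomorphism into $K_3$ iff $G$ is $3$-colorable, and $Q_G$ has negation depth $0$. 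Given an instance $(G_1,\dots,G_m,B)$, I would put $B$ into negation normal form, so that it becomes a positive Boolean combination of literals $x_i$ and $\neg x_i$, and then replace every $x_i$ by $Q_{G_i}$ and every $\neg x_i$ by $\neg Q_{G_i}$; since the $Q_{G_i}$ are sentences of negation depth $0$, the resulting \unfo sentence $\Theta$ has negation depth $1$, is computable in polynomial time, and satisfies $K_3\models\Theta$ iff $(G_1,\dots,G_m,B)\in T$. Thus model checking is already \pnplog-hard for \unfo sentences of negation depth $1$ over the single fixed structure $K_3$, and a fortiori for negation depth bounded by any $l\geq 1$.

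\textbf{Main obstacle.} The delicate point is the ``fixed structure'' clause of the lower bound: unlike the \pnplogsq-hardness proof of Theorem~\ref{thm-unfo-model-check}, whose structure $M_n$ grows with the input, here the structure must be the same for all instances. This is made possible by two facts that must be combined carefully: a single \np query can be encoded, over the fixed structure $K_3$, as a positive existential conjunctive query (the classical equivalence between $3$-colorability and homomorphisms into $K_3$), and one parallel batch of \np queries together with polynomial-time post-processing already captures all of \pnplog, so that a single layer of negation — used only to form the complements $\neg Q_{G_i}$ — suffices to glue the queries together according to $B$. The remaining work is routine bookkeeping: checking that $Q_G$, the negation normal form of $B$, and hence $\Theta$ are polynomial-time computable and are bona fide \unfo sentences of the claimed negation depth, and verifying both directions of the \pnplog-completeness of $T$.
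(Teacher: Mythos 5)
Your proof is correct and follows essentially the same route as the paper: the upper bound via the naive evaluation algorithm with constantly many rounds of parallel \np queries, and the lower bound via polynomial-time truth-table reducibility to $3$-colorability, encoding each query as the canonical conjunctive query of $G_i$ evaluated in the fixed structure $K_3$ and substituting these sentences into the Boolean combinator. Your explicit step of first putting the Boolean formula into negation normal form is a slightly more careful way of guaranteeing negation depth exactly $1$, which the paper leaves implicit.
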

\begin{proof}
  For the {\bf lower bound} we use the equivalent characterization of \pnplog
  as the class of problems that are \ptime truth-table reducible to NP. As
  3-colorability is \np-complete, every problem in \pnplog is \ptime
  truth-table reducible to 3-colorability. Recall from
  Section~\ref{section-complexity} that a \ptime truth-table reduction from a
  given problem to 3-colorability is a \ptime algorithm that, given an instance
  of the problem, produces a set $y_1,\cdots,y_n$ of inputs to 3-colorability,
  together with a Boolean formula $\phi(x_1,\cdots,x_n)$, such that the input
  is a yes-instance iff $\phi$ evaluates to true after replacing each $x_i$ by
  $1$ if $y_i$ is 3-colorable and $0$ otherwise.

  We show that we can reduce any problem that is \ptime truth-table reducible
  to 3-colorability to the model checking problem for \unfo formula of negation
  depth 1. For this, fix a problem $P$ that is \ptime truth-table reducible to
  3-colorability. Let $w$ be an input string for $P$.  We construct from $w$,
  in time polynomial in $|w|$, an instance $M$ and a formula $\phi_w$ such that
  $M \models \phi_w$ iff $w \in P$. As $M$ will be independent of $w$ and
  $\phi_w$ will be in \unfo and with nesting depth 1, this will show
  \pnplog-hardness.

 By assumption, there is a polynomial time algorithm that produces from $w$ a
 tuple $\langle G_1,\cdots,G_n, C(x_1, \ldots, x_n)\rangle$ such that $w\in P$
 iff $C(x_1, \ldots, x_n)$ evaluates to 1 after replacing each of the $x_i$ by
 1 iff $G_i$ is 3 colorable.

 For each graph $G$, let $q_G$ be the \emph{canonical conjunctive query of
   $G$}, that is, the existentially quantified conjunction of relational atoms,
 where there is an existentially quantified variable for each node of $G$, and
 a relational atom for each edge of $G$. Note that, since $q_G$ does not use
 negation, it belongs to \unfo. Let $M$ be the structure representing a clique
 of size 3.  It is well known that $M\models q_G$ iff $G$ is 3-colorable.  Let
 $\phi_w$ be the formula constructed from $C$ by replacing each occurrence
 of $x_i$ with the sentence $q_{G_i}$.  It is immediate to verify that the
 resulting formula is in \unfo, has negation depth 1 and satisfies the
 properties required for the reduction. Moreover $\phi_w$ can be computed in
 time polynomial in $|w|$ as this was the case for $\langle G_1,\cdots,G_n,
 C(x_1, \ldots, x_n)\rangle$. This completes the proof for the lower-bound.

\medskip

The {\bf upper bound} is obtained using the naive evaluation of the formula
presented in the preamble of this section as $l$ is treated as a constant.
\end{proof}

We now consider formalism equivalent in expressive power to \unfo but with a
more succinct syntax that allows the sharing of subformulas.  Let us denote by \unfolet the
extension of \unfo with Boolean variables $b_1, b_2, \ldots$ (ranging over
truth values) that can be used as atomic formulas, and with a new construct
$\mathsf{let} ~b =\phi ~\mathsf{in}~ \psi$, where $b$ is a Boolean variable,
$\phi$ is a sentence, i.e., a formula without free first-order variables but
possibly with free Boolean variables (excluding $b$ itself), and $\psi$ is a formula
that may use all the Boolean variables, including $b$. We only
consider
\unfolet formulas without free Boolean variables, i.e., in which each occurring
Boolean variable is bound by a $\mathsf{let}$ operator. The semantics of
\unfolet formulas is as expected: when the
valuation of the Boolean free variables of $\phi$ is known, we can derive a
valuation for $b$ using $b=\phi$ and then we can evaluate $\psi$. 

\begin{example}
A typical formula of \unfolet looks like this:
\[\mathsf{let} ~ b = \psi ~ \mathsf{in} ~ \exists x_1,\ldots x_n~~~
  \big(b=0 \land \psi_1(\tuple{x})\big)
 \lor
 \big(b=1 \land \psi_2(\tuple{x})\big)
\]
It is equivalent to the \unfo formula:
\[ \exists x_1,\ldots x_n~~~
 \big(\lnot \psi \land \psi_1(\tuple{x})\big)
 \lor
 \big(\psi \land \psi_2(\tuple{x})\big)
\]
\end{example}

The \unfolet
model checking problem is the problem of evaluating a given \unfolet formula
without free Boolean variables in a given structure.

\begin{theorem}\label{thm:let}
  The \unfolet model-checking problem is \pnp-complete. The lower bound holds
even  for a fixed structure. 
\end{theorem}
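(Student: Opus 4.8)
The plan is to prove membership in $\pnp$ and $\pnp$-hardness over a fixed structure separately; $\lexsat$, which is $\pnp$-complete (see Section~\ref{section-complexity}), will supply the hardness.

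For the \textbf{upper bound}, extend the naive bottom-up evaluation presented above to handle $\mathsf{let}$-bindings. Given a $\unfolet$ formula $\Phi$ without free Boolean variables and a structure $M$, enumerate the $\mathsf{let}$-bindings of $\Phi$ (at most $|\Phi|$ of them) and process them in a topological order: a binding $b=\phi$ is treated only once every binding whose variable occurs free in $\phi$ has been treated. At that point all free Boolean variables of $\phi$ have known truth values, so substituting them turns $\phi$ into an ordinary $\unfo$ sentence of size at most $|\Phi|$ over $M$; by Theorem~\ref{thm-unfo-model-check} this can be evaluated in $\pnplogsq\subseteq\pnp$, and the outcome is recorded as the value of $b$. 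Once all bindings are done, the body of $\Phi$ is again an ordinary $\unfo$ sentence and is evaluated the same way. Altogether this is a polynomial-time computation making polynomially many adaptive calls to an $\np$ oracle, so it places the problem in $\pnp$.

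For the \textbf{lower bound}, reduce from $\lexsat$ using the fixed two-element structure $M_0$ with domain $\{0,1\}$ and a single unary predicate $T$ interpreted as $\{1\}$. Given the input formula, first replace it by a Tseitin CNF $\phi$ over variables $x_1,\dots,x_n$ (the original ones) and $z_1,\dots,z_m$ (auxiliary). Since an assignment to $x_1,\dots,x_n$ satisfies the original formula iff it extends to a satisfying assignment of $\phi$, the $i$-th bit of the lexicographically maximal satisfying assignment (with $x_1$ most significant) equals $1$, for each $i\le n$, precisely when the maximal assignment's prefix on $x_1,\dots,x_{i-1}$ extends to a satisfying assignment of $\phi$ with $x_i=1$. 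For each $i\le n$, let $\mathrm{SAT}_i$ be the $\unfo$ formula, with free Boolean variables $b_1,\dots,b_{i-1}$, that existentially quantifies all of $x_1,\dots,x_n,z_1,\dots,z_m$ and conjoins: the CNF matrix of $\phi$ with each positive literal $v$ rendered as $T(v)$ and each negative literal $\lnot v$ as $\lnot T(v)$; the atom $T(x_i)$; and, for every $j<i$, the \emph{pinning} formula $(b_j\land T(x_j))\lor(\lnot b_j\land\lnot T(x_j))$. All negations in $\mathrm{SAT}_i$ occur on atoms, and $|\mathrm{SAT}_i|=O(|\phi|+n)$; moreover $\mathrm{SAT}_i$ holds in $M_0$ under a valuation of $b_1,\dots,b_{i-1}$ exactly when $\phi$ has a satisfying assignment with $x_1=b_1,\dots,x_{i-1}=b_{i-1}$ and $x_i=1$. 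The reduction outputs $M_0$ together with the $\unfolet$ formula
\[
\mathsf{let}\; b_1=\mathrm{SAT}_1\;\mathsf{in}\;\bigl(\mathsf{let}\; b_2=\mathrm{SAT}_2\;\mathsf{in}\;\cdots\bigl(\mathsf{let}\; b_{n-1}=\mathrm{SAT}_{n-1}\;\mathsf{in}\;\mathrm{SAT}_n\bigr)\cdots\bigr),
\]
which has polynomial size and is polynomial-time computable. A routine induction on $i$ shows that $b_i$ evaluates to the $i$-th bit of the lexicographically maximal satisfying assignment: if $b_1,\dots,b_{i-1}$ already equal the first $i-1$ bits, then $\mathrm{SAT}_i$ is true in $M_0$ iff that prefix extends to a satisfying assignment with $x_i=1$, i.e. iff the $i$-th bit is $1$. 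Hence the whole formula is true in $M_0$ iff $x_n$ is $1$ in the lexicographically maximal solution.

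The point I expect to need the most care is keeping the reduction polynomial: a naive encoding of the bit-by-bit search for the lexicographically maximal assignment, with one $\np$ query per candidate prefix $b_1\cdots b_{i-1}$, would use exponentially many queries, whereas the construction above distributes that case analysis over prefixes into the $O(i)$ \emph{local} pinning conjuncts inside a single $\mathrm{SAT}_i$, each routing one bit $b_j$ through a $\mathsf{let}$ variable. (Note also that the hard instances use negation only on atoms, so it is subformula sharing rather than nested negation that raises the complexity from $\pnplog$, cf.\ Theorem~\ref{thm:bounded-negation}, to $\pnp$.)
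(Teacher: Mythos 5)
Your proof is correct and follows essentially the same route as the paper's: the upper bound evaluates the $\mathsf{let}$-bound sentences in dependency order via Theorem~\ref{thm-unfo-model-check}, and the lower bound reduces from \lexsat over a fixed two-element structure using a chain of nested $\mathsf{let}$s whose $i$-th definition pins $b_1,\dots,b_{i-1}$ and asks whether the prefix extends with $x_i=1$. The only cosmetic difference is your Tseitin CNF preprocessing, where the paper simply substitutes $T(x_i)$ for literals in (implicitly negation-normal-form) $\phi$.
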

\begin{proof}
  For the {\bf upper bound}, we use a simple bottom-up evaluation
  strategy. Let $\phi$ be any \unfolet formula. We may assume
  without loss of generality that each let-operator in $\phi$ binds a 
  different Boolean variable. We assign a rank to each Boolean
  variable
  occurring in $\phi$: a Boolean variable $b$ has rank 0 if its
  definition (i.e., the sentence to which it is bound by its
  let-operator
  in $\phi$) does not contain any Boolean variables  (free or
  bound),
   and $b$ has rank $k+1$ if its definition only contains Boolean 
  variables of rank at most $k$. 
  Using Theorem~\ref{thm-unfo-model-check} repeatedly as an oracle, we can compute 
  in polynomial time a truth value for each Boolean variable. 
  Finally, by replacing all Boolean variables in $\phi$ by their
  truth value and applying Theorem~\ref{thm-unfo-model-check} once more, 
  we find out whether $\phi$ is true in the structure.
  Altogether this yields a \pnp algorithm.

  For the {\bf lower bound}, we make use of the problem \lexsat: \emph{given a
    satisfiable Boolean formula $\phi(x_1, \ldots, x_n)$, test if the value of
    $x_n$ is 1 in the lexicographically maximal solution}
  (cf.~Section~\ref{section-complexity}).  
  Given a satisfiable Boolean formula $\phi(x_1, \ldots, x_n)$ we construct in
  time polynomial in $n$ a model $M$ and a \unfolet formula $\psi$ such that $M
  \models \psi$ iff $x_n$ is 1 in the lexicographically maximal solution of
  $\phi$.

  The idea of the reduction will be to construct the lexicographically maximal
  solution $B_1\cdots B_n$ of $\phi$ bit by bit from $B_1$ to $B_n$ using the
  following algorithm: $B_1$ is true iff $\phi(\top,x_1,\ldots,x_n)$ is
  satisfiable and if $B_1\cdots B_i$ have already been computed then $B_{i+1}$
  is true iff $\phi(B_1,\cdots,B_i,\top,x_{i+1},\cdots,x_n)$ is satisfiable.

  Specifically, our reduction uses a fixed structure $M$ with two elements, one
  of which is labeled by a unary predicate $T$ and intuitively represents
  \emph{true} while the other element represents \emph{falsity}.
  We let $\widehat{\phi}$ be the \unfo formula obtained from $\phi$ by
  replacing each positive occurrence of the variable $x_i$ in $\phi$ by the
  formula $T(x_i)$ and each occurrence of $\lnot x_i$ in $\phi$ by the formula $\lnot T(x_i)$.

  We construct by induction sentences $\psi_1,\ldots, \psi_n$, where each
  $\psi_i$ is true in $M$ if and only if $B_i$ is 1.  By definition $B_1=1$ iff
  $\phi$ has an satisfiable assignment setting $x_1$ to true. This can be
  expressed using the formula:
  \[\psi_1 := \exists x_1\ldots x_n ~~~ (\widehat{\phi}(x_1,\ldots.
  x_n)\land T(x_1))~,\] 
By definition the following \unfo formula, obtained by induction, expresses that $B_i$ is 1: 
\[
\psi_i := \exists x_1\ldots x_n~~~ (\widehat{\phi}(x_1,\ldots, x_n)\land
  (\bigwedge_{j<i} T(x_j)\leftrightarrow \psi_j)\land
  T(x_i)).
\]
Notice that the size of $\psi_i$ is exponential in $i$ and therefore, even
though $\psi_n$ does express the truth value of $B_n$, this would not give a
polynomial time reduction. However, using the $\mathsf{let}$ construction of
\unfolet we can derive the same formulas more succinctly.

Let $\psi'_i$ be the formula (with $b_1,\ldots,b_{i-1}$ as free Boolean variables):
\[
\exists x_1,\ldots x_n
  (\widehat{\phi}(x_1, \ldots, x_n)\land (\bigwedge_{j<i} T(x_j) \leftrightarrow
  b_j)\land T(x_i))~.
\]

Then we set $\psi$ be the \unfolet formula:
\begin{eqnarray*}
\psi := \mathsf{let} & b_1 = \psi'_1 &  \mathsf{ in }\\ 
 &\mathsf{let } & b_2 = \psi'_2 \quad \mathsf{ in }\\
 && \ldots\\
 & & \mathsf{let } \quad b_{n-1} = \psi'_{n-1} \quad \mathsf{ in }  \quad 
\psi'_n
\end{eqnarray*}

Then notice that $\psi$ has all the desired properties: it is in \unfolet, it
can be computed in time polynomial in $n$ and it verifies that $M\models \psi$
iff $B_n=1$. This concludes the proof.
\end{proof}

\subsubsection*{Satisfiability for $\unfo$ formulas of negation depth 1}

We are now also in a position to give the proof of
Theorem~\ref{thm:negationdepth1},
which states that testing satisfiability for \unfo 
formulas of negation depth 1 is \npnp-complete.

\begin{proof}[Proof of Theorem~\ref{thm:negationdepth1}]
  Let $\psi$ be any \unfo formula of negation depth 1, and let $\chi$
  be the first-order formula obtained by bringing $\phi$ into prenex
  normal form. Note that $\chi$ is, in general, no longer a \unfo
  formula. Since $\psi$ does not contain nested negations or universal
  quantifiers, $\chi$ is of the form
  $\exists\tuple{x}\forall\tuple{y}\chi'$, where $\chi'$ is a
  quantifier-free formula, which we may assume to be in
  negation-normal form.  In addition, we know that the variables
  $\tuple{y}$ can only occur in negative atomic subformulas of
  $\chi$. In other words, all positive atomic subformulas of $\chi$
  use only variables in $\tuple{x}$.  We claim that $\chi$, and hence,
  $\psi$, has a ``polynomial size model property'': suppose $M\models
  \forall\tuple{y}\phi(\tuple{a},\tuple{y})$. Let $M'$ be the submodel
  of $M$ containing only the elements $\tuple{a}$ and containing only
  the (polynomially many) facts about $\tuple{a}$ that occur
  positively in $\phi(\tuple{a},\tuple{y})$. It is easy to see that
  $M'$ still satisfies
  $\forall\tuple{y}\phi(\tuple{a},\tuple{y})$. The existence of a
  model of $\phi$ can now be tested by guessing a structure of polynomial
  size and then applying the model checking procedure to verify whether
  it is a model or not. Using Theorem~\ref{thm:bounded-negation}, this
  implies that the satisfiability problem for \unfo-formulas of
  negation depth 1 is in \npnp.

  Hardness is by reduction from the problem of evaluating QBF formulas of the form
       $$\phi := \exists\tuple{x} \forall\tuple{y} \psi$$
   with $\tuple{x}=x_1\ldots x_n$ and $\tuple{y}=y_1\ldots y_m$ which is known
   to be complete for \npnp.
   We will construct in \ptime a \unfo formula of negation depth 1 that is
   satisfiable if and only if $\phi$ evaluates to true.
   Take the vocabulary consisting of unary predicates $T$ and $F$ plus unary
   predicates $P_1, \ldots, P_n$ corresponding to the existentially quantified variable $\tuple{x}$.
   The \unfo formula is defined as the conjunction 
\begin{align*}
&\exists x.T(x) \land \exists x.F(x) \land  \neg\exists x.(T(x) \land  F(x))\\
\wedge~~ &\bigwedge_{i\leq n}\exists x(P_i(x) \wedge  (T(x) \vee F(x))) \land \neg(\exists x.(P_i(x) \land
T(x)) \land \exists x(P_i(x) \land F(x))) \\
\wedge~~ &\neg\exists\tuple{y} (\bigwedge_i (T(y_i) \vee F(y_i)) \land -\phi)
\end{align*}
where $-\phi$ is obtained by negating $\phi$, then pushing the
negations down to the atoms, and then
 replacing $y_i$ by $T(y_i)$,
 replacing $\neg y_i$ by $F(y_i)$,
 replacing $x_i$ by $\exists x(P_i(x) \wedge T(x))$,
and replacing $\neg x_i$ by $\exists x(P_i(x) \wedge F(x))$.

It is straightforward to verify that this \unfo formula is satisfiable if and
only if $\phi$ evaluates to true, and that it has negation depth 1.
\end{proof}

\subsection{Model checking for \unfp}

We now turn to the complexity of the model checking of \unfp.  It is
convenient, at this point, to treat the greatest-fixpoint operator (GFP) as a
primitive operator, instead of as a defined connective. This way we can use
dualization in order to assume without loss of generality that all formulas of
\unfp are such that \emph{every occurrence of a fixpoint operator in $\phi$
  is a positive occurrence (i.e., lies under an even number of negations).}
We also assume without loss of generality that each fixpoint operator binds a
different variable, so that we can speak of \emph{the fixpoint definition} of a
variable $X$, by which we mean the formula directly below the fixpoint operator
binding $X$.  The \emph{dependency graph} of fixpoint variables in $\phi$ is
the directed graph whose nodes are the fixpoint variables in $\phi$ and
where there is an edge from $X$ to $Y$ if $Y$ occurs (free or bound) inside the
fixpoint definition of $X$.  We say that a \unfp formula is \emph{alternation
  free} if the dependency graph does not have a cycle containing both a least
fixpoint variable and a greatest fixpoint variable \cite{Niwinski1986}.

We first consider the alternation-free fragment.  

\begin{theorem}\label{thm-modelcheck-unfp-altfree}
The model-checking problem for the alternation-free fragment of \unfp
is $\pnp$-complete.
\end{theorem}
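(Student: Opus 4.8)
The plan is to show $\pnp$-hardness by a reduction from \lexsat (equivalently, from the \unfolet model-checking problem of Theorem~\ref{thm:let}), and membership in $\pnp$ by evaluating the fixpoints stratum by stratum, delegating each ``one-step'' computation to the \unfo model checker of Theorem~\ref{thm-unfo-model-check}.

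For the \textbf{lower bound}, recall that the hard instances in the proof of Theorem~\ref{thm:let} consist of a fixed two-element structure together with a linearly nested family of $\mathsf{let}$ operators that compute, bit by bit, the lexicographically maximal satisfying assignment of a Boolean formula; the only purpose of the $\mathsf{let}$ operators there is to \emph{share} subformulas. In \unfp this sharing can be realised by monadic least-fixpoint predicates. Concretely I would enrich the structure with a successor chain $1\prec 2\prec\cdots\prec n$ (plus unary predicates marking its endpoints and marking which chain element carries which truth value), and introduce a least-fixpoint predicate whose body, when iterated one more time, ``decides'' one further bit of the lexicographically maximal assignment and records it at the corresponding chain position; a second, complementary least-fixpoint predicate records the positions decided to be $0$, so that the feasibility test for the next bit (``is there a satisfying assignment extending the decided prefix with this bit set to $1$?'') can be written as a \unfo formula in which no fixpoint variable occurs negatively. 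After $n$ iterations all bits are decided, the operator reaches its fixpoint, and the least fixpoint is exactly this assignment; the value of the last bit is then read off by a small \unfo formula. Only least fixpoints are used, so the resulting \unfp sentence is trivially alternation-free, has polynomial size, and is true in the structure iff $x_n=1$ in the lexicographically maximal assignment. This gives $\pnp$-hardness, even over a signature of bounded arity.

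For the \textbf{upper bound}, I would first put $\phi$ into an ``equational'' form: view it as a \unfo formula over the schema extended with one fresh unary predicate $X_i$ per fixpoint variable, together with a system of monadic equations $X_i=[\LFP_{X_i,x}\,\chi_i(x)]$ or $X_i=[\GFP_{X_i,x}\,\chi_i(x)]$ whose right-hand sides $\chi_i$ are again \unfo formulas in one free variable over the extended schema. Since $\phi$ is alternation-free, this system is alternation-free, so its variables partition into strata of uniform polarity with each stratum depending only on itself and on lower strata. I would then run the standard iterative evaluation for alternation-free fixpoint systems: process the strata in dependency order and, within a stratum, iterate the equations while \emph{never reinitialising} an inner fixpoint variable when an enclosing variable changes. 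Monotonicity---which uniform polarity within a stratum guarantees---makes this correct, and since each fixpoint variable then takes a monotonically changing sequence of values bounded by $|\dom(M)|$, the total number of one-step updates performed is $O(|\dom(M)|\cdot|\phi|)$. Each such one-step update amounts to computing the full extension $\{a\in\dom(M)\mid (M,\vec{S})\models\chi_i(a)\}$ of a \unfo formula over the (extended) structure, which by Theorem~\ref{thm-unfo-model-check} lies in $\pnplogsq\subseteq\pnp$ (and the Tree-Block-Satisfaction reduction there even yields all the output bits at once). Altogether this is a polynomial-time computation making polynomially many adaptive \np-oracle calls, so the problem is in $\pnp$; combined with the lower bound, it is $\pnp$-complete.

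I expect the main obstacle to be exactly the polynomial bound on the number of one-step evaluations in the upper bound. A naive recursive evaluation that restarts each inner fixpoint whenever an enclosing variable is updated would perform $|\dom(M)|^{O(d)}$ one-step evaluations, where $d$ is the fixpoint-nesting depth of $\phi$, which is not polynomial; it is precisely alternation-freeness (uniform polarity, hence monotonicity, within each stratum) that licenses the ``no-reinitialisation'' optimisation and keeps the count polynomial---this is why the general (alternating) case only gives the weaker $\npnp\cap\conpnp$ bound mentioned earlier. On the lower-bound side the only delicate point is arranging the feasibility test inside the fixpoint body so that the fixpoint variable occurs only positively, which is what the complementary ``decided-to-be-$0$'' predicate is for.
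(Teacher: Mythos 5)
Your proposal follows essentially the same route as the paper. For the upper bound the paper does exactly what you describe: it first treats LFP-only formulas by initialising all fixpoint variables to $\emptyset$ and repeatedly re-evaluating their definitions as \unfo formulas over the expanded structure via Theorem~\ref{thm-unfo-model-check} (polynomially many adaptive rounds, each in \pnplogsq), then dualises for GFP-only formulas and lifts to the alternation-free fragment by induction on alternation rank with a bottom-up colouring of the model --- your ``strata of uniform polarity, no reinitialisation'' argument is the same. For the lower bound the paper likewise reduces from \lexsat, over a structure with elements $t_1,\ldots,t_n,f_1,\ldots,f_n$ and a single least fixpoint whose successive stages decide the bits of the lexicographically maximal assignment, which matches your chain construction. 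The one delicate point is the positivity issue you yourself flag: the complementary ``decided-to-be-$0$'' predicate does not by itself resolve it, since the rule placing a position into that predicate must assert that \emph{no} satisfying extension with the current bit set to $1$ exists, and the decided prefix inside that negated feasibility test is read off the fixpoint predicates, which therefore still occur under the negation; the paper's own formula has exactly this feature (its $f_k$-disjunct puts $X(x_1),\ldots,X(x_{k-1})$ under a negation), so your sketch is no weaker than the paper's here, but the fix as you describe it does not actually discharge the concern you raised.
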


\begin{proof}
  Recall that we assume the formulas are such that \emph{every occurrence of a
    fixpoint operator in $\phi$ is a positive occurrence (i.e., lies under
    an even number of negations).}

  We first prove the claim for formulas containing only $\LFP$ operators (no
  $\GFP$ operators).  Let $\phi$ be any \unfp formula containing only least
  fixpoint operators and $M$ be a model.  Let $\tuple{X}=X_1,\ldots, X_n$ be the fixpoint
  variables occurring in $\phi$ (we assume each fixpoint operator binds a
  different variable).  Initially, we assign each $X_i$ to be the empty set. We
  then repeatedly consider each fixpoint variable $X_i$ and evaluate its
  fixpoint definition $\beta_i$, viewed as a \unfo formula by using the current
  choice of sets $\tuple{X}$ to interpret the free set variables of $\beta_i$
  as well as the fixpoint subformulas of $\beta_i$, and check if new elements
  are derived that do not already belong to the set $X_i$. If this is the case,
  we add the elements in question to $X_i$.  We repeat this procedure until no
  new elements are derived.  It is well known that the resulting sets we end up
  with are the least fixpoint solutions for the fixpoint subformulas of
  $\phi$. Furthermore, the number of iterations is polynomial, since, in
  each iteration, at least one element of $M$ gets added to on of the sets, and
  each iteration can be performed in \pnplogsq by Theorem~\ref{thm-unfo-model-check}.

  By dualization, we get the same result for formulas containing only greatest
  fixpoint operators.  The result is then easily lifted to the full alternation
  free fragment by induction on the alternation rank of
  the fixpoint variable in question, where the alternation rank is defined as
  the (finite) maximal number of fixpoint alternations on an outgoing path from
  that variable in the dependency graph: we simply perform a bottom up evaluation based on
  the syntactic tree defining $\phi$, coloring each node of $M$ with the
  fixpoint formulas that it satisfies using the above algorithms.

  For the lower bound, we reduce from \lexsat. Let $\phi(x_1, \ldots, x_n)$ be
  any Boolean formula that is the input of the \lexsat problem. We may assume
  without loss of generality that $\phi$ is in negation normal form.  We
  construct a structure $M$, whose domain is $\{t_1, \ldots, t_n, f_1, \ldots,
  f_n\}$, and with unary predicates $T$ and $V_1, \ldots, V_n$, such that
  $T^M=\{t_1, \ldots, t_n\}$, and $V_i^M=\{t_i,f_i\}$. Intuitively, each
  element $t_i$ represents the eventuality that the value of $x_i$ in the
  lexicographically maximal satisfying assignment is $1$, while $f_i$
  represents the eventuality that the value of $x_i$ in the lexicographically
  maximal satisfying assignment is $0$. Using a fixpoint formula, we will
  compute the set of all elements that are actually ``true'' in the
  lexicographically maximal satisfying assignment. It then suffices only to
  check whether $t_n$ belongs to this set.

  Let $\phi'$ be obtained from $\phi$ by replacing every occurrence of
  a Boolean variable $x_i$ by $T(x_i)$. 
  Let $\theta$ be the formula $\exists z(V_n(z)\land T(z)\land [\LFP_{X,x}\psi](z))$
  where $\psi$ is the disjunction of all formulas of the following
  forms, for all $1\leq k\leq n$:
  \[V_k(x)\land T(x) \land \exists x_1, \ldots,
  x_n(\bigwedge_{i=1\ldots n}\!\!\!\! V_i(x_i)\land \phi' \land
  X(x_1)\land \cdots \land X(x_{k-1})\land T(x_k))\]
and
  \[V_k(x)\land F(x) \land \neg\exists x_1, \ldots,
  x_n(\bigwedge_{i=1\ldots n}\!\!\!\! V_i(x_i)\land \phi' \land
  X(x_1)\land \cdots \land X(x_{k-1})\land T(x_k))\]
  It is easy to see from the construction that $\theta$ is true in $M$
  if and only if $x_n=1$ in the lexicographically maximal satisfying
  assignment for $\phi$. 
\end{proof}

We now turn to the general case.

\begin{theorem}\label{thm-modelcheck-unfp}
The \unfp model checking problem is in $\npnp\cap \conpnp$ and  \pnp-hard.
\end{theorem}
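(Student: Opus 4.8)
The lower bound is immediate: \unfp syntactically includes its alternation-free fragment, so by Theorem~\ref{thm-modelcheck-unfp-altfree} the \unfp model-checking problem is already \pnp-hard. For the upper bound it suffices to prove membership in \npnp, because $M\not\models\phi$ iff $M\models\neg\phi$ and $\neg\phi$ is again a \unfp sentence; thus the \unfp model-checking problem is equal to its own complement, and membership in \conpnp then follows.

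To place the problem in \npnp I would use a model-checking game for \unfp in the spirit of the evaluation game for the modal $\mu$-calculus. Assume the input sentence $\phi$ is in \unnormalform, that all fixpoint operators occur positively (dualising the greatest fixpoints), and that the remaining negations have been pushed down to the atoms. Positions are pairs $(\psi,a)$ with $\psi$ a subformula of $\phi$ having at most one free first-order variable and $a$ an element of $M$. Disjunctions and existential quantifiers are moves of Elo\"ise, conjunctions and universal quantifiers are moves of Abelard, and reaching a fixpoint operator or a fixpoint variable unfolds it, recording the regenerated variable. At a conjunctive-query block $\exists\tuple{z}\,(\tau(\tuple{z})\wedge z_i=y\wedge\bigwedge_j\phi_j(z_j))$ Elo\"ise makes a single macro-move, committing to a witnessing assignment for $\tuple{z}$ (and, if the block still contains disjunctions, a resolution of them), after which Abelard challenges one of the relational atoms or one of the one-free-variable sub-conditions $\phi_j(z_j)$, a challenged false atom being an immediate loss; negations of such blocks are treated dually. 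An infinite play is won according to the highest fixpoint variable regenerated infinitely often, a least-fixpoint variable favouring Abelard and a greatest-fixpoint variable favouring Elo\"ise, so the game is a parity game with priorities given by the nesting of the fixpoint operators. As usual, $M\models\phi$ iff Elo\"ise wins from the initial position.

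Since $\phi$ has linear size in \unnormalform, there are only polynomially many positions $(\psi,a)$, so the game has polynomial ``macro-size'' even though its full arena is exponential (a macro-move at a conjunctive-query block has exponentially many outcomes). By positional determinacy of parity games, $M\models\phi$ iff Elo\"ise has a positional winning strategy $\sigma$, and such a $\sigma$ has a polynomial-size description: it picks one successor at each of the polynomially many positions it owns, and each tuple it commits to at a conjunctive-query block has size bounded by the width of $\phi$. The algorithm guesses $\sigma$ and then checks that $\sigma$ is winning, i.e.\ that Abelard has no winning strategy in the resulting one-player parity game. This game still has polynomial macro-size, so Abelard wins it iff he has a polynomial-length ``lasso'' of moves reaching a position lost by Elo\"ise or a cycle of odd priority; along such a lasso the only macro-moves Abelard must supply are witnessing assignments at negated conjunctive-query blocks, each of polynomial size. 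Hence ``Abelard wins the remaining game'' is an \np property, ``$\sigma$ is winning'' is a \conp property, and guessing $\sigma$ followed by this \conp check yields an $\np^{\conp}=\npnp$ algorithm. (Equivalently, once the guessed $\sigma$ has pinned down the denotations of the finitely many fixpoint subformulas, the remaining conjunctive-query obligations become \unfo model-checking instances, which lie in $\pnplogsq\subseteq\pnp$ by Theorem~\ref{thm-unfo-model-check}.)

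The delicate point, just as in the analysis of $\mu$-calculus model checking, is the interplay of unbounded alternation of fixpoints with the exponential arena: iterating the nested fixpoint approximants directly is exponential-time as soon as the alternation depth is unbounded, so one is forced to argue with the succinct ``positional strategy plus progress measure'' certificate and to verify carefully that this certificate has a polynomial-size relevant part, can be validated using only polynomially many \np (equivalently, \unfo model-checking) queries, and correctly witnesses $M\models\phi$ across arbitrarily many fixpoint alternations.
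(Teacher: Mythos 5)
Your lower bound and the observation that closure under (unary) negation reduces everything to membership in \npnp are both correct and coincide with the paper's proof. Your upper bound, however, takes a genuinely different route --- a parity-game certificate ``guess a positional winning strategy for Elo\"ise, then \conp-check it'' --- whereas the paper guesses only the denotations of the \emph{greatest} fixpoints, computes the least fixpoints bottom-up by monotone iteration, and uses the \pnplogsq algorithm for \unfo as the oracle for the first-order layers, with soundness coming from a monotonicity argument (the nondeterministic evaluation never overapproximates the true semantics, and the correct guess attains it).

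The gap in your argument is the claim that Elo\"ise's positional winning strategy $\sigma$ has a polynomial-size description because ``it picks one successor at each of the polynomially many positions it owns.'' This is false for the game you define: at a \emph{negated} (dualized, hence universal) conjunctive-query block, it is Abelard who commits to the witnessing tuple, and Elo\"ise must then choose which negated atom or which sub-condition $\neg\phi_j(b_j)$ to defend. She therefore owns one intermediate position for each of the exponentially many tuples Abelard may play, and a positional strategy must specify a response at all of them. With $\sigma$ only partially specified, the \conp verifier cannot simulate Elo\"ise's replies along Abelard's candidate lasso, so the ``$\np$ guess followed by a \conp check'' structure does not go through as written. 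Your parenthetical fallback --- guess the denotations of the fixpoint subformulas and reduce the rest to \unfo model checking --- is essentially the paper's idea, but stated for \emph{all} fixpoints it runs into the leastness problem: a guessed set that is merely \emph{a} fixpoint of an $\LFP$-definition need not be the least one, and accepting on that basis is unsound once the fixpoint sits under a negation. The paper's resolution is precisely to guess only the $\GFP$ sets (any fixpoint underapproximates the greatest one, and by positivity the computed answer then underapproximates the truth, so acceptance is sound, while the correct guess certifies membership) and to obtain the $\LFP$ sets by iteration from below. Your closing remark about needing a ``positional strategy plus progress measure'' certificate correctly identifies where the work lies, but the proposal does not carry that construction out, and without it the membership in \npnp is not established.
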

\begin{proof}

The {lower bound} is immediate from
Theorem~\ref{thm-modelcheck-unfp-altfree}.
We prove here the {upper bound}. 

Since \unfp is closed under unary negation,
it is enough to show that the model-checking problem is in \npnp.  We show a
slightly stronger result proving that, for every formula in at most one free
variable, we can decide in \npnp whether a given element of a given structure 
makes the formula true.  The algorithm we describe below is inspired by an
idea from \cite{Vardi95:bounded} to reduce the problem to the case of formulas
that only contain least fixpoint operators (and no greatest fixpoint operators)
by guessing a set for each greatest fixpoint operator.
 
In the sequel we will only consider formulas $\phi$ with one free
first-order variable and by ``evaluating'' this formula over $M$, 
intuitively, we mean computing, non-deterministically, 
a set of elements of $M$ making the formula true. 

Recall from the beginning of the section that we consider
\unfo-formula in which we allow both LFP and GFP-operators, and 
we require that all occurrences of fixpoint operators are positive.
Equivalently, we view each \unfp formula as being defined inductively by the grammar:
\begin{align*}
\phi &::= \alpha(\tuple X,\tuple \psi,x)\\
\psi &::=  [\LFP_{Y,y}~\phi(Y,\tuple{X},y)](x) ~|~ [\GFP_{Y,y}~\phi(Y,\tuple{X},y)](x) 
\end{align*}
where $\alpha$ is a \unfo formula with one free first-order variable, possibly
several monadic second-order variables, and possibly using as atoms fixpoint subformulas
$\psi(z)$ defined by mutual induction using the above grammar, such
that $\tuple X$ and $\tuple \psi$ occur only positively (under an even
number of negations) in $\alpha$.

Given a formula of the form $\alpha(\tuple X,\tuple \psi,x)$ we denote by $\hat
\alpha$ the \unfo formula constructed from $\alpha$ by replacing each nested
fixpoint subformula $\psi(z)$ by $Y(z)$ where $Y$ is the variable defined by
$\psi$. Hence $\hat\alpha$ has $x$ as unique first-order free variable and
$\tuple X$ and $\tuple Y$ as  monadic second-order free variables. The positivity
condition stated above implies that $\hat \alpha$ is monotonic with respect to
$\tuple X$ and $\tuple Y$.

\newcommand\eval{\textit{eval}\xspace}

Given a structure $M$, sets $\tuple{U}$ of elements of $M$ and a \unfo formula
$\phi(\tuple X,x)$ we denote by $\eval(\phi,(M,\tuple{U}))$ the subset of
the domain of $M$ computed by induction on the syntactic representation of
$\phi$ using the following algorithm where, as a convenient notation, we
also write $\eval(\tuple\phi,(M,\tuple{U}))$ to denote the tuple
$(\eval(\phi_1,(M,\tuple{U})),\ldots,\eval(\phi_n,(M,\tuple{U})))$, where
$\tuple\phi=\phi_1,\ldots,\phi_n$:

\bigskip\noindent
\emph{$\eval(\phi,(M,\tuple{U}))$}
\begin{itemize}
\item Case $\phi$ is of the form $\alpha(\tuple X,\tuple \psi,x)$
  \begin{enumerate}
  \item Compute $\tuple V:=\eval(\tuple \psi, (M,\tuple U))$ by induction
  \item Evaluate $\hat \alpha$ on $(M,\tuple U,\tuple V)$, using the
    algorithm of Theorem~\ref{thm-unfo-model-check}
  \end{enumerate}

\item Case $\phi$ is of the form $[\LFP_{Y,y}~\alpha(Y,\tuple{X},\tuple\psi,y)](x)$
  \begin{enumerate}
  \item set $S:=\emptyset$
  \item\label{step-lfp} Compute $\tuple V:=\eval(\tuple \psi,
    (M,\tuple U,S))$ by induction
  \item Compute $S':=\eval(\hat\alpha, (M,\tuple U,\tuple V,S))$ by induction
  \item If $S'=S$ return $S$ otherwise go to Step~\ref{step-lfp} with $S:=S\cup
    S'$
  \end{enumerate}
\item Case $\phi$ is of the form $[\GFP_{Y,y}~\alpha(Y,\tuple{X},\tuple \psi,y)](x)$
  \begin{enumerate}
  \item guess $T$
  \item Compute $\tuple V:=\eval(\tuple \psi, (M,\tuple U,T))$ by induction
  \item Compute $T':=\eval(\hat\alpha, (M,\tuple U,\tuple V,T))$ by induction
  \item If $T'=T$ return $T$ otherwise abort
  \end{enumerate}
\end{itemize}

\noindent It should be clear that the algorithm is in \npnp: by monotonicity
it makes only polynomially many inductive calls, and the base case corresponds to
the evaluation of a \unfo formula and is therefore in \pnplogsq by
Theorem~\ref{thm-unfo-model-check}.

It remains to show that this result returns the correct answers.
Recall the semantics of fixpoints as described in Section~\ref{section-prelim}.
It is easy to see that the algorithm below computes exactly the correct results
according to this semantics.

\bigskip\noindent
\emph{$\eval^*(\phi,(M,\tuple{U}))$}
\begin{itemize}
\item Case $\phi$ is of the form $\alpha(\tuple X,\tuple \psi,x)$
  \begin{enumerate}
  \item Compute $\tuple V:=\eval(\tuple \psi, (M,\tuple U))$ by induction
  \item Evaluate $\hat \alpha$ on $(M,\tuple U,\tuple V)$
  \end{enumerate}

\item Case $\phi$ is of the form $[\LFP_{Y,y}~\alpha(Y,\tuple{X},\tuple\psi,y)](x)$
  \begin{enumerate}
  \item set $S:=\emptyset$
  \item\label{step-lfp1} Compute $\tuple V:=\eval(\tuple \psi,
    (M,\tuple U,S))$ by induction
  \item Compute $S':=\eval(\hat\alpha, (M,\tuple U,\tuple V,S))$ by induction
  \item If $S'=S$ return $S$ otherwise go to Step~\ref{step-lfp1} with $S:=S\cup
    S'$
  \end{enumerate}
\item Case $\phi$ is of the form $[\GFP_{Y,y}~\alpha(Y,\tuple{X},\tuple \psi,y)](x)$
  \begin{enumerate}
  \item set $T=\dom(M)$
  \item\label{step-gfp} Compute $\tuple V:=\eval(\tuple \psi, (M,\tuple U,T))$ by induction
  \item Compute $T':=\eval(\hat\alpha, (M,\tuple U,\tuple V,T))$ by induction
  \item If $T'=T$ return $T$ otherwise go to Step~\ref{step-gfp} with $T:=T
    \cap T'$
  \end{enumerate}
\end{itemize}
Let's denote by $\eval^*(\phi,(M,\tuple{U}))$ the set computed by this second
algorithm, i.e. the standard algorithm for fixpoint formulas.  Notice that it
differs with the previous algorithm only in the case of greatest fixpoints. The
fact that $\eval(\phi,(M,\tuple{U}))=\eval^*(\phi,(M,\tuple{U}))$ is a
consequence of the following claim.

\begin{claim}
For all $M,\tuple U$ and $\phi$ we have $\eval(\phi,(M,\tuple{U})) \subseteq
\eval^*(\phi,(M,\tuple{U}))$. Moreover, if \eval always guess the
correct greatest fixpoint then we have quality.
\end{claim}
\begin{proof}
  This is a simple induction on $\phi$. For Case~(1) we obtain by induction
  that $\tuple V \subseteq \tuple V^*$ and the result follows immediately by the
  monotonicity assumption on $\hat \alpha$. Similarly, for Case~(2), we obtain
  that each stage of the computation of the least fixpoint by \eval is included
  in the same stage by $\eval^*$. The inclusion of the respective least
  fixpoints follows. For Case~(3) we obtain again by induction that $\tuple V
  \subseteq \tuple V^*$ and therefore that $T \subseteq \eval(\hat\alpha, (M,\tuple
  U,\tuple V^*,T))$ by the monotonicity assumption on $\hat \alpha$. If we denote
  by $T^*$ the greatest fixpoint of $\hat\alpha$ on $(M,\tuple U,\tuple V^*)$, by
  Knaster-Tarski Theorem, this implies that $T \subseteq T^*$ as desired.

  The second part of the claim is immediate.
\end{proof}
This concludes the proof of Theorem~\ref{thm-modelcheck-unfp}.
\end{proof}

\section{Trees}\label{section-tree}

In this section, we study the expressive power and computational
complexity of \unfo and \unfp on trees. 
We consider three types of trees: \emph{binary trees} (with two deterministic
successor relations $child_1, child_2$, and any number of unary predicates), \emph{unranked
  trees} (with a single child relation, and any number of unary predicates), and \emph{XML
  trees} (that is, trees in which nodes can have any number of
children and the children of each node are ordered, and where the
signature consists of the horizontal and vertical successor and order
relations, and any number of unary
predicates).  We will consider finite trees, but all proofs generalize to
infinite trees.

On XML trees with all axes, it is known that \emph{Core XPath = $\fo^2$} for
unary queries while \emph{Core XPath = $UCQ$-over-$\fo^2$-unary-predicates} for
binary queries \cite{Marx05:semantic}. It turns out that \unfo has the same
expressive power as Core XPath, both for unary and for binary queries
(cf.~\cite{tencate07:navigational}, where \unfo is called CRA(mon$\neg$))
and therefore \unfo characterizes Core XPath in a
more uniform way.  In particular, since the XML tree languages definable in
Core XPath are precisely the ones definable in $\fo^2$~\cite{Marx05:semantic},
this implies that $\unfo$ defines the same XML tree languages as $\fo^2$. The
same holds already for $\unfo^2$.

In this section, we further analyze the expressive power and complexity of
\unfo and \unfp on trees.

The following observation will be helpful. We say that two unranked
trees are \emph{root-to-root bisimilar} if the roots of the two trees
are bisimilar.

\begin{lemma}\label{lem:bisimulation-trees}
  Two unranked trees are UN-bisimilar if and only if they are root-to-root bisimilar.
\end{lemma}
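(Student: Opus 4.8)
The plan is to prove each direction by writing down the relevant bisimulation explicitly and checking its clauses. The one point that needs care is that in an unranked tree the parent of a node, when it exists, is \emph{unique}; this is exactly what lets a witnessing relation move between the ``modal'' world (global two-way bisimulations) and the ``unary--negation'' world (UN-bisimulations).

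\emph{From UN-bisimilarity to root-to-root bisimilarity.} Suppose $T_1\unbisim T_2$, with roots $r_1,r_2$, and let $Z$ be a non-empty UN-bisimulation between them. I would first check that $Z$ is already a global two-way bisimulation. Fix $(a,b)\in Z$. Applying the forward property to the singleton $\{a\}$ (and the backward property to $\{b\}$) shows $a$ and $b$ satisfy the same unary predicates. Applying it to $\{a,a'\}$ for a child $a'$ of $a$ yields a child $b'$ of $b$ with $(a',b')\in Z$, and the backward property gives the symmetric clause; applying it to $\{a',a\}$ for a \emph{parent} $a'$ of $a$ yields a parent $b'$ of $b$ --- necessarily the unique parent --- with $(a',b')\in Z$, and symmetrically. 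Finally, applying the forward property to $\{a,a'\}$ for an \emph{arbitrary} $a'\in\dom(T_1)$ shows $a'$ lies in the domain of $Z$, and symmetrically for $\dom(T_2)$; these are the clauses for the global modality. So $Z$ satisfies all seven clauses of a global two-way bisimulation. It remains to place $(r_1,r_2)$ in $Z$: by the parent clauses, for any $(u,v)\in Z$ the node $u$ has a parent iff $v$ does, and the (unique) parents are again $Z$-related; since a node of a tree lacks a parent iff it is the root, this gives $u=r_1\Leftrightarrow v=r_2$. Starting from any $(a,b)\in Z$ and repeatedly passing to the pair of parents --- the depth of the first coordinate strictly decreases --- we reach a pair $(r_1,v)$, whence $v=r_2$. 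Thus $(r_1,r_2)\in Z$ and $T_1,T_2$ are root-to-root bisimilar.

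\emph{From root-to-root bisimilarity to UN-bisimilarity.} Conversely, suppose $(r_1,r_2)$ is related by a bisimulation, and let $Z_0$ be the set of pairs $(u,v)$ reachable from $(r_1,r_2)$ by simultaneously descending along matching children while staying inside that bisimulation. Using uniqueness of the path from a root to a node, one checks as above that $Z_0$ is again a global two-way bisimulation containing $(r_1,r_2)$, and, moreover, that whenever $(u,v)\in Z_0$ the nodes $u,v$ have equal depth and their two chains of ancestors are related pairwise by $Z_0$. I claim $Z_0$ is a UN-bisimulation. Fix $(a,b)\in Z_0$ and a finite $X\subseteq\dom(T_1)$; since $X$ is finite, the induced substructure $T_1\restrict X$ is a disjoint union of finitely many finite subtrees $S_1,\dots,S_m$ of $T_1$ (each connected component of a subgraph of a tree being a subtree), say with $a\in S_1$ when $a\in X$. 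For every $S_i$ not containing $a$, let $\rho_i$ be its root, choose $c_i$ with $(\rho_i,c_i)\in Z_0$ (possible because $Z_0$ is total in its first coordinate), and extend to a homomorphism $S_i\to T_2$ by applying the child-forth clause top-down. For the component $S_1$ containing $a$ (if $a\in X$), walk up from $a=p_0$ to the root $\rho_1=p_k$ of $S_1$ along parents and use the parent-forth clause together with uniqueness of parents in $T_2$ to obtain the forced ancestors $b=q_0,q_1,\dots,q_k$ of $b$ with $(p_j,q_j)\in Z_0$; set $h(p_j)=q_j$ and again extend downward into $S_1$ by child-forth. The union of these partial maps is a homomorphism $T_1\restrict X\to T_2$ sending $a$ to $b$ and lying inside $Z_0$; the backward property is symmetric. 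Hence $Z_0$ is a UN-bisimulation, so $T_1\unbisim T_2$.

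The main obstacle --- really the only nontrivial step --- is the ``$\Leftarrow$'' construction: one must convert a single bisimulation into a winning strategy for a game in which the opponent may reveal an entire finite induced substructure at once. This goes through because such a substructure of a tree is a finite forest; each tree-shaped component can be threaded into the target tree by a forward-simulation argument as soon as its topmost node has been placed, and the topmost node of the component holding the pebble is placed by climbing the uniquely determined chain of ancestors --- precisely the place where uniqueness of parents in a tree is used.
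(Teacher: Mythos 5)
Your proposal is correct and takes essentially the same approach as the paper: your relation $Z_0$ (pairs of equal depth whose ancestor chains are matched by the bisimulation) is the paper's witnessing relation, and the forward/backward properties are established the same way, by following the forced chain of ancestors upward and then extending downward along the bisimulation's forth clauses. The differences are cosmetic: the paper builds a single total homomorphism $t\to t'$ anchored at $(a,b)$ instead of your component-by-component partial homomorphism on each finite $X$, and it leaves as an easy observation the converse direction that you verify in detail.
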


\begin{proof}
  It is clear that every UN-bisimulation is a root-to-root
  bisimulation. Conversely suppose $t, t'$ are root-to-root bisimilar
  trees, with roots $r$ and $r'$.  For any node $a$ of $t$, we denote
  by $depth_t(a)$ the distance from $r$ to $a$, and we denote by by
  $t_a$ the subtree of $t$ rooted at $a$. Similar notations apply to
  $t'$. Let $Z$ consist of all pairs $(a,b)$ of nodes from $t$ and
  $t'$, respectively, such that (i) $depth_t(a)=depth_{t'}(b) = k\geq
  0$, and (ii) for each $i\leq k$, the subtree of $t$ rooted at the
  $i$-th ancestor of $a$ and the subtree of $t'$ rooted at the $i$-th
  ancestor of $b$ are root-to-root bisimilar. 

  We claim that $Z$ is a UN-bisimulation.  Let $(a,b)\in Z$. We show how
  to construct a homomorphism $h$ from $t$ to $t'$ that maps $a$ to
  $b$. The other direction is established in the same way.

  The homomorphism $h$ is constructed as follows: first, the
  root-to-root bisimulation between $t_a$ and $t'_b$ induces a
  homomorphism from $t_a$ to $t'_b$. If $a=r$ and $b=r'$, we are
  done. Otherwise, let $a'$ be the parent of $a$ and let $b'$ be the
  parent of $b$. Then the root-to-root bisimulation between $t_{a'}$
  and $t'_{b'}$ induces a homomorphism from $t_{a'}$ to $t'_{b'}$,
  which, we may assume, extends the previously constructed homomorphism
  from $t_a$ to $t'_b$. Repeating the same argument, after $k$ many
  steps, we obtain a homomorphism from $t$ to $t'$ that maps $a$ to
  $b$, and we are done.
\end{proof}

We start with \unfp. Recall that \unfp is included into \mso. Hence, over all
kind of trees, \unfp sentences only define regular languages (that is,
\mso-definable classes of trees). The converse is
also true:

\begin{thm} The following hold both over the class of finite trees and
  over the class of finite and infinite trees: 
\begin{enumerate}
\item On binary trees, $\unfp$ defines the regular languages.
\item On XML trees,  $\unfp$ defines the regular languages. 
\item On unranked trees, $\unfp$ defines the root-to-root bisimulation invariant regular
  languages.
\end{enumerate}
  The same hold for $\unfp^2$.
\end{thm}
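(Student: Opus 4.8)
The plan is to prove the three items by relating \unfp on trees to $\ml_\mu$ (the global two-way $\mu$-calculus), which we already know, by Theorem~\ref{thm-decid-mu} and the discussion in Section~\ref{sec-modal-logic}, can be translated into \mso; conversely, we need each regular tree language of the appropriate kind to be captured in $\unfp^2$. The upper direction (every $\unfp$-definable class is \mso-definable, hence regular) is immediate, since $\unfp \subseteq \mso$ and a regular language is by definition an \mso-definable one; this gives one inclusion in all three items without any work. For item (3), the fact that $\unfp$-definable classes of unranked trees are root-to-root bisimulation invariant is immediate from Proposition~\ref{prop-unbisim} together with Lemma~\ref{lem:bisimulation-trees}: UN-bisimilar trees satisfy the same $\unfp$ sentences, and on unranked trees UN-bisimilarity coincides with root-to-root bisimilarity.

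The substantial direction is: every regular language (resp.\ every root-to-root bisimulation invariant regular language, in the unranked case) is definable in $\unfp^2$. For binary trees (item 1), I would encode a nondeterministic (or alternating) tree automaton running on the tree directly as an $\ml_\mu$ formula, using the two child modalities $\langle child_1\rangle$, $\langle child_2\rangle$ and a least/greatest fixpoint over the state set to express ``there is an accepting run'' with the appropriate parity/acceptance condition; since $\ml_\mu$ embeds into $\unfp$ by the translation in Figure~\ref{fig:modal-semantics} extended with the \LFP-rule, and since that translation uses only the single variable $x$ plus one bound $y$ in each modal step — so width $2$ — we land in $\unfp^2$. This is the classical fact that MSO on binary trees equals the modal $\mu$-calculus with the two deterministic successors (Niwiński, Emerson–Jutla), and I would cite/adapt it rather than reprove it. For XML trees (item 2), the signature has the horizontal and vertical successor relations (and their inverses, available because $\ml$ is two-way), so an XML tree is just a particular kind of Kripke structure; again every regular XML tree language is recognised by an automaton, which I encode in $\ml_\mu$ over these binary relations, using the two-way modalities to navigate both axes, and then pull back into $\unfp^2$ via the same translation. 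For unranked trees (item 3), I would first observe that a root-to-root bisimulation invariant regular language is determined by the bisimulation types of subtrees, so it is recognised by an automaton that only inspects, at each node, the \emph{set} of bisimulation types occurring among its children (a ``bisimulation-invariant'' bottom-up automaton); such an automaton is naturally expressible in $\ml_\mu$ over the single $child$ relation (the box/diamond over $child$ already only see the children up to which types appear), and hence in $\unfp^2$. Conversely any $\ml_\mu$ formula over $child$ is root-to-root bisimulation invariant and regular, closing the loop.

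The main obstacle I expect is item~(3): one must argue carefully that root-to-root bisimulation invariance of a regular unranked-tree language implies it is recognised by an automaton whose transitions depend only on the \emph{set} of states seen among the children (not their multiplicities or order), so that the automaton can be simulated by an $\ml_\mu$ formula whose only access to children is through $\langle child\rangle$ and $[child]$. This is a known phenomenon (bisimulation-invariant regular unranked tree languages are exactly those recognised by such ``modal'' automata), but spelling it out requires a small automata-theoretic lemma: given a deterministic bottom-up unranked-tree automaton for a bisimulation-invariant language, one shows that states reachable by child-sequences with the same underlying set are equivalent, then quotients. The remaining items are routine once the $\ml_\mu$-to-$\unfp^2$ translation and the automata-to-$\ml_\mu$ encodings are in place; and the claim about $\unfp^2$ follows throughout simply because the modal translation never needs more than two first-order variables. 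Finally, all the automata-theoretic ingredients hold over both finite trees and arbitrary (finite or infinite) trees — using parity automata and the full $\mu$-calculus with alternation in the infinite case — so the statement ``both over finite trees and over finite and infinite trees'' is covered uniformly.
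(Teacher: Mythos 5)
Your proposal is correct and follows essentially the same route as the paper: the paper likewise obtains one inclusion from $\unfp\subseteq\mso$, the hard inclusion by citing the known fact that the \mucalc already captures the regular (resp.\ root-to-root bisimulation invariant regular) tree languages and embedding $\ml_\mu$ into $\unfp^2$, and the restriction in item~(3) from UN-bisimulation invariance plus Lemma~\ref{lem:bisimulation-trees}. The only difference is that you sketch proofs of the automata-to-$\ml_\mu$ encodings where the paper simply cites the literature.
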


\begin{proof}
  It is known that all the three statements hold for the
  \mucalc~\cite{jani:expr96} (as well as for monadic
  Datalog~\cite{GottlobKoch:monadic}). 
  Hence, as the \mucalc is a fragment of \unfp
  this shows the first two claims and one direction of the third claim.
  Moreover, as every formula of the \mucalc is equivalent to a $\unfp^2$
  sentence, \unfp collapses to $\unfp^2$ over trees.
  Clearly, \unfp can only define regular languages that are invariant under
  UN-bisimulation. Hence the other direction of the third claim follows from
  Lemma~\ref{lem:bisimulation-trees}.
\end{proof}

We now turn to \unfo. The $k$-neighborhood of a node of a tree is the subtree
rooted at that node, up to depth $k$. A binary tree language is LT
(\emph{Locally Testable}) if membership into this language is determined by the
presence or absence of isomorphism-types of $k$-neighborhoods for some
$k$. Similarly, an unranked tree language is ILT (\emph{Idempotent Locally
  Testable}) if membership is determined by the presence of absence of
bisimulation-types of $k$-neighborhoods, for some $k$.

\begin{thm} The following hold both over the class of finite trees and
  over the class of finite and infinite trees:
\begin{enumerate}
\item On  binary trees,
$\unfo$ defines the LT regular languages. 
\item On  unranked  trees,
$\unfo$ defines the ILT regular languages.
\end{enumerate}
  The same hold for $\unfo^2$.
\end{thm}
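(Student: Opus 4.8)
The plan is to prove the four equalities by establishing, for binary trees, the chain $\mathrm{LT}\subseteq\unfo^2\subseteq\unfo\subseteq\mathrm{LT}$, and similarly with $\mathrm{ILT}$ in place of $\mathrm{LT}$ for unranked trees; the inclusion $\unfo^2\subseteq\unfo$ is immediate, so combining the two outer inclusions shows that $\unfo$, $\unfo^2$, and $\mathrm{LT}$ (resp.\ $\mathrm{ILT}$) define the same tree languages, which in particular settles the ``same holds for $\unfo^2$'' clause. (Since $\mathrm{LT}$ and $\mathrm{ILT}$ languages are Boolean combinations of \mso-definable conditions, they are automatically regular, so ``$\mathrm{LT}$ regular'' just means $\mathrm{LT}$.) Both directions are parametric enough that finite trees and arbitrary trees are handled uniformly.

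For $\mathrm{LT}\subseteq\unfo^2$: an $\mathrm{LT}$ language is a finite Boolean combination of conditions ``the isomorphism type $\tau$ of some depth-$k$ neighborhood occurs'', over the finitely many depth-$k$ types $\tau$ (each $\sim_k$-class being fixed by which types occur). I would write ``$\tau$ occurs'' as $\exists x\,\delta_\tau(x)$, where $\delta_\tau(x)$ describes the depth-$k$ subtree rooted at $x$: it records a positive or negative literal at $x$ for each unary predicate, and for each successor direction $i$ it asserts either $\neg\exists y\,child_i(x,y)$ or $\exists y(child_i(x,y)\wedge\delta_{\tau_i}(y))$ with $\tau_i$ the prescribed depth-$(k-1)$ type. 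The key point is that $\delta_\tau$ uses only two first-order variable names, reused in alternation while descending depth-first: each subformula $\exists y(child_i(x,y)\wedge\delta_{\tau_i}(y))$ has $x$ as its only free variable, so in UN-normal form the quantifier need not be pulled out, and the name $x$ is free to be rebound inside $\delta_{\tau_i}(y)$. Hence $\delta_\tau\in\unfo^2$, and any Boolean combination of the sentences $\exists x\,\delta_\tau(x)$ is still in $\unfo^2$. For unranked trees one does the same with $\delta_\tau(x)$ now describing a depth-$k$ \emph{bisimulation} type: $x$ carries the prescribed labels, has, for each depth-$(k-1)$ type $s$ in $\tau$'s child-set, a child of type $s$ via $\exists y(child(x,y)\wedge\delta_s(y))$, and has no child of a type outside that set via $\neg\exists y(child(x,y)\wedge\chi(y))$ where $\chi$ is the disjunction of the $\delta_s$ for $s$ outside the child-set; this is again width $2$, and since there are finitely many depth-$k$ bisimulation types over a fixed signature, $\mathrm{ILT}$ languages are likewise finite Boolean combinations of such sentences.

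For $\unfo\subseteq\mathrm{LT}$ (resp.\ $\mathrm{ILT}$), the substantive direction, I would argue by induction on formulas in UN-normal form, proving simultaneously: (i) every $\unfo$ sentence over trees is equivalent to a Boolean combination of sentences ``some node has neighborhood type $\tau$'', hence defines an $\mathrm{LT}$ language; and (ii) every unary $\unfo$ formula $\psi(x)$ is \emph{$(k,\Phi)$-determined} for some radius $k$ and some finite set $\Phi$ of $\mathrm{LT}$ sentences, meaning that the truth of $\psi$ at a node depends only on the radius-$k$ ball of that node and on the truth values of the sentences in $\Phi$ in the whole tree. Boolean and negation cases just take unions of the data; for $\psi(x)=\exists\vec z(\tau(\vec z)\wedge z_i=x\wedge\bigwedge_j\psi_j(z_j))$ one takes $k$ to be the diameter of $\tau$ plus $\max_j k_j$ and $\Phi=\bigcup_j\Phi_j$, so that a realization of $\tau$ anchored at $x$ together with the relevant $k_j$-balls of the $z_j$ all fit inside the radius-$k$ ball of $x$; an isomorphism of such balls transports a witness, and agreement on $\Phi$ transfers the truth of the $\psi_j$. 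A \emph{sentential} subformula $\exists\vec z(\tau(\vec z)\wedge\bigwedge_j\psi_j(z_j))$ is handled as in (i): for $K$ large enough to cover $\tau$, the radii of the $\psi_j$, and the radii of the sentences in the $\Phi_j$, two trees with the same set of radius-$K$ ball types agree on the sentence, because any witnessing tuple is confined to one such ball and the global conditions feeding the $\psi_j$ are themselves determined by that set. The unranked case has the same shape with ``type'' read as ``bisimulation type'' throughout; this is legitimate because $\unfo$ formulas are invariant under UN-bisimulation (Proposition~\ref{prop-unbisim}), and the decomposition $\exists yz(child(x,y)\wedge child(x,z)\wedge\cdots)\equiv(\exists y\,child(x,y)\wedge\cdots)\wedge(\exists z\,child(x,z)\wedge\cdots)$ (equalities of children may be forced, since inequality is not expressible) shows that no counting of children takes place.

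I expect the main obstacle to be reconciling the two notions of neighborhood: the theorem's ``$k$-neighborhood'' of a node is the depth-$k$ subtree \emph{below} it, whereas a pattern $\tau(\vec z)$ may place a $z_j$ \emph{above} $x$, so $\psi(x)$ is not literally downward-local. I would run the induction above with symmetric (Gaifman) balls of radius $k$ instead of downward neighborhoods, and convert only at the end, using that on a tree the set of radius-$k$ Gaifman-ball types occurring is determined by, and determines, the set of depth-$2k$ downward-subtree types occurring: every radius-$k$ ball sits inside the depth-$2k$ subtree of an appropriate ancestor (the $k$-th one, or the root), and a bisimulation between two depth-$2k$ subtrees restricts, depth by depth, to one between the corresponding radius-$k$ balls. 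Thus a sentence that is a Boolean combination of ``some node has Gaifman-ball type $\beta$'' is also a Boolean combination of ``some node has depth-$2k$ downward type $\tau$'', i.e.\ is $\mathrm{LT}$ (resp.\ $\mathrm{ILT}$), which closes the chain of inclusions and proves the theorem, for $\unfo$ and, by the first paragraph, for $\unfo^2$ as well.
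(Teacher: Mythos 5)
Your overall architecture coincides with the paper's: the inclusion of LT (resp.\ ILT) into $\unfo^2$ is obtained by writing each depth-$k$ isomorphism (resp.\ bisimulation) type as a two-variable formula, and the converse inclusion is a locality argument over formulas in UN-normal form, with UN-bisimulation invariance (Proposition~\ref{prop-unbisim} together with Lemma~\ref{lem:bisimulation-trees}) absorbing the unranked case, and with the same radius-doubling trick to convert symmetric balls into downward neighborhoods of a suitable ancestor. Your $(k,\Phi)$-determinacy invariant and the explicit ball-to-subtree conversion are in fact spelled out in more detail than the paper's corresponding (rather terse) remarks, and your direct construction of $\unfo^2$-formulas for depth-$k$ bisimulation types replaces the paper's appeal to the characterization of ILT via global \ml.

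There is, however, one step that fails as written: the inductive case $\psi(x)=\exists\tuple{z}\,(\tau(\tuple{z})\land z_i=x\land\bigwedge_j\psi_j(z_j))$ with $k$ taken to be ``the diameter of $\tau$ plus $\max_j k_j$'' presupposes that $\tau$ is connected. The normal form of Lemma~\ref{lem:strongnormalform} does not guarantee this: $\tau$ may have components not containing $z_i$, and the witnesses for those components (together with their attached $\psi_j$'s) can sit anywhere in the tree, so $\psi(x)$ is not determined by any finite-radius ball of $x$ together with $\Phi=\bigcup_j\Phi_j$ (consider ``$x$ has a child, and somewhere some $P$-node has a $Q$-child''). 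The same issue affects your sentential case, whose justification that ``any witnessing tuple is confined to one such ball'' also needs connectivity. The paper handles this explicitly before the locality argument: it splits $\tau$ into the component of the anchor variable and the remaining components, observes that the latter yield a sentence, and case-splits on its truth value to rewrite $\phi$ as a Boolean combination of sentences with connected neighborhood types. The repair fits naturally into your framework --- the sentences arising from the non-anchor components of $\tau$ should be added to $\Phi$, and the sentential case should first be decomposed into a conjunction over connected components --- but as stated the induction does not go through.
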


\begin{proof}
  Since binary trees have bounded degree, there are only finitely many
  isomorphism types of $k$-neighborhoods for any given $k$.  Moreover, each can
  be completely described by a \unfo formula.  It follows that the LT regular
  languages can be defined in $\unfo$ (in fact, in $\unfo^2$). Incidentally,
  note that the only negation used in this construction is Boolean
  negation (i.e., negation applied to sentences),
  except for expressing the fact that a node is the root or is a leaf.

  For unranked trees, the ILT regular languages are precisely the ones that can
  be defined by a global \ml formula as defined in
  Section~\ref{sec-modal-logic}~\cite{Place09:locally}. It is clear that this
  language is contained in \unfo, and therefore all ILT regular language are
  definable by a \unfo-sentence (in fact, a $\unfo^2$-sentence).

  For the other direction, let $\phi$ be a sentence of \unfo. Without loss of generality we can assume
  that $\phi$ is in \unnormalform and satisfies the simplifying assumptions
  described in Step~1 of Section~\ref{sec-unfp-sat}. 

  We can further assume without loss of generality that the conjuncts
  $\tau(\tuple{z})$ occurring in the formulas:
  \begin{equation*}
\exists \tuple{z}\big(\tau(\tuple{z})\land z_i=y \land
  \!\!\!\!\!\!\!\!\bigwedge_{j\in\{1\ldots n\}\setminus\{i\}}\!\!\!\!\!\!\!\!\phi_j(z_j)\big) 
\text{\qquad or \qquad}
  \exists \tuple{z}\big(\tau(\tuple{z}) \land
  \!\!\!\!\!\bigwedge_{j\in\{1\ldots
    n\}}\!\!\!\!\!\phi_j(z_j)\big)
\end{equation*}
are connected when seen as structures. If this were not the case, let $I$ be
the set of indices $j$ such that $z_j$ is in the component of $y$,
$\tuple{z_I}$  and $\tau_I$ the corresponding fragments of $\tuple{z}$ and
$\tau$. Let $J$, $\tuple{z_J}$ and $\tau_J$ be parts containing the remaining indices. Then
$$\exists \tuple{z}\big(\tau(\tuple{z})\land z_i=y \land
  \!\!\!\!\!\!\!\!\bigwedge_{j\in\{1\ldots
    n\}\setminus\{i\}}\!\!\!\!\!\!\!\!\phi_j(z_j)\big)$$
 is equivalent to 
$$\exists \tuple{z_I}\big(\tau_I(\tuple{z_I})\land z_i=y \land
\bigwedge_{j\in I}\phi_j(z_j)\big) ~~~~\wedge~~~~ \exists
\tuple{z_J}\big(\tau_J(\tuple{z_J}) \land \bigwedge_{j\in J} \phi_j(z_j)\big)$$
Notice that the right-hand side of the resulting formula is a
sentence. Therefore $\phi$ is equivalent to the disjunction of $\phi_1$ and
$\phi_2$ where $\phi_1$ is the conjunction of $\exists
\tuple{z_J}\big(\tau_J(\tuple{z_J}) \land \bigwedge_{j\in J} \phi_j(z_j)\big)$
with $\phi$ where this right-hand side part was replaced with \emph{true}, and $\phi_2$ is the
conjunction of $\lnot \exists \tuple{z_J}\big(\tau_J(\tuple{z_J}) \land
\bigwedge_{j\in J} \phi_j(z_j)\big)$ and $\phi$ where this right-hand side part was replaced by
\emph{false}.

In summary, we can assume that $\phi$ is a Boolean combination of sentences in
\unnormalform, satisfying the simplifying assumption, starting with an
existential quantifier, and such that all its neighborhood types are connected.
Hence it is enough to consider a single such sentence.

Let $x$ be the first existentially quantified variable of $\phi$, i.e. $\phi$
is $\exists x \psi(x)$. By our assumptions on $\phi$, all quantified
variables of $\psi$ can be taken to 
range over the neighborhood of $x$ up to distance $|\phi|$. Hence, whether
$\psi(x)$ holds or not at a node $a$ of a tree $T$ only depends on the
neighborhood of $a$ up to distance $|\phi|$.

In the binary tree case, there are only finitely many such neighborhoods and
each of them is implied by the existence of the isomorphism-type of a
$k$-neighborhood for $k=2|\phi|$. Hence $\phi$ describes a language in LT.

In the unranked tree case, there are infinitely many such
neighborhoods. But, as \unfo is invariant under UN-bisimulation, it is
enough to consider those neighborhoods up to UN-bisimulation. Each of
the UN-bisimulation classes of these neighborhoods is implied by the
existence of the UN-bisimulation-type of a $k$-neighborhood for
$k=2|\phi|$. By Lemma~\ref{lem:bisimulation-trees} (and the
characterization of ILT on unranked trees in terms of global ML),
this
implies that $\phi$ describes a language in ILT.
\end{proof}

We conclude this section by investigating the complexity of satisfiability.

\begin{thm}
  The satisfiability problem for \unfo and for \unfp is
  2ExpTime-complete on binary trees, on ranked trees, and on XML
  trees.
\end{thm}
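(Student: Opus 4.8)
The plan is to get the lower bound by adapting the construction in the proof of Proposition~\ref{prop-sat-lower-bound}, and the upper bound by a direct translation into the two-way $\mu$-calculus over trees. For the \textbf{lower bound}, recall that in Proposition~\ref{prop-sat-lower-bound} an accepting run of an alternating $2^n$-space-bounded Turing machine is encoded in a finite tree with one binary child relation and unary predicates, by a \unfo sentence of negation depth~$2$. The crucial point here is that in that encoding every node has only a \emph{bounded} number of children: each configuration node has its $P_0$-child, its $P_1$-child, and a constant number of successor-configuration children, and each internal address-tree node has exactly two children. Consequently the same encoding works over finite ranked trees, over finite XML trees (use the child axis directly, fix an arbitrary sibling order, and ignore the remaining axes), and, via the standard first-child/next-sibling representation, over finite binary trees, where ``go to a child'' is replaced by ``one $child_1$-step followed by at most $c$ many $child_2$-steps'' for a fixed constant $c$ --- a bounded-length, and hence genuinely \unfo, formula. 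This yields \twoexptime-hardness of \unfo-satisfiability, and a fortiori of \unfp-satisfiability, already over finite binary trees, finite ranked trees, and finite XML trees.

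For the \textbf{upper bound} the plan is to reduce to the satisfiability problem for $\ml_\mu$ over $m$-ary trees, which is in \exptime both over finite trees and over arbitrary trees by standard (two-way) tree-automata techniques (cf.~\cite{vard:reas98,Mikolaj}; for the infinite case this also follows from Theorem~\ref{thm-decid-mu} via unraveling). Given a \unfp sentence $\varphi$ over a tree signature, I would first bring it into the normal form of Lemma~\ref{lem:strongnormalform}; this costs an exponential in $|\varphi|$, but the width $k$ and the maximal size $l$ of a neighborhood type stay polynomial. Unlike in Section~\ref{sec-unfp-sat}, I would \emph{not} pass through stitch diagrams (whose stitchings need not be trees); instead I would translate $\varphi$ directly, by induction commuting with the Boolean connectives and with \LFP and \GFP. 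For an existential block $\exists\tuple z\,(\tau(\tuple z)\land z_i=y\land\bigwedge_j\phi_j(z_j))$ one may assume $\tau$ connected (the components of $\tau$ not meeting $z_i$ are sentences and are split off, exactly as in the proof of the tree-expressiveness theorems). A connected conjunction of child-atoms of polynomially bounded size, realized in a tree, collapses --- because the child axes are functional --- onto one of at most $l^{O(l)}$ tree-shaped patterns; each such pattern, read from its designated node, is navigated by a polynomial-size combination of forward and backward modalities, and the transitive and order axes of XML trees are captured by small fixpoint formulas of the shape $\mu X\,(\psi\lor\langle child\rangle X)$. The outcome is an $\ml_\mu$ formula over the $m$-ary tree signature, of size exponential in $|\varphi|$, with exactly the same finite (resp.\ arbitrary) tree models as $\varphi$; composing with the \exptime decision procedure gives the \twoexptime bound. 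For \unfo one proceeds identically, landing in $\ml$ and using that $\ml$-satisfiability over trees is in \exptime.

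The step I expect to be the main obstacle is the middle one of the upper bound: making the passage from neighborhood types to two-way modal navigation fully precise --- enumerating the collapses of a connected child-pattern under functionality of the axes, keeping the case analysis polynomial per type, verifying (by the same bookkeeping as in the proof of Theorem~\ref{theorem-sat-upper-bound}) that substituting the inductively translated unary subformulas keeps the overall size within a single exponential, and checking that the construction preserves tree models both in the finite and in the infinite regime, so that both versions of the $\ml_\mu$-satisfiability result can be invoked. The adaptation of the lower-bound construction, by contrast, is routine.
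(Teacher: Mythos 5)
Your lower bound is the same as the paper's: it simply observes that the encoding in Proposition~\ref{prop-sat-lower-bound} already produces finite trees, so the hardness transfers. That part is fine.

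Your upper bound takes a genuinely different route, and it has a real gap at exactly the point you flag as ``the main obstacle.'' The paper does \emph{not} translate into $\ml_\mu$; it translates \unfp on XML trees into $\mu$Regular XPath and, for the positive-existential blocks, invokes the known result \cite{gott:conj04,Benedikt05:structural} that every positive existential first-order formula in one free variable over the XML tree signature can be compiled into a Core XPath expression in exponential time. That cited result is precisely the content your sketch is missing. The XML tree signature includes the \emph{transitive} vertical and horizontal order relations (descendant, following-sibling), so a neighborhood type $\tau(\tuple{z})$ may be a connected conjunction of descendant and order atoms. Your collapse argument (``child axes are functional, so a connected child-pattern folds onto $l^{O(l)}$ tree-shaped patterns'') only handles successor atoms; for descendant atoms the embedding of a connected pattern into a tree requires a case analysis over all consistent interleavings of the pattern variables along root-to-leaf branches and sibling orders, and appending ``$\mu X(\psi\lor\langle child\rangle X)$ for the transitive axes'' afterwards does not substitute for that analysis. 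Working this out correctly, with a single-exponential bound, is the theorem of \cite{gott:conj04,Benedikt05:structural}; your proposal would have to re-prove it.

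A second, smaller issue is the target decision procedure. Theorem~\ref{thm-decid-mu} gives \exptime satisfiability of $\ml_\mu$ over arbitrary and over finite \emph{Kripke structures}, not over the class of (finite or infinite) trees, and tree-hood is not UN-bisimulation-invariant (a self-loop is bisimilar to an infinite path), so you cannot axiomatize it inside $\ml_\mu$ and reduce to Theorem~\ref{thm-decid-mu}; nor does ``unraveling'' a Kripke model yield a finite or $m$-ary tree. You would need emptiness of two-way alternating tree automata over finite and infinite (un)ranked trees, plus a first-child/next-sibling encoding for the XML case --- all standard, but machinery the paper deliberately avoids by landing in $\mu$Regular XPath, whose \exptime satisfiability over XML trees is taken off the shelf from \cite{Balder-regular-xpath}.
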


\begin{proof}
  For the lower bound, recall that the proof of
  Proposition~\ref{prop-sat-lower-bound} was based on an encoding of Turing
  machine runs as finite trees. Hence it applies here too.

  For the upper bound, we will consider the case of \unfp on XML
  trees, as all other cases can be seen as a special case of this one.
  We will
  describe an exponential-time translation to $\mu$Regular XPath,
  the extension of Core XPath \cite{GottlobKP05} with the Kleene star
  and with the least fixpoint operator, for which
  satisfiability on XML trees can  decided in
  ExpTime~\cite{Balder-regular-xpath}.
  We briefly recall the syntax of $\mu$Regular XPath
  (cf.~\cite{Balder-regular-xpath} for more details). 
  The language has two sort of expressions, 
  \emph{node expressions} $\phi$ and \emph{path expressions} $\alpha$
  which are defined by mutual recursion:

  \[\alpha ~~::=~~ \uparrow ~\mid~ \downarrow~\mid~ \leftarrow~\mid~
  \rightarrow ~\mid~ ~.~ ~\mid~ \alpha[\phi] ~\mid~ \alpha/\beta ~\mid~ \alpha\cup\beta~\mid~
  \alpha^*\]
  \smallskip
  \[\phi ~~::=~~ P_i ~\mid~ \top ~\mid~ \neg\phi ~\mid~ \phi\land\phi ~\mid~
  \langle\alpha\rangle ~\mid~ X ~\mid~ \mu X\phi\]
  \smallskip

  \noindent where, in node expressions of the form $\mu X\phi$, the variable $X$
  is required to occur only positively (i.e., under an even number of
  negations) in $\phi$. 

  Let $\phi$ be any \unfp-sentence, or, more generally, a
  \unfp-formula with at most one free variable.  We may assume without
  loss of generality that $\phi$ is in \unnormalform. In other words,
  $\phi$ is built up from atomic formulas using (i) negation, (ii)
  fixpoint operators, and (iii) existential positive formulas.  The
  translation to $\mu$Regular XPath is now by induction.  The base
  case of the induction, as well as the induction step for negation
  and for the fixpoint operators, is immediate.  Now, suppose that
  $\phi$ is given by an existential positive formula built from atomic
  relations and subformulas in one free variable for which we already
  know that there exists a translation. Here, we can apply the known
  result from \cite{gott:conj04,Benedikt05:structural} that, on XML
  trees, every positive existential first-order formula in one free
  variable (over the given signature) can be translated to a Regular
  XPath expression (in fact, to a Core XPath expression) in
  exponential time.  Note that when we apply this translation to a
  \unfp formula, we may treat subformulas in one free variable as
  unary predicates).
\end{proof}

\section{Discussion}\label{section-discussion}

\subsection{Logics that are contained in UNFO and UNFP}
\label{sec:logics-contained}
We have seen that unary negation logics  generalize UCQ, \ml, monadic Datalog and
\mucalc. We list here other related formalisms.

\subsubsection{Unary conjunctive view logic}
First-order unary-conjunctive-view logic (UCV) was introduced in
\cite{Bailey10:logical} as a fragment of \fo.  A UCV query is an equality-free
first-order formula over a signature consisting of unary predicates only, but
where each of these unary predicates is in fact a view defined by a unary
conjunctive query. It is easy to see that every 
UCV query can be expressed by a \unfo-formula. 
Indeed, a quantifier elimination argument (cf.~\cite{hodg:mode93})
shows that, over signatures consisting only of monadic relations,
every equality-free first-order formula is equivalent to a \unfo
formula). \unfo can be viewed as a generalization of UCV
where views may be defined in terms of each other (but without cyclic
dependencies between view predicates). 

\subsubsection{The temporal logic \ctlx}
\ctlx is the fragment of the temporal logic $\mathrm{CTL}^*$ in which only the modal operator
$\mathsf{X}$ (``next'') is allowed. More precisely, the syntax of \ctlx can be
defined as follows:
\[ \begin{array}{llll}
  \text{State formulas:} & \phi &::=& p \mid \mathsf{E}\alpha 
    \mid \mathsf{A}\alpha \mid \phi\land\phi \mid \phi\lor\phi \mid  \neg\phi \\
   \text{Path formulas:} & \alpha &::=& \phi \mid \mathsf{X}\alpha
   \mid \alpha\land\alpha \mid \alpha\lor\alpha \mid \neg\alpha
 \end{array}\]
 The semantics is the usual one for $\mathrm{CTL}^*$, and we refer the interested reader
 to~\cite{Clarke99:model}. One can show that there is a polynomial truth-preserving 
 translation from \ctlx formulas to \unfo formulas.
 The model checking problem for \ctlx is known to be complete for the complexity
 class \pnplogsq~\cite{Schnoebelen03:oracle}. This can be used to provide an alternative
 proof of the \pnplogsq-hardness of the model checking problem of \unfo, cf. Theorem~\ref{thm-unfo-model-check}.

\subsubsection{Description logics} 
The basic
description logic $\mathcal{ALC}$
(which is a notational variant of the basic multi-modal logic \textbf{K})
can be viewed as a fragment of \unfo. The same holds for a number of
extensions of $\mathcal{ALC}$. Moreover, the 
\emph{query answering problem} for these description 
logics reduces to the entailment problem for \unfo. Recall that
the query answering problem
is the following problem
(cf.~\cite{DL-handbook} for basic terminology): 
\begin{itemize}\item[]
Given a TBox (i.e., set of concept inclusions) $T$, an ABox (set
  of atomic formulas
  speaking about individuals $c_1\ldots c_n$) $A$, a
  conjunctive query $q(x_1,\ldots,x_k)$, and a tuple of individuals
  $(c_{i_1}, \ldots, c_{i_k})$, is $(c_{i_1}, \ldots, c_{i_k})$ an
  answer to $q$ in every model of $T\cup A$?
\end{itemize} 
It is easy to see that this is equivalent to the validity of the \unfo-entailment

 \[ \phi_T\land \bigwedge \!A[c_1/x_1, \ldots, c_n/x_n] ~~\models~~ q(x_{i_1}, \ldots, x_{i_k}) \]
 where $\phi_T$ is the \unfo-translation of $T$.
 This  (together with Remark~\ref{remark-entailment}) gives a new proof of the known result that 
 query answering for $\mathcal{ALC}$ is decidable and that it has the 
 finite model property (cf.~\cite{Lutz08}).
 Note that the same argument works not only for $\mathcal{ALC}$ but for any description logic 
 whose TBoxes can be expressed in \unfp. Moreover, the argument works not only
 for conjunctive queries, but for any class of queries expressible in \unfp.

\subsection{Comparison with guarded logics}
We have already seen in Example~\ref{example-prelim} that \unfo and \gfo are
incomparable in terms of expressive power: It is easy to show that the \gfo formula
$\forall xy (R(x,y)\to S(x,y))$ is not invariant under UN-bisimulations and
therefore not expressible in \unfo. On the other hand a simple argument shows
that the \unfo formula $\exists yzu (R(x,y)\land R(y,z)\land
  R(z,u)\land R(u,x))$ is not invariant under guarded-bisimulations and
  therefore not expressible in \gfo.

  The decidability and expressibility results obtained in this paper for \unfo
  and \unfp have many similarities with those of their modal logic
  counterparts. Actually several proofs are reductions to the modal
  counterparts. This is in contrast with guarded logics that often require
  new and difficult arguments. A typical example is the finite model property of \gfo whose proof
  is based on the Herwig Extension Theorem (cf.~\cite{Gradel99}). In
  contrast, we prove the finite model property for \unfo by reduction
  to the analogous result for \ml (which has a very simple proof using
  filtration, cf.~\cite{BlackburndRV02}).

  It is possible to reconcile the unary negation approach and the guarded
  approach into one logical formalism called \emph{guarded negation logic},
  where it is possible to negate a formula if all its free variables are
  guarded: $R(\tuple{x})\land \neg\phi(\tuple{x})$. The first-order and
  fixpoint formalisms obtained this way generalize both the unary negation and
  guarded approaches and enjoy all the nice properties of \unfo and
  \unfp~\cite{BCS11}.

\subsection{Undecidable extensions}
Our results show that \unfo and \unfp are well behaved logics. One may ask if
there are extensions that are still well behaved. 
Inequalities are a minimal form of negation not supported by \unfo.
Unfortunately, extending \unfo with inequalities leads to
undecidability. 
Let us denote by $\unfo^{\neq}$ the extension of \unfo with
inequalities, and by $\unfo^\neg$ the extension of \unfo 
with negative relational atomic formulas.
Recall that a fragment of first-order logic is called a
\emph{conservative reduction class} if there is a computable map from 
arbitrary first-order formulas to formulas in the fragment, which
preserves (un)satisfiability as well as finite (un)satisfiability.

\begin{theorem}\label{thm-extension-unfo}
  $\unfo^{\neq}$ and $\unfo^\neg$ are conservative reduction classes,
  and hence undecidable for satisfiability on arbitrary structures
  and on finite structures.
\end{theorem}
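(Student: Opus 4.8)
The plan is to reduce the (undecidable) tiling, or domino, problem into each of $\unfo^{\neq}$ and $\unfo^{\neg}$ in a way that controls satisfiability and finite satisfiability simultaneously, and then to compose with the classical fact that \fo itself admits a recursive reduction to tiling that preserves both: there is a computable map sending each \fo sentence $\varphi$ to a tiling instance $T_\varphi$ (a finite set of tile types with horizontal/vertical compatibility relations $\mathsf{Hor},\mathsf{Ver}$) such that $\varphi$ is satisfiable iff $T_\varphi$ tiles $\mathbb{N}\times\mathbb{N}$, and $\varphi$ is finitely satisfiable iff $T_\varphi$ tiles some finite torus $\mathbb{Z}_m\times\mathbb{Z}_n$. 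Composing with the encodings below then yields conservative reductions from \fo into both fragments, which is exactly what is required.

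Given a tile set $T$, I would use two binary relations $H,V$ (horizontal and vertical successor) and a unary predicate $P_t$ for each tile $t$, and let $\psi_T$ be the conjunction of: (i) nonemptiness together with totality of $H$ and $V$, all already expressible in \unfo, e.g. $\neg\exists x\,\neg\exists y\,H(x,y)$; (ii) ``each element carries exactly one tile'', $\neg\exists x\bigwedge_t\neg P_t(x)$ and $\neg\exists x(P_t(x)\wedge P_{t'}(x))$ for $t\neq t'$, again in \unfo; (iii) the adjacency constraints $\neg\exists xy\,(H(x,y)\wedge P_t(x)\wedge P_{t'}(y))$ for $(t,t')\notin\mathsf{Hor}$, and symmetrically for $V$ and $\mathsf{Ver}$, still in \unfo; and (iv) the crucial \emph{alignment} axiom
\[
\neg\exists x\,y\,x'\,y'\ \bigl(H(x,y)\wedge V(x,x')\wedge V(y,y')\wedge\neg H(x',y')\bigr),
\]
which is the only conjunct leaving \unfo; it belongs to $\unfo^{\neg}$ because its body is a conjunction of literals with all variables bound, so its negation is applied to a sentence. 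The reason for using (iv) instead of the usual confluence axiom $\forall xyz\,(H(x,y)\wedge V(x,z)\to\exists w(V(y,w)\wedge H(z,w)))$ is precisely that the latter requires negating a two-free-variable subformula, which no unary-negation logic can do, whereas (iv) is purely universal and still forces a grid to propagate: fixing canonical $H$- and $V$-successor functions $h,v$ and setting $a_{i,j}=v^j(h^i(a_0))$, an induction on $j$ using (iv) and totality of $V$ gives $H(a_{i,j},a_{i+1,j})$ and $V(a_{i,j},a_{i,j+1})$ for all $i,j$, so the tiles of the $a_{i,j}$ form a tiling of $\mathbb{N}\times\mathbb{N}$; conversely a tiling of the plane (resp. of a torus) is directly a (resp. finite) model of $\psi_T$. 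For the finite case one observes that $(i,j)\mapsto a_{i,j}$ factors through the finitely many ``columns'' $j\mapsv^j(c)$ with $c\in\mathrm{dom}(M)$, hence is ultimately periodic in both directions, and a suitable shift produces a doubly periodic plane tiling, i.e. a torus tiling; so $\psi_T$ is finitely satisfiable iff $T$ tiles a torus.

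For $\unfo^{\neq}$ the negated atom $\neg H(x',y')$ in (iv) is unavailable, so instead I would additionally assert that $H$ and $V$ are functional, $\neg\exists x\,y\,y'\,(H(x,y)\wedge H(x,y')\wedge y\neq y')$ --- a negated conjunctive query with an inequality, hence in $\unfo^{\neq}$ --- and use that, given totality and functionality, $\neg H(x',y')$ is equivalent to $\exists y''(H(x',y'')\wedge y''\neq y')$; this rewrites (iv) as $\neg\exists x\,y\,x'\,y'\,y''\,(H(x,y)\wedge V(x,x')\wedge V(y,y')\wedge H(x',y'')\wedge y''\neq y')$, again a $\unfo^{\neq}$ sentence, and the correctness analysis is the same, indeed cleaner since the grid is genuinely functional. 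In both cases $\varphi\mapsto\psi_{T_\varphi}$ is then a recursive map from \fo into the fragment preserving satisfiability and finite satisfiability, so each fragment is a conservative reduction class, and in particular both satisfiability problems, on arbitrary and on finite structures, are undecidable.

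The main obstacle is exactly the one just highlighted: the textbook grid encoding through a confluence axiom is ruled out by the unary-negation restriction, so the content is (a) replacing confluence by the purely universal alignment axiom and checking it still pins down a grid, and (b) verifying that finite models of $\psi_T$ collapse to torus tilings via the canonical-successor and pigeonhole argument, which is what upgrades the reduction from merely satisfiability-preserving to \emph{conservative}. The remaining ingredient, the recursive \fo-to-tiling reduction preserving satisfiability and finite satisfiability, is standard and can be cited.
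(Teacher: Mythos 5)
Your proof is correct, but it takes a genuinely different route from the paper's. The paper reduces from $\fo^1$ with two unary functions and equality, which is already known to be a conservative reduction class: for $\unfo^{\neq}$ functionality of the binary relations encoding the functions is expressed by a negated conjunctive query containing an inequality, and for $\unfo^\neg$ equality is replaced by an auxiliary relation $E$ axiomatized, via negated atoms, to behave as a congruence. That proof is a few lines but leans entirely on a specific known reduction class. You instead run the domino method: the alignment axiom $\neg\exists x y x' y'(H(x,y)\wedge V(x,x')\wedge V(y,y')\wedge\neg H(x',y'))$ replaces confluence, totality of $V$ restores it, the induction giving $H(a_{i,j+1},a_{i+1,j+1})$ is exactly right (and collisions among the $a_{i,j}$ are harmless, since only edge-wise tile compatibility is used), ultimate periodicity of the canonical successors yields the torus in the finite case, and the $\neq$-rewriting of $\neg H(x',y')$ under totality and functionality is sound. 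Your route is more self-contained combinatorially and makes explicit where undecidability enters; its cost is the appeal to the domino-method lemma, which you should cite in its standard form (a recursive map from domino systems such that periodic tilability implies finite satisfiability and satisfiability implies plane tilability already yields a conservative reduction class, via the Gurevich--Koryakov inseparability theorem) rather than as an ``iff''-preserving reduction from \fo to tilings, which is a stronger, though also true, formulation. Both arguments ultimately exploit the same observation: the required axioms are negations of sentences whose matrices are conjunctions of literals, and each extension supplies exactly the one missing negative literal.
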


\begin{proof}
  It is known \cite{Borger97:classical} that $\fo^1$ with two unary
  functions and equality is a conservative reduction class. We can
  translate this logic into $\unfo^{\neq}$ with two binary relations (we can
  use inequalities to express that the two relations are graphs of
  functions, as in $\neg\exists xyz(Rxy\land Rxz\land y\neq z)$,
  and atomic formulas such as $f(x)=g(x)$ are expressed by
  $\exists yz(F(x,y)\land G(x,z)\land y=z)$). This shows that
  $\unfo^{\neq}$ is a conservative reduction class. 

  For $\unfo^\neg$, we use a similar argument. Let $E, F, G$ be binary
  relations. Using negative atomic formulas, it is possible to express
  that $E$ is an equivalence relation, as in $\neg\exists xyz(Exy\land
  Eyz\land \neg Exz)\land \neg\exists xy(Exy\land\neg Eyx)\land
  \neg\exists x\neg Exx$, and that $F, G$ are graphs of functions
  defined on equivalence classes of $E$, as in 
  $\neg\exists x\neg\exists y(Fxy)\land \neg\exists
  xx'yy'(Exx'\land Fxy\land
  Fx'y'\land \neg Eyy')$, and similarly for $G$. We then use the same
  reduction as in the case of $\unfo^{\neq}$, except that $E$
  takes the role of the equality relation.
\end{proof}

Also, in the fixed point case, one can wonder whether the restriction to
\emph{monadic} least fixed-points was necessary. Indeed, this question
naturally arises since it is known that the \emph{guarded fragment} of
first-order logic is decidable even when extended with (guarded) fixed point
operators of arbitrary arity. However, adding non-monadic fixpoint
operators to our setting make the logic undecidable.
\begin{theorem}\label{thm-extension-unfp}
  The extension of \unfp with non-monadic fixed point operators is
  undecidable for satisfiability on arbitrary structures and
  on finite structures.
\end{theorem}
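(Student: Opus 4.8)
The plan is to prove undecidability by a reduction from the (undecidable) domino tiling problem, using a non‑monadic fixpoint — binary least fixpoints already suffice — to pin down the grid structure that plain \unfo cannot describe. For the case of arbitrary structures I would reduce from the $\mathbb{N}\times\mathbb{N}$ tiling problem, which is undecidable (indeed $\Pi^0_1$‑complete, so its complement is not even r.e.); for finite structures I would instead reduce from the halting problem, encoding a halting computation on a finite rectangular grid. In both cases the intended models are grid‑like: there are two binary "successor" relations $H$ (horizontal) and $V$ (vertical), a unary predicate $O$ marking the origin, and unary predicates $T_t$ for the tiles (resp. for the symbol/head/state triples). The routine part of the sentence — every node carries exactly one tile, horizontal and vertical neighbours carry compatible tiles, every node has an $H$‑ and a $V$‑successor (in the finite case, or is an appropriately marked edge node), the origin carries the designated tile, and, in the finite case, the first row is the initial configuration while the last row contains a halting state — is expressible in \unfo exactly as in the proof of Proposition~\ref{prop-sat-lower-bound}, using only shallow unary negation (note that all universal requirements can be written as ``$\neg\exists x\,\psi(x)$'' for a positive‑existential $\psi$ with one free variable).

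The point where \unfo is genuinely insufficient, and where the binary fixpoint is essential, is forcing that the structure really is (a UN‑bisimilar cover of) the grid rather than an arbitrary locally‑grid‑like tree: this would require stating that $H$ and $V$ are functional and commute, and both amount to negating a binary atom (an inequality), which \unfo forbids. Instead I would define, by a binary least fixpoint, the ``same grid position'' relation $S(x,y)$ as the least binary relation containing the pair $(o,o)$ for the origin $o$ and closed under (i) if $S(u,v)$, $H(u,a)$ and $H(v,b)$ then $S(a,b)$, and likewise with $V$ in place of $H$; (ii) if $S(u,v)$, $H(u,u')$, $V(u',a)$, $V(v,v')$ and $H(v',b)$ then $S(a,b)$; and (iii) symmetry and transitivity — all positive occurrences of the fixpoint variable. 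One then adds the single sentence $\neg\exists x\,\exists y\exists z\big(S(x,y)\wedge S(x,z)\wedge\bigvee_{t\neq t'}(T_t(y)\wedge T_{t'}(z))\big)$, which is a forbidden unary pattern and hence legal, saying that $S$‑equivalent nodes carry the same tile. Because \unfp is invariant under UN‑bisimulation (Proposition~\ref{prop-unbisim}) it is harmless that $S$ only identifies nodes up to bisimulation and that $H,V$ need not be functional: all nodes reached from $o$ along a given word in $\{H,V\}^*$ end up $S$‑related (via closure rules (i) and (iii)), and all words with the same letter counts end up $S$‑related (via (ii) together with (i) and (iii)).

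It then remains to check the two directions. From a valid tiling one builds the obvious model on $\mathbb{N}\times\mathbb{N}$ (where $S$ computes exactly ``same cell'', so the consistency sentence holds); conversely, from any model one defines $g(i,j)$ to be the tile of a node reached from $o$ by $H^iV^j$, which is well defined (all such nodes are $S$‑related, hence equally tiled) and is a valid tiling (horizontal/vertical compatibility transfers from the local constraints, since the $H$‑ and $V$‑children of the $(i,j)$‑node are $(i{+}1,j)$‑ and $(i,j{+}1)$‑nodes). The finite case works the same way, with the grid a finite rectangle encoding a computation history and $V$ the ``cell of the next configuration'' relation: if the machine halts, the history is a finite model; if it does not halt, then the $S$‑consistency sentence together with the local transition constraints force the structure's rows to be, positionwise, the successive (deterministic) configurations starting from the forced initial row, so no row can carry a halting state, contradicting the requirement — hence $\phi$ has no model, finite or infinite. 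The \textbf{main obstacle} I expect is precisely making the relation $S$ do its job rigorously: that its least fixpoint really does become ``total enough'' on the part reachable from the origin to align corresponding cells of successive rows, and that no ``cheating'' model (finite or infinite) escapes the consistency sentence in the halting reduction; by contrast, the remaining bookkeeping (tile/transition encodings, edge conditions, initial/final rows) is tedious but standard, exactly as in Proposition~\ref{prop-sat-lower-bound} and Theorem~\ref{thm-extension-unfo}.
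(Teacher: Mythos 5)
Your route is genuinely different from the paper's: the paper disposes of this theorem in a few lines by reducing from the known undecidability of Datalog containment \cite{Shmueli93:equivalence}, writing the containment $\Pi_1\subseteq\Pi_2$ as unsatisfiability of $\exists\tuple{x}(\phi_{\Pi_1}\land\bigwedge_iP_i(x_i))\land\neg\exists\tuple{x}(\phi_{\Pi_2}\land\bigwedge_iP_i(x_i))$, which handles finite and arbitrary structures uniformly and needs only Boolean negation. Your direct tiling encoding is much heavier, but the idea of using a binary least fixpoint $S$ for ``same grid position'' is sound, and the arbitrary-structures half essentially works: $S$ is a positive binary fixpoint, the consistency requirement is a negated sentence and hence legal, rules (i)--(iii) do force all nodes reached from the origin by words with equal letter counts to be $S$-related, and extracting a tiling from an arbitrary model goes through because every constraint there is either a forbidden pattern or a successor-totality requirement.

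The finite-structures half, however, has a real gap, and it sits exactly where you flagged your ``main obstacle.'' Nothing in your formula ties the required halting-state witness to the part of the model generated from the origin. Concretely: take a machine that diverges by entering a spatially and temporally periodic loop. Its computation history is eventually periodic, so there is a finite ``lasso'' model (rows $0,\dots,k$ followed by a vertical cycle of period equal to the loop length, wrapped horizontally beyond the used tape) in which every node has $H$- and $V$-successors, all local transition constraints hold, and every $S$-class is tile-consistent. If your halting requirement is ``every $V$-sink carries a halting state,'' it is vacuous here (there are no $V$-sinks); if it is ``$\exists x\,\mathrm{Halt}(x)$,'' you can graft on a small disconnected gadget carrying a halting tile --- it is never reached from the origin, so $S$ never constrains it and no forbidden pattern fires. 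Either way the formula is finitely satisfiable for a non-halting machine, and the reduction fails. The fix is easy and stays within your framework: demand that the halting witness lie in the grid, e.g.\ via $\exists x(S(x,x)\land\mathrm{Halt}(x))$ (note $S(x,x)$ holds exactly for nodes reachable from the origin, by rule (i) from $S(o,o)$), or via a separate monadic reachability fixpoint from the origin; then a halting tile at some position $(i,j)$ forces configuration $i$ to be halting, and the argument closes. As written, though, the finite case is not established.
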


\begin{proof}
  It was shown in \cite{Shmueli93:equivalence} that the containment problem
  for Datalog is undecidable on finite structures and on arbitrary structures
  (the result is only stated in~\cite{Shmueli93:equivalence} for finite
  structures, but the proof applies to any class of structures that contains
  all encodings of finite strings, under some encoding).

  We reduce this problem to the problem at hand.
  A containment between two Datalog queries $\Pi_1,
  \Pi_2$ holds precisely if
  $\exists\tuple{x}(\phi_{\Pi_1}(\tuple{x})\land\bigwedge_i P_i(x_i))
  \land \neg\exists
  \tuple{x}(\phi_{\Pi_2}(\tuple{x})\land\bigwedge_i P_i(x_i))$ is unsatisfiable,
  where $\phi_{\Pi_i}$ is the Datalog query $\Pi_i$ written as a
  formula of \fp. Note that $\phi_{\Pi_i}$ does not contain negation,
  and therefore belongs to the extension of \unfp with non-monadic
  fixed point operators. Incidentally, note that the overall
  reduction uses only Boolean negation (i.e., negation applied to sentences).
\end{proof}

\paragraph{\bf Acknowledgment} We are grateful to Philippe Schnoebelen for useful discussions
  and pointers to relevant literature. We also thank the reviewers for
  their helpful and extensive comments.

\bibliographystyle{plain}
\bibliography{unneg}

\begin{thebibliography}{10}

\bibitem{andr:moda98}
H.~Andr{\'e}ka, {J. van} Benthem, and I.~N{\'e}meti.
\newblock Modal languages and bounded fragments of predicate logic.
\newblock {\em Journal of Philosophical Logic}, 27:217--274, 1998.

\bibitem{DL-handbook}
F.~Baader, D.~Calvanese, D.L. McGuinness, D.~Nardi, and P.F. Patel-Schneider,
  editors.
\newblock {\em The Description Logic Handbook: Theory, Implementation, and
  Applications}. Cambridge University Press, 2003.

\bibitem{Bailey10:logical}
J.~Bailey, G.~Dong, and A.W. To.
\newblock Logical queries over views: Decidability and expressiveness.
\newblock {\em Transactions on Computational Logic}, 11(2):8, 2010.

\bibitem{Mikovince}
V.~B{\'a}r{\'a}ny and M.~Boja{\'n}czyk.
\newblock Finite satisfiability for guarded fixpoint logic.
\newblock {\em Information Processing Letters}, 112(10):371--375, 2012.

\bibitem{BCS11}
V.~B{\'a}r{\'a}ny, B.~ten Cate, and L.~Segoufin.
\newblock Guarded negation.
\newblock In {\em Intl. Coll. on Automata, Languages and Programming (ICALP)},
  pages 356--367, 2011.

\bibitem{BenediktBS12}
M.~Benedikt, P.~Bourhis, and P.~Senellart.
\newblock {Monadic Datalog Containment}.
\newblock In {\em Intl. Coll. on Automata, Languages, and Programming (ICALP)},
  pages 79--91, 2012.

\bibitem{Benedikt05:structural}
M.~Benedikt, W.~Fan, and G.M. Kuper.
\newblock Structural properties of {XPath} fragments.
\newblock {\em Theoretical Computer Science}, 336(1):3--31, 2005.

\bibitem{bent:moda83}
{J. van} Benthem.
\newblock {\em Modal Logic and Classical Logic}.
\newblock Bibliopolis, 1983.

\bibitem{BerwangerG01}
D.~Berwanger and E.~Gr{\"a}del.
\newblock Games and model checking for guarded logics.
\newblock In {\em Intl. Conf. on Logic for Programming, Artificial
  Intelligence, and Reasoning (LPAR)}, pages 70--84, 2001.

\bibitem{BlackburndRV02}
P.~Blackburn, M.~de~Rijke, and Y.~Venema.
\newblock {\em Modal Logic}.
\newblock Cambridge Tracts in Theoretical Computer Science. Cambridge
  University Press, 2002.

\bibitem{Mikolaj}
M.~Boja{\'n}czyk.
\newblock Two-way alternating automata and finite models.
\newblock In {\em Intl. Coll. on Automata, Languages and Programming (ICALP)},
  pages 833--844. Springer, 2002.

\bibitem{Borger97:classical}
E.~B{\"o}rger, E.~Gr{\"a}del, and Y.~Gurevich.
\newblock {\em The Classical Decision Problem}.
\newblock Springer, Berlin, 1997.

\bibitem{Buss91:truth-table}
S.R. Buss and H.~Louise.
\newblock On truth-table reducibility to {SAT}.
\newblock {\em Information and Computation}, 91(1):86--102, 1991.

\bibitem{CastroSeara1996}
J.~Castro and C.~Seara.
\newblock Complexity classes between $\theta^p_k$ and $\delta^p_k$.
\newblock {\em Informatique Th\'eorique et Applications}, 30(2):101--121, 1996.

\bibitem{Balder-regular-xpath}
{B.~ten} Cate.
\newblock The expressivity of {XPath} with transitive closure.
\newblock In {\em Symp. on Principles of Database Systems (PODS)}, pages
  328--337, 2006.

\bibitem{tencate07:navigational}
{B. ten} Cate and M.~Marx.
\newblock Navigational {XPath}: calculus and algebra.
\newblock {\em {SIGMOD} Record}, 36(2):19--26, 2007.

\bibitem{CS11}
{B. ten} Cate and L.~Segoufin.
\newblock Unary negation.
\newblock In {\em Intl. Symp. on Theoretical Aspects of Computer Science
  (STACS)}, pages 344--355, 2011.

\bibitem{ChangKeisler}
C.C. Chang and H.J. Keisler.
\newblock {\em Model Theory}.
\newblock Studies in Logic and the Foundations of Mathematics. Elsevier
  Science, 1990.

\bibitem{Clarke99:model}
E.M. Clarke, O.~Grumberg, and D.A. Peled.
\newblock {\em Model checking}.
\newblock MIT Press, 1999.

\bibitem{Cosmadakis88:decidable}
S.~Cosmadakis, H.~Gaifman, P.~Kanellakis, and M.Y. Vardi.
\newblock Decidable optimization problems for database logic programs.
\newblock In {\em Symp. on Theory of Computing (STOC)}, pages 477--490, 1988.

\bibitem{FisherLadner}
M.~Fisher and R.~Ladner.
\newblock Propositional dynamic logic of regular programs.
\newblock {\em Journal of Computer and System Sciences}, 18:194--211, 1979.

\bibitem{Gottlob95:np}
G.~Gottlob.
\newblock {NP} trees and {Carnap}'s modal logic.
\newblock {\em Journal of the {ACM}}, 42(2):421--457, 1995.

\bibitem{GottlobKoch:monadic}
G.~Gottlob and C.~Koch.
\newblock Monadic datalog and the expressive power of languages for web
  information extraction.
\newblock {\em Journal of the ACM}, 51(1):74--113, 2004.

\bibitem{GottlobKP05}
G.~Gottlob, C.~Koch, and R.~Pichler.
\newblock Efficient algorithms for processing {XPath} queries.
\newblock {\em ACM Transaction on Database Systems}, 30(2):444--491, 2005.

\bibitem{gott:conj04}
G.~Gottlob, C.~Koch, and K.~Schulz.
\newblock Conjunctive queries over trees.
\newblock In {\em Symp. Principles of Database Systems (PODS)}, pages 189--200,
  2004.

\bibitem{Gradel99}
E.~Gr{\"a}del.
\newblock On the restraining power of guards.
\newblock {\em J. on Symbolic Logic}, 64(4):1719--1742, 1999.

\bibitem{Gradel01}
E.~Gr{\"a}del.
\newblock Why are modal logics so robustly decidable?
\newblock In {\em Current Trends in Theoretical Computer Science}, pages
  393--408. World Scientific, 2001.

\bibitem{GradelW99}
E.~Gr{\"a}del and I.~Walukiewicz.
\newblock Guarded fixed point logic.
\newblock In {\em Symp. on Logic in Computer Science (LICS)}, pages 45--54,
  1999.

\bibitem{hodg:mode93}
W.~Hodges.
\newblock {\em Model Theory}.
\newblock Cambridge University Press, 1993.

\bibitem{HooglandM02}
E.~Hoogland and M.~Marx.
\newblock Interpolation and definability in guarded fragments.
\newblock {\em Studia Logica}, 70(3):373--409, 2002.

\bibitem{jani:expr96}
D.~Janin and I.~Walukiewicz.
\newblock On the expressive completeness of the propositional $\mu$-calculus
  w.r.t.\ monadic second-order logic.
\newblock In {\em Intl. Conf. on Concurrency Theory (CONCUR)}, 1996.

\bibitem{Kozen1983}
D.~Kozen.
\newblock Results on the propositional $\mu$-calculus.
\newblock {\em Theoretical Computer Science}, 27(3):333--354, 1983.

\bibitem{libk:elem04}
L.~Libkin.
\newblock {\em Elements of Finite Model Theory}.
\newblock Springer, 2004.

\bibitem{Lutz08}
C.~Lutz.
\newblock The complexity of conjunctive query answering in expressive
  description logics.
\newblock In {\em Intl. Joint Conf. on Automated Reasoning (IJCAR)}, pages
  179--193, 2008.

\bibitem{Marx05:semantic}
M.~Marx and M.~de~Rijke.
\newblock Semantic characterizations of navigational {XP}ath.
\newblock {\em SIGMOD Record}, 34(2):41--46, 2005.

\bibitem{MarxVenema}
M.~Marx and Y.~Venema.
\newblock {\em Multidimensional Modal Logic}.
\newblock Kluwer, 1997.

\bibitem{Niwinski1986}
D.~Niwinski.
\newblock On fixed-point clones.
\newblock In {\em Intl. Coll. on Automata, Languages and Programming (ICALP)},
  pages 464--473, 1986.

\bibitem{Otto2004}
M.~Otto.
\newblock Modal and guarded characterisation theorems over finite transition
  systems.
\newblock {\em Annals of Pure and Applied Logic}, 130(1–3):173 -- 205, 2004.

\bibitem{Otto12}
M.~Otto.
\newblock Highly acyclic groups, hypergraph covers, and the guarded fragment.
\newblock {\em J. ACM}, 59(1):5, 2012.

\bibitem{Place09:locally}
T.~Place and L.~Segoufin.
\newblock A decidable characterization of locally testable tree languages.
\newblock In {\em Intl. Coll. on Automata, Languages and Programming (ICALP)},
  pages 285--296, 2009.

\bibitem{rose:moda97}
E.~Rosen.
\newblock Modal logic over finite structures.
\newblock {\em Journal of Logic, Language, and Computation}, 6(4):427--439,
  1997.

\bibitem{Schnoebelen03:oracle}
P.~Schnoebelen.
\newblock Oracle circuits for branching-time model checking.
\newblock In {\em Intl. Coll. on Automata, Languages and Programming (ICALP)},
  pages 187--187, 2003.

\bibitem{Shmueli93:equivalence}
O.~Shmueli.
\newblock Equivalence of datalog queries is undecidable.
\newblock {\em Journal of Logic Programming}, 15(3):231--241, 1993.

\bibitem{Streett82}
R.S. Streett.
\newblock Propositional dynamic logic of looping and converse.
\newblock {\em Information and Control}, 54:121--141, 1982.

\bibitem{Vardi95:bounded}
M.Y. Vardi.
\newblock On the complexity of bounded-variable queries.
\newblock In {\em Symp. on Principles of Database Systems (PODS)}, pages
  266--276, 1995.

\bibitem{Vardi96}
M.Y. Vardi.
\newblock Why is modal logic so robustly decidable?
\newblock In {\em Descriptive Complexity and Finite Models}, pages 149--184.
  American Mathematical Society, 1996.

\bibitem{vard:reas98}
M.Y. Vardi.
\newblock Reasoning about the past with two-way automata.
\newblock In {\em Intl. Coll. on Automata, Languages and Programming (ICALP)},
  pages 628--641, 1998.

\bibitem{Wagner:BH}
K.W. Wagner.
\newblock {More Complicated Questions about Maxima and Minima, and some
  Closures of NP}.
\newblock {\em Theoretical Computer Science}, 51(1-2):53 -- 80, 1987.

\bibitem{WagnerWechsung}
K.W. Wagner and G.~Wechsung.
\newblock {\em Computational Complexity}.
\newblock Verlag der Wissenschaften, Berlin, 1986.

\end{thebibliography}

\end{document}